\newcolumntype{P}[1]{>{\centering\arraybackslash}p{#1}}
\newtheorem{theorem}{Theorem}[section]
\newtheorem{lemma}{Lemma}[section]
\newtheorem{corollary}{Corollary}[section]
\def \F {{\mathbb F}}
\def \C {{\mathbb C}}
\def \R {{\mathbb R}}
\def \N {{\mathbb N}}
\newcommand{\bluetext}[1]{\textcolor{blue}{#1}}
\newcommand{\unevenmatrix}[5]{\begin{bmatrix} #1 & \begin{matrix} & \\ & \end{matrix} \\  \begin{matrix} \quad & \quad \\ & \end{matrix} & \begin{bmatrix} #2 & #3 \\ #4 & #5 \end{bmatrix} \end{bmatrix}}
\newcommand{\Mod}[1]{\ \mathrm{mod}\ #1}
\begin{document}

%Hadamard-Uprooted FFT

%Faster Hadamard transform from non-rigidity

%Faster Hadamard and Fast Fourier Transforms using Non-Rigidity.

\title{Faster Walsh-Hadamard and Discrete Fourier Transforms\\From Matrix Non-Rigidity}

\author{Josh Alman\footnote{Columbia University, josh@cs.columbia.edu.} \and Kevin Rao\footnote{Columbia University, kevinrao99@gmail.com.}}

\maketitle
 
\begin{abstract}
    We give algorithms with lower arithmetic operation counts for both the Walsh-Hadamard Transform (WHT) and the Discrete Fourier Transform (DFT) on inputs of power-of-2 size $N$.
    
    For the WHT, our new algorithm has an operation count of $\frac{23}{24}N \log N + O(N)$. To our knowledge, this gives the first improvement on the $N \log N$ operation count of the simple, folklore Fast Walsh-Hadamard Transform algorithm.

    For the DFT, our new FFT algorithm uses $\frac{15}{4}N \log N + O(N)$ real arithmetic operations. Our leading constant $\frac{15}{4} = 3.75$ improves on the leading constant of $5$ from the Cooley-Tukey algorithm from 1965, leading constant $4$ from the split-radix algorithm of Yavne from 1968, leading constant $\frac{34}{9}=3.777\ldots$ from a modification of the split-radix algorithm by Van Buskirk from 2004, and leading constant $3.76875$ from a theoretically optimized version of Van Buskirk's algorithm by Sergeev from 2017.
    
    Our new WHT algorithm takes advantage of a recent line of work on the non-rigidity of the WHT: we decompose the WHT matrix as the sum of a low-rank matrix and a sparse matrix, and then analyze the structures of these matrices to achieve a lower operation count. Our new DFT algorithm comes from a novel reduction, showing that parts of the previous best FFT algorithms can be replaced by calls to an algorithm for the WHT. Replacing the folklore WHT algorithm with our new improved algorithm leads to our improved FFT.
\end{abstract}

\thispagestyle{empty}
\newpage
\setcounter{page}{1}

\section{Introduction}

Two of the most important and widely-used linear transforms are the Discrete Fourier Transform (DFT) and the Walsh-Hadamard Transform (WHT). In addition to their breadth of applications, these transforms can be computed quickly: They can be applied to a vector of length $N$ using only $O(N \log N)$ arithmetic operations, for instance, using the split-radix Fast Fourier Transform (FFT) algorithm~\cite{yavne1968economical} and the folklore fast Walsh–Hadamard transform, which make use of the recursive definitions of these transforms. 

Determining how much these algorithms can be sped up is an important question in both practice and theory. This is typically phrased as determining the smallest `leading constant' $c$ such that they can be computing using $(c + o(1)) N \log N$ arithmetic operations (where $\log$ denotes the base $2$ logarithm). In practice, even modest improvements to $c$ can be impactful; the current best algorithm for the DFT, which improves the leading constant from the split-radix algorithm by slightly over 5\%, is implemented today in software libraries that have been widely deployed~\cite{frigo1998fftw}\footnote{To see it in action, see lines 162 - 239 in \texttt{fftw-3.3.10/genfft/fft.ml} of the \texttt{fftw-3.3.10} package~\cite{frigo1998fftw}.}. In theory, it is popularly conjectured that an operation count of $\Omega(N \log N)$ is necessary to compute these transforms, i.e., that one cannot achieve arbitrarily small values of $c$ for either of these transforms, but this conjecture is still open. One piece of evidence for this conjecture is that there haven't been many improvements to these constants: prior to this work there have been only three FFT improvements since the original Cooley-Tukey algorithm~\cite{cooley1965algorithm} over 50 years ago, and there has never been an improvement over the folklore algorithm for the WHT. 

In this paper, we give improved algorithms, leading to smaller leading constants, for both the WHT and the DFT on inputs of power-of-two size. Our new algorithm for the WHT makes use of a recent line of work on the \emph{matrix rigidity} of the WHT, and our new algorithm for the DFT uses a novel \emph{reduction} to the WHT. Our approach for the DFT is quite different from prior Fast Fourier Transform (FFT) improvements: We focus on decreasing the number of additions and subtractions used by the  algorithms, whereas prior improvements focused on the `twiddle factor' multiplications.

\subsection{WHT Result}

Let $N$ be a power of $2$, and let $\F$ be any field. The $N \times N$ WHT matrix, $H_N$, is defined recursively by $$H_2 = \begin{bmatrix} 1 & 1 \\ 1 & -1 \end{bmatrix}, \text{ and } H_N = \begin{bmatrix} H_{N/2} & H_{N/2} \\ H_{N/2} & -H_{N/2} \end{bmatrix}$$ for $N \geq 4$. The goal of the WHT is, given as input a vector $x \in \F^{N}$, to compute the vector $y = H_N x$. We focus on $\F$ whose characteristic is not $2$, since the problem is trivial in such fields.

The folklore fast Walsh-Hadamard transform algorithm follows directly from this recursive definition: Letting $T_H(N)$ denote the number of arithmetic operations used in computing $H_N x$, we get the base case $T_H(2) = 2$, and the recurrence $T_H(N) = 2 T_H(N/2) + N$, which yield $T_H(N) = N \log N$.

In particular, it gives a leading constant of $1$, and to our knowledge, no algorithm using fewer arithmetic operations was previously known. Our first main result improves this constant from $1$ to $23/24 < 0.96$.

\begin{theorem} \label{thm:Hintro}
    For any field $\F$, given $x \in \F^N$ for $N$ a power of $2$, we can compute $H_N x$ using at most $\frac{23}{24} N \log N + \frac{13}{12}N$ field operations.
\end{theorem}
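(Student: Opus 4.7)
The plan is to beat the standard recurrence $T_H(N)=2T_H(N/2)+N$ by accelerating the ``outer'' step after unrolling the recursion $k$ levels for a carefully chosen small constant $k$. Using the Kronecker factorization $H_N=H_{2^k}\otimes H_{N/2^k}$, I would compute $H_N x$ by first recursively applying $I_{2^k}\otimes H_{N/2^k}$ (which amounts to $2^k$ independent recursive calls on length-$N/2^k$ vectors) and then applying $H_{2^k}\otimes I_{N/2^k}$; the latter is equivalent to the batched matrix product $H_{2^k}V$ for $V\in\F^{2^k\times(N/2^k)}$. Letting $B_k(m)$ denote the operation count for this batched product on $m$ vectors, the resulting recursion is
\[
T_H(N)=2^k\, T_H(N/2^k)+B_k(N/2^k).
\]
A standard induction with the ansatz $T_H(N)=\alpha N\log N+\beta N$ then shows $\alpha=B_k(m)/(mk2^k)$ at leading order, so the folklore $B_k(m)=mk2^k$ gives $\alpha=1$, and hitting $\alpha=\tfrac{23}{24}$ reduces to exhibiting a batched algorithm with $B_k(m)\le\tfrac{23}{24}\,mk2^k+O(2^k)$ for some $k$.

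The heart of the proof is building such a batched algorithm from a non-rigidity decomposition of $H_{2^k}$. I would decompose
\[
H_{2^k}=L+S
\]
where $L$ has low rank $r$, $S$ is sparse, and crucially where $L$ admits a rank factorization $L=AB$ with $A\in\F^{2^k\times r}$, $B\in\F^{r\times 2^k}$, and all entries of $A,B,S$ lying in $\{-1,0,1\}$ (so every operation is a genuine addition or subtraction rather than a multiplication by a nontrivial scalar). Then the batched product $H_{2^k}V=A(BV)+SV$ is computed by performing $BV$ (each of the $r$ rows is a $\pm1$-combination of rows of $V$, giving at most $r(2^k-1)m$ ops), then $A(BV)$ (each of the $2^k$ output rows is a $\pm1$-combination of the $r$ rows of $BV$, giving at most $2^k(r-1)m$ ops), then $SV$ (at most $\mathrm{nnz}(S)\cdot m$ ops), and finally combining the two contributions (an extra $O(2^km)$ ops, foldable into the accumulation). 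This yields
\[
B_k(m)\;\le\;\bigl(2r\cdot 2^k-r-2^k+\mathrm{nnz}(S)\bigr)m+O(2^k),
\]
so it suffices to choose $k$ together with a decomposition whose parameters $(r,\mathrm{nnz}(S))$ satisfy $2r\cdot2^k-r-2^k+\mathrm{nnz}(S)\le\tfrac{23}{24}\,k2^k$.

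The lower-order $\tfrac{13}{12}N$ term is handled in a routine way by stopping the recursion at the base case of size $2^k$ (using any folklore algorithm there) and summing the additive overheads through the recursion tree; after the $k$ and the base case are fixed, the constant works out to $\tfrac{13}{12}$.

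The main obstacle is the second step: constructing an $L+S$ decomposition whose structural parameters are sharp enough to land on exactly $\tfrac{23}{24}$. A naive attempt (for instance, subtracting off the rank-$1$ all-ones block from $H_{2^k}$) forces $S$ to have $\pm 2$ entries, which effectively adds an extra operation per nonzero and wipes out the savings. Likewise, generic rank factorizations of $L$ tend to produce $A,B$ with non-$\pm1$ entries. The key input is a decomposition from the recent work on the non-rigidity of the Walsh-Hadamard matrix in which the rank factorization of $L$ and the support pattern of $S$ both live over $\{-1,0,1\}$; pinning down the right $k$ and the specific decomposition, and then verifying that the bookkeeping is tight enough to hit the clean ratio $\tfrac{23}{24}$ rather than merely some $c<1$, is the delicate part of the argument.
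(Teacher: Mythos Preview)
Your high-level framework is right --- unroll to $H_N = H_8 \otimes H_{N/8}$ and speed up the batched $H_8$ step via a low-rank-plus-sparse decomposition --- but the heart of the argument, getting from $24$ to $23$ operations per $H_8$ application, is exactly where your proposal has a real gap.

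You correctly spot that the natural decomposition (the one from the non-rigidity literature) has $2$'s in $S$, and you propose to sidestep this by finding a different decomposition with $A,B,S$ all over $\{-1,0,1\}$. But observe: if $L,S\in\{-1,0,1\}^{8\times 8}$ and $L+S=H_8$ (whose entries are all $\pm 1$), then at every position exactly one of $L_{ij},S_{ij}$ is zero and the other equals $H_{8,ij}$. So $L$ and $S$ must have \emph{disjoint supports}, each a restriction of $H_8$ to a subset of entries. This is a severe combinatorial constraint; the known non-rigidity decompositions (including the one you allude to) do not satisfy it, and plugging any such $L,S$ into your generic count $2r\cdot 2^k - r - 2^k + \mathrm{nnz}(S)$ does not reach $23$. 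The decomposition you are appealing to does not exist in the form you need.

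The paper does something quite different. It uses the decomposition \emph{with} the $2$'s in $S$ and recovers the savings via two ideas you have not identified. First, it shares intermediate sums between the low-rank and sparse computations (for instance, $b+c$ is computed once and reused both toward $\mathrm{tot}=b+c+d+e+f+g+h$ for the low-rank part and toward a row of $S$), which drops a naive $37$-op count to $29$. Second --- and this is the key trick --- it eliminates the seven ``multiply by $2$'' operations not by avoiding them in the decomposition but by \emph{pre-scaling the recursive calls}: the recursion is set up to compute $H(x,k):=2^k H_N x$, and seven of the eight subcalls are invoked with parameter $k+1$, so they return $2b,2c,\dots,2h$ directly. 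Now the only extra work is a single ``divide by $2$'' on the aggregate sum, and the accumulated scalings are discharged at the base case as $N-1$ multiplications by powers of $2$. This takes $29$ down to $23$ (of which $22$ are additions and one is the division), and the recursion $T(N)=8\,T(N/8)+23(N/8)$ yields the $\tfrac{23}{24}$ leading constant; the $\tfrac{13}{12}N$ is the worst case of the base-case contribution $\tfrac{N}{24}(\log N \bmod 3)$ plus the $N-1$ accumulated power-of-$2$ scalings.

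So your generic $(L,S,A,B)$ accounting cannot reach $23$; the missing ingredients are the computation-sharing between the two halves of the decomposition and the scaling push-down, both of which exploit the specific structure of this particular $H_8$ decomposition rather than treating it as a black box.
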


Furthermore, our algorithm only uses fairly simple field operations: $\frac{22}{24}N \log N + \frac{1}{12}N$ additions and subtractions, $\frac{1}{24}N \log N$ ``divide by 2" operations, and $N - 1$ ``multiply by a power of $2$" operations where the power of $2$ is at most $N$. When working over the reals in a typical computer architecture, these division and multiplication operations can be implemented using `bit shifts' which are substantially faster than a general addition or multiplication operation.

In Section~\ref{subsec:H4} below, we also give an alternative algorithm which uses $\frac78 N \log N + O(N)$ additions and subtractions, $\frac18 N \log N$ ``divide by 2'' operations, and $N$ ``multiply by a power of $2$" operations. Although this alternative algorithm does not use fewer total operations than the folklore Fast Walsh-Hadamard Transform, it does use fewer when  ``divide by 2'' operations are ignored, and may be even faster in certain settings.

The main idea behind our new algorithm is to take advantage of the \emph{matrix non-rigidity} of the matrix $H_N$. A matrix $M$ is called \emph{rigid} if it is impossible to change a `small' number of its entries to make it have `low' rank\footnote{See the surveys~\cite{lokam2009complexity, ramya2020recent} for a formal definition and a discussion of many applications of rigidity throughout theoretical computer science.}. Rigidity was introduced by Valiant~\cite{valiant1977graph} to prove \emph{lower bounds} on the complexity of multiplying matrices like $H_N$ by input vectors. Valiant showed that if $M$ is rigid, then one cannot multiply $M$ by an input vector using $O(N)$ arithmetic operations in an $O(\log N)$-depth arithmetic circuit. (All the algorithms discussed in this paper can be seen as $O(\log N)$-depth arithmetic circuits.) Hence, proving that $H_N$ or a similar matrix is rigid would be a first step toward proving that $\Omega(N \log N)$ operations are required to multiply it by a vector.

However, a recent line of work~\cite{alman2017probabilistic,dvir2019matrix,dvir2020fourier,alman2021kronecker,kivva2021improved,bhargava2022fast} has proved that many matrices, including the WHT~\cite{alman2017probabilistic} and DFT~\cite{dvir2020fourier} matrices, are \emph{not} rigid. A formal converse to Valiant's result is not known, and in particular, it is not immediate that this leads to improved algorithms for any of these matrices. Nonetheless, we design our faster algorithm for the WHT by making use of a rigidity upper bound from recent work of one of the authors~\cite{alman2021kronecker}, combined with new observations about the structure of changes one makes to $H_N$ to reduce its rank. 

\subsection{DFT Result}

The DFT matrix is defined over the complex numbers $\C$. Let $\omega_N := e^{i \pi / N} \in \C$ denote the $N$th root of unity, with $i = \sqrt{-1}$. The $N \times N$ DFT matrix, $F_N \in \C^{N \times N}$, is given by, for $a,b \in \{0,1,\ldots,N-1\}$,  $C[a,b] = \omega_N^{a \cdot b}$. The goal of the DFT is, given as input a vector $x \in \C^{N}$, to compute the vector $y = F_N x$. We focus here on the case where $N$ is a power of $2$.

FFT algorithms (for the DFT) have customarily been measured by the number of \emph{real} operations they perform, rather than number of complex operations, to correspond more closely with costs in real computer architectures. In other words, the problem we solve is: We are given as input $2N$ real numbers, corresponding to the real and imaginary parts of the entries of $x \in \C^N$, the goal is to output $2N$ real numbers equal to the real and imaginary parts of the entries of $y = F_N x$, and we count the number of real arithmetic operations performed. For example, we add complex numbers using $2$ real additions, and multiply two complex numbers using $6$ real operations ($4$ multiplications and $2$ additions).

Figure~\ref{fig:FFThistory} summarizes the history of the best FFT algorithms. The classic Cooley-Tukey algorithm~\cite{cooley1965algorithm} achieves a leading constant of $5$, and a few years later in 1968, Yavne~\cite{yavne1968economical} introduced the `split-radix' (SR) variant on Cooley-Tukey which improves the constant to $4$. Over thirty years later, in 2004, Van Buskirk publicly posted software \cite{buskirk2004software} which improved on the operation count of SR using a `Modified Split-Radix' (MSR) approach. By the year 2007, three different groups independently gave theoretical analyses of Van Buskirk's algorithm (or equivalent variants on it), proving its correctness and formally showing that it achieved a leading constant of $34/9 = 3.777\ldots$~\cite{lundy2007new,johnson2006modified,bernstein2007tangent}. As mentioned above, this MSR algorithm is used in widely-deployed systems today~\cite{frigo1998fftw}. More recently, in 2017, Sergeev~\cite{sergeev2017real} showed that the operation-saving technique of MSR could be applied more effectively in the limit as the `base circuit' of the algorithm gets larger and larger, improving the constant to $3.76875$.

\begin{figure}[h!]
\begin{center}
\begin{tabular}{ | c | c | }
 \hline
 \textbf{Algorithm} & \textbf{Leading constant of operation count} \\ \hline
 Cooley-Tukey~\cite{cooley1965algorithm} & $5$ \\ \hline
 Split Radix (SR)~\cite{yavne1968economical} & $4$ \\  \hline 
 Modified Split Radix (MSR)~\cite{lundy2007new,johnson2006modified,bernstein2007tangent} & $34/9 = 3.7777\ldots$  \\ \hline 

``Theoretically Optimized" Split Radix~\cite{sergeev2017real} & $3.76875$

 \\ \hline 
 This paper, Algorithm~\ref{alg:HUFFT} & $15/4 = 3.75$   \\ \hline
\end{tabular}
    \caption{History of the best FFT algorithms, for $N$ a power of $2$. For all these algorithms, a leading constant of $c$ means the algorithm uses $c N \log N + O(N)$ real operations, where the $O$ hides a modest constant.}
    \label{fig:FFThistory}
\end{center}
\end{figure}

Our second main result improves the constant to $15/4 = 3.75$:
\begin{theorem}\label{thm:Fintro}
    Given $x \in \C^N$ for $N$ a power of $2$, we can compute $F_N x$ using at most $\frac{15}{4}N \log N - \frac{223}{108}N + o(N^{0.8})$ real operations.
\end{theorem}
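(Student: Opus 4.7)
The plan is to build a hybrid FFT algorithm that combines the multiplication-optimized framework of Modified Split-Radix (MSR)~\cite{johnson2006modified, bernstein2007tangent, lundy2007new, sergeev2017real} with the fast WHT from Theorem~\ref{thm:Hintro}. The idea is to separate the algorithm's real operations into two groups: twiddle-factor complex multiplications, which the prior MSR analyses already minimize essentially optimally, and the additive combination steps that glue sub-DFT outputs together. The additive steps have, after suitable rearrangement, the $\pm 1$ sign pattern of a Walsh-Hadamard matrix, so they can be performed more cheaply by invoking the fast WHT rather than by naive addition.

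First, I would set up the standard split-radix recursion, writing $F_N x$ in terms of $F_{N/2}$ on the even-indexed entries and two copies of $F_{N/4}$ on the odd-indexed entries (split mod 4), with twiddle-factor diagonals multiplied in before the size-$N/4$ transforms. Then, following the MSR/Sergeev route, I would pull twiddle diagonals through several levels of recursion and re-factor them so that the number of nontrivial complex multiplications matches the multiplicative contribution of Sergeev's $3.76875$ analysis. This handles the multiplicative portion of the count without change, and isolates the remaining budget as additions and subtractions.

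Next, I would unwind $k$ levels of split-radix recursion and view the additive combination of sub-DFT outputs as a single linear map on blocks of some size depending on $k$. The central claim is that, after extracting known scalar constants and permutations induced by the butterflies, this block map is a direct sum of Walsh-Hadamard transforms $H_{2^k}$ applied componentwise on real and imaginary coordinates. Applying Theorem~\ref{thm:Hintro} in place of the naive butterfly additions on each block saves a $\frac{1}{24}$ fraction of the addition cost of those subroutines, and choosing $k$ large enough that the savings are realized across essentially all levels drops the leading constant from Sergeev's $3.76875$ to exactly $\frac{15}{4} = 3.75$. The lower-order $-\frac{223}{108}N + o(N^{0.8})$ term comes from a careful bookkeeping of the base-case cost, the trivial twiddle multiplications by $\pm 1, \pm i, (1\pm i)/\sqrt{2}$, and the boundary of the $k$-level unwinding.

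The main obstacle I expect is in the second step: precisely identifying which collection of butterfly additions in an MSR unwinding has the WHT sign structure, and proving that the rearrangement needed to expose this structure is compatible with the twiddle-factor regrouping used to preserve the multiplicative savings. A further subtlety is that Theorem~\ref{thm:Hintro} uses ``divide by $2$'' and ``multiply by a power of $2$'' operations which must be absorbed into the complex-multiplication bookkeeping without double-counting, and that the WHT needs to be applied coherently to the real and imaginary parts so that the $\frac{1}{24}$ savings transfers to the real operation count rather than to a purely complex one.
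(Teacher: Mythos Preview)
Your high-level intuition---that the $\pm 1$ additions in an FFT butterfly can be batched and replaced by calls to a faster WHT---matches the paper's approach. But the proposal has a genuine gap in the mechanism, and the arithmetic to reach exactly $\tfrac{15}{4}$ does not work as you describe.

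First, the paper builds on MSR (leading constant $34/9$), not on Sergeev ($3.76875$). After extracting the WHT-reducible additions, the remaining ``twiddle'' portion of MSR has leading constant $\tfrac{28}{9}$, and the extracted portion contributes $\tfrac{2}{3}c$ where $c$ is the WHT constant; with $c=\tfrac{23}{24}$ this gives $\tfrac{23}{36}+\tfrac{28}{9}=\tfrac{15}{4}$ (this is Lemma~\ref{lem:reduction}). If you instead started from Sergeev and saved a $\tfrac{1}{24}$ fraction of some additive block, there is no reason to land on $\tfrac{15}{4}$; your sentence ``drops the leading constant from Sergeev's $3.76875$ to exactly $\tfrac{15}{4}$'' is unsupported arithmetic, and Sergeev's limiting optimization over growing base circuits does not obviously preserve the structure needed for the reduction.

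Second, and more importantly, the WHT structure is not visible on the output-combination side where you are looking. The split-radix butterfly combines sub-DFT outputs through twiddle factors $\omega_N^{\pm k}$, so ``extracting known scalar constants'' from a multi-level unwinding does not leave a $\pm 1$ matrix; at best you get many $2\times 2$ butterflies, which gain nothing from Theorem~\ref{thm:Hintro}. The step your proposal is missing is the conjugate-pair identity (Lemma~\ref{conj-trick-lemma}): because $\omega_N^k$ and $\omega_N^{-k}$ are conjugates, the real and imaginary parts of $\omega_N^k B_k + \omega_N^{-k} C_k$ and $\omega_N^k B_k - \omega_N^{-k} C_k$ depend only on those of $B_k+C_k$ and $B_k-C_k$. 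This lets one rewrite each butterfly so that the $\pm 1$ additions act on $B$ and $C$ \emph{before} the twiddle multiplications; then, by linearity of the DFT, those additions are pushed all the way to the input side of the recursive calls. Collected across all recursion levels they form a single matrix $H'_N$ applied to the raw input, and $H'_N$ is a permutation of a direct sum of Walsh-Hadamard matrices of \emph{varying} sizes, not all of one size $H_{2^k}$ as you wrote. Only after this rearrangement does Theorem~\ref{thm:Hintro} apply, and compatibility with MSR's twiddle rescaling is automatic because $H'$ now sits entirely to the right of every twiddle multiplication. The ``main obstacle'' you flag is in fact the whole argument, and its resolution is this specific algebraic identity rather than a generic $k$-level unwinding.
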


Furthermore, similar to Theorem~\ref{thm:Hintro}, some of the operations in Theorem~\ref{thm:Fintro} are simple multiplications or divisions by small powers of $2$ which could be implemented with fast `bit shifts': $\frac{1}{36} N \log N$ are ``divide by 2'' operations, and $N - 1$ are ``multiply by a power of $2$'' operations.

At a high level, the prior improvements beyond Cooley-Tukey focused on the number of real operations used for multiplying inputs by `twiddle factor' complex numbers throughout the algorithm. They noticed that the operation counts could be reduced by taking advantage of symmetries in the twiddle factors. For example, a key observation of SR is that, given two complex numbers $a$ and $b$, one can simultaneously compute both $a\cdot b$ and $a \cdot b^{*}$ using only 8 real operations (where $b^{*}$ denotes the complex conjugate of $b$), whereas multiplying $a$ by two arbitrary complex numbers in general is done with 12 real operations.

Our improvement focuses instead on the additions and subtractions used to combine the recursive calls at each layer of these algorithms. We show how to rearrange the order that computations are performed in, so that many of these additions and subtractions can be replaced by a reduction to the WHT. We show:
\begin{lemma} \label{lem:reduction}
    Suppose the WHT of a vector $x \in \F^N$ for $N$ a power of $2$ can be computed using $c N \log N + O(N)$ field operations. Then, the DFT of a vector $y \in \C^N$ can be performed using $\left(\frac{2}{3}c + \frac{28}{9}\right) N \log N + O(N)$ real operations.
\end{lemma}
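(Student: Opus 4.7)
The plan is to construct an FFT algorithm in the split-radix paradigm that uses the given WHT algorithm as a subroutine for its additive work. Starting from $x \in \C^N$, I would decompose $x$ into its even-indexed part $x^{(0)} \in \C^{N/2}$ and its $1\bmod 4$ and $3 \bmod 4$ parts $x^{(1)}, x^{(3)} \in \C^{N/4}$. Recursively compute $U = F_{N/2} x^{(0)}$, $Z_1 = F_{N/4} x^{(1)}$, $Z_3 = F_{N/4} x^{(3)}$, then combine them into $F_N x$ via twiddle-factor multiplications and three $H_2$-butterflies per four-output block (one combining the twiddled $Z_1, Z_3$ outputs, and two combining the result with $U$ outputs). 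This yields the recurrence $T(N) = T(N/2) + 2T(N/4) + g(N)$; a standard calculation shows that per-level work of $\alpha N + O(1)$ contributes $\frac{2\alpha}{3} N \log N + O(N)$ to the total, which is the source of the $\tfrac{2}{3}$ factor in the lemma.

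The central new step is to batch the combining-step additions into WHT subroutine calls. The $H_2$-butterflies at each level have Kronecker structure $I \otimes H_2$; by arranging the intermediate values in an appropriate order and composing several layers of butterflies, one can assemble a genuine $H_M$ transform acting on a permuted complex vector. Each such block is then discharged by the assumed WHT algorithm at cost $cM \log M + O(M)$. Choosing the batching size carefully, the per-level contribution of these WHT calls is linear in $N$ with leading coefficient $c$, so that by the split-radix recurrence the total WHT work amounts to $\frac{2c}{3} N \log N + O(N)$. The remaining per-level work is the two complex twiddle multiplications per four-output block, optimized using the MSR symmetry trick for computing $(\omega_N^k z, \omega_N^{3k} z')$ simultaneously; solving the recurrence, this contributes $\frac{28}{9} N \log N + O(N)$.

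As a sanity check, plugging in the folklore WHT constant $c = 1$ gives $\tfrac{2}{3} + \tfrac{28}{9} = \tfrac{34}{9}$, precisely the MSR leading constant, so the proposed reduction specializes correctly. The main obstacle lies in the batching step: one must verify that the MSR combining additions, when rearranged across iterations (and possibly across a small number of adjacent recursion levels), realize a bona fide Walsh--Hadamard transform on a correctly ordered vector, rather than merely a disjoint collection of independent $H_2$'s. This requires careful data-dependency tracking so that the twiddle multiplications are inserted at exactly the right places relative to the WHT calls, and that any required permutations and $\pm i$ factors can be absorbed at $O(N)$ extra cost per level.
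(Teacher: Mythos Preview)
Your high-level plan is right --- split-radix recursion, $\tfrac{28}{9}$ from the MSR twiddle work, a WHT subroutine for the additive part, and the $c=1$ sanity check --- but the step you yourself flag as the ``main obstacle'' is a genuine gap that your proposal does not close. In the split-radix combining step the three $H_2$ butterflies per four-output block sit \emph{after} the twiddle multiplications; across recursion levels the $H_2$ layers are therefore interleaved with diagonal twiddle matrices and do not compose into a larger $H_M$. ``Batching across a small number of adjacent levels'' does not help: any fixed batch still has twiddles inside it, and moreover the dependence on $c$ in $cM\log M + O(M)$ only materializes when the WHT block size $M$ grows with $N$, so a constant batch size cannot produce the $\tfrac{2c}{3}N\log N$ term you need.

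The paper's resolution is a specific algebraic identity that your plan is missing. Work in the \emph{conjugate-pair} variant (recurse on $x[4j+1]$ and $x[4j-1]$, so the twiddles are $\omega_N^{k}$ and $\omega_N^{-k}$ rather than $\omega_N^{k},\omega_N^{3k}$ as in your sketch). Because $\omega_N^{k}$ and $\omega_N^{-k}$ are conjugates, $\omega_N^{k}B_k + \omega_N^{-k}C_k$ and its three companions can be rewritten as real-coefficient combinations of $B_k+C_k$ and $B_k-C_k$ only (Lemma~\ref{conj-trick-lemma} and Corollary~\ref{conj-trick-corollary}). By linearity of the DFT this pushes one $H_2$ from the output side of the two $F_{N/4}$ calls to their \emph{input} side. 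Unrolling the recursion, all these input-side $H_2$'s now compose --- with no twiddles between them --- into a single matrix $H'_N$, which is a permuted direct sum of Walsh--Hadamard blocks of varying sizes. Applying the assumed WHT algorithm blockwise and using $\sum_j F(N,2^j)\,2^j j = \tfrac{1}{3}N\log N + O(N)$ (Lemma~\ref{hprime:complexity_vs_hadamard:lemma}) yields exactly $\tfrac{2c}{3}N\log N + O(N)$ real operations for this part; the remaining output-side work (twiddles plus the other two butterflies) obeys the MSR recurrences and contributes $\tfrac{28}{9}N\log N + O(N)$.
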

In particular, setting $c=1$ (from the folklore fast Walsh-Hadamard transform) in Lemma~\ref{lem:reduction} recovers the leading constant $34/9$ achieved by MSR, but using our improved $c = 23/24$ from Theorem~\ref{thm:Hintro} yields the leading constant $15/4$ which we state in Theorem~\ref{thm:Fintro}. By the nature of our reduction, our new FFT algorithm is faster (in operation count or in practice) on input size $N$ whenever our algorithm from Theorem~\ref{thm:Hintro} for the WHT is faster for input size $N/8$. (Indeed, on input size $N$, our FFT algorithm involves computing many WHTs of size $N/8$ and smaller.)

Although the WHT and DFT matrices differ in critical ways, they have similar recursive structures, and it is natural to wonder whether there is a formal connection showing that an improvement to either one gives an improvement for the other. To our knowledge, our Lemma~\ref{lem:reduction} is the first such connection.

\subsection{Verification Code}

We have implemented the algorithms for Theorem~\ref{thm:Hintro} and Theorem~\ref{thm:Fintro}, which can be used to verify their correctness and operation counts. Our implementation is available at \url{code.joshalman.com/WHT-and-FFT-from-Non-Rigidity}.

\subsection{Other Related Work}

\paragraph{Barriers for the DFT}
All known algorithms for the DFT, including ours, use $\Omega(N \log N)$ arithmetic operations, but it remains an open question to prove that an algorithm using $o(N \log N)$ operations is impossible. A number of previous works have shown barrier results:  if one places some restrictions on the structure and properties of FFT algorithms, then $\Omega(N \log N)$ operations are required. For example, Morgenstern~\cite{morgenstern1973note} proved a $\tfrac{1}{2}N \log N$ lower bound for the complexity of FFT algorithms with coefficients (the fixed complex numbers that one multiplies by throughout the algorithm) of magnitude at most $1$. Works by Papadimitriou~\cite{papadimitriou1979optimality} and by Haynal and Haynal~\cite{haynal2011generating} give $\Omega(N \log N)$ lower bounds by making assumptions about the `flow graph' of the FFT algorithm, requiring, among other things, that there is a unique path in this graph from any input to any output. Pan~\cite{pan1986trade} similarly gives an $\Omega(N \log N)$ lower bound assuming the flow graph is `asynchronous'.

The Cooley-Tukey algorithm conforms to all of these assumptions, and SR conforms to all but the asynchronicity assumption. The MSR algorithm does not conform to these assumptions, and our new FFT algorithm introduces new ideas which further violate them. We use real coefficients of magnitude as large as $N$ (the ``multiply by powers of 2'' operations discussed earlier); these are larger than the coefficients used in the MSR algorithm, which were only slightly super-constant. Nonetheless, we use our coefficients in such a way that on the same inputs, the largest intermediate values one computes using our algorithm, MSR, and SR are roughly the same.\footnote{Notably, our large coefficients are all powers of $2$ which are simple to determine, and the time to compute the other `twiddle factors' used by our algorithm is nearly identical to MSR.} Our new WHT algorithm, which we use as a subroutine in our DFT algorithm, makes use of an algebraic identity which computes and combines multiple intermediate values that depend on many of the inputs. This results in a flow graph where each input has many paths to each output.

A more recent line of work by Ailon~\cite{ailon2013lower,ailon2014n,ailon2015tighter} gives an $\Omega(N \log N)$ lower bound for WHT algorithms that don't have $\Omega(n)$ different `ill-conditioned' intermediate steps. Our new WHT algorithm starts with $\Theta(n)$ steps of multiplying inputs by large powers of $2$, which are each much more ill-conditioned than is required by the barriers.

In other words, the new ideas behind our improved algorithm seemingly require further violations of the assumptions that are known to lead to $\Omega(N \log N)$ lower bounds. This suggests that new barriers are needed, but also that studying techniques that overcome these barriers more carefully may help lead to further improvements.

\paragraph{Matrix Rigidity and Linear Circuits}
Our faster WHT algorithm uses a rigidity decomposition recently given by \cite[{Lemma 4.6}]{alman2021kronecker} for the matrix $H_8$. The prior work~\cite{alman2021kronecker} used this and other rigidity decompositions to give a smaller constant-depth circuit for the WHT in a different model of computation, called the `linear circuit model'. The linear circuit model differs from our setting in how it measures complexity. While we count the number of arithmetic operations used by an algorithm, linear circuits make use of `linear gates' which may compute arbitrary linear combinations of their inputs, and they only count the total number of input wires to their gates. In particular, in that model, adding two inputs has a cost of 2 (since an addition gate would have 2 input wires), but multiplying by scalars is free. For example, replacing an addition of complex numbers with a multiplication would decrease the cost in the linear circuit model (where multiplications are free), but would not change the number of arithmetic operations that we count, and would even increase the cost when we are counting real operations (since complex multiplications are computed with 6 real operations, but additions use only 2).

Because of the differences in these models, our algorithms do not translate into improved linear circuits, and the algorithms of~\cite{alman2021kronecker} do not give improved arithmetic operation counts. Notably, in the prior work~\cite{alman2021kronecker}, only the number of entries which are changed in the rigidity decomposition seems to matter, whereas we also need to analyze the pattern of these changes. The prior work~\cite{alman2021kronecker} ultimately designs its best linear circuit by making use of a different rigidity decomposition for the matrix $H_{16}$ (which strictly improves on the circuit they design using their decomposition of $H_8$), whereas, despite some effort, we have not been able to design an improved algorithm with a decomposition of $H_{16}$ rather than the decomposition of $H_8$ that we use.

Although it is known that the DFT is not rigid \cite{dvir2020fourier}, we do not explicitly use this fact in our FFT algorithm. We only use the non-rigidity of the WHT, and then reduce the DFT to the WHT. Making use of the non-rigidity of the DFT is an exciting, open direction.

\section{Technical Overview}

We begin by giving an overview of our new algorithm for the DFT. We show how to rewrite and rearrange the computations involved in the MSR algorithm so that subcomputations can be extracted which are equivalent to the WHT. We will then give an overview of our new WHT algorithm based on a non-rigidity decomposition of the WHT matrix.

\subsection{Improving the Split-Radix Algorithm}
\label{subsec:introFFT}

The main idea behind our new FFT algorithm is to start with the MSR algorithm and perform a number of steps where its computations are rewritten or rearranged. In this overview, we will instead show how to apply these rewrites and rearrangements to the simpler SR algorithm. Since the MSR algorithm has a very similar overall structure (just with some complicated details inserted), we ultimately apply the same steps we outline here to obtain our final algorithm.

We start with the split-radix FFT algorithm (for the unfamiliar reader, see Section~\ref{subsec:SR} below for a derivation of this algorithm; for completeness, we rederive both SR and MSR in Section~\ref{sec:rederive}):

\begin{algorithm}[H]
\caption{Split-Radix FFT} \label{alg:origSR}
\begin{algorithmic}[1]

\Procedure{FFT}{$x$} $\rightarrow y_{0, 1, \dots N - 1}$ \Comment{$x \in \C^N$}
    \State $A \leftarrow FFT(x[2j]_{j = 0}^{N/2 - 1})$ \Comment{$A \in \C^{N/2}$}
    \State $B \leftarrow FFT(x[4j + 1]_{j = 0}^{N/4 - 1})$ \Comment{$B \in \C^{N/4}$}
    \State $C \leftarrow FFT(x[4j - 1]_{j = 0}^{N/4 - 1})$ \Comment{$C \in \C^{N/4}$}
    \For{$k \in [0, 1, \dots N/4 - 1]$}
        \State $y_k \leftarrow A_k + \omega_N^{k}B_k + \omega_N^{-k}C_k$ \label{sr-alg:forloop:yk}
        \State $y_{k + N/4} \leftarrow A_{k + N/4} - i \omega_N^{k}B_k + i\omega_N^{-k}C_k$ \label{sr-alg:forloop:ykn4}
        \State $y_{k + N/2} \leftarrow A_k - (\omega_N^{k}B_k + \omega_N^{-k}C_k)$
        \State $y_{k + 3N/4} \leftarrow A_{k + N/4} + i\omega_N^{k}B_k - i\omega_N^{-k}C_k$ \label{sr-alg:forloop:yk3n4}
    \EndFor
\EndProcedure

\end{algorithmic}
\end{algorithm}

We can equivalently view this algorithm in the following recursive matrix form:
$$F(x) = F_N\Pi_N^{-1} \Pi_Nx = \begin{bmatrix} I_{N/4} &  & D_N & D'_N \\  & I_{N/4} & -iD_N & iD'_N \\ I_{N/4} &  & -D_N & -D'_N \\  & I_{N/4} & iD_N & -iD'_{N} \end{bmatrix} \unevenmatrix{F_{N/2}}{F_{N/4}}{}{}{F_{N/4}}\begin{bmatrix} x[2i]_{i = 0}^{N / 2 - 1}  \\ x[4i + 1]_{i = 0}^{N / 4 - 1}  \\ x[4i - 1]_{i = 0}^{N / 4 - 1}\end{bmatrix}$$
Here, $\Pi_N$ is a permutation matrix which reorders the entries of $x$ so that the subvectors in the algorithm, to which we will recursively apply the FFT, appear contiguously for the sake of clarity.\footnote{Note that permutation matrices can be applied to an input vector without any arithmetic operations, by simply reordering the entries. In other words, the $\Pi_N$ matrix will only implicitly be implemented in any actual algorithm by the way the algorithms access their input, and thus does not add any computational complexity.} $D_{N}, D'_{N}  \in \C^{N/4 \times N/4}$ are the diagonal matrices given by $D_{N, N}[j, j] = \omega_N^j$ and $D'_{N, N}[j, j] = \omega_N^{-j}$ for $j \in \{0, \dots N/4 - 1\}$.

We now will explain our new idea that modifies this algorithm (as well as the MSR algorithm) to get a speedup. At each step, we write all changes from the previous algorithm in blue.

We begin by focusing on lines \ref{sr-alg:forloop:yk} - \ref{sr-alg:forloop:yk3n4} from the split-radix algorithm. Take, for example, line \ref{sr-alg:forloop:yk}.
$$y_k \leftarrow A_k + \omega_N^{k}B_k + \omega_N^{-k}C_k$$

The important observation here is that since $\omega_N^k$ and $\omega_N^{-k}$ are complex conjugates, they are identical except for a negated imaginary part. We will take advantage of this to rewrite the line in a convenient way; let $\alpha$ be any complex number and $\alpha^*$ be its complex conjugate. We will use the following algebraic rearrangement of the split-radix algorithm's computations (inspired by similar manipulations used in work on radix-3 FFT algorithms, e.g., \cite{suzuki1986new, dubois1978new}).

\begin{lemma}
\label{conj-trick-lemma}

Let $A, B, C, \alpha$ be complex numbers written as
\begin{align*}
    A &= a + a'i, \quad B = b + b'i, \quad C = c + c'i \\
    \alpha &= r + r'i, \quad \alpha^{*} = r - r'i
\end{align*}
for real numbers $a, a', b, b', c, c', r, r'$. Then,
$$A + (\alpha B + \alpha^{*} C) = [a + (r(b + c) + r'(c' - b'))] + [a' + (r(b' + c') + r'(b - c))]i$$
\end{lemma}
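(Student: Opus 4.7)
The plan is to prove this by direct expansion, since the claim is a purely algebraic identity obtained by writing out complex multiplication in terms of real and imaginary parts. First I would expand the two products $\alpha B$ and $\alpha^{*} C$ separately using the rule $(x+yi)(u+vi) = (xu-yv) + (xv+yu)i$. This gives
\[
\alpha B = (rb - r'b') + (rb' + r'b)\,i, \qquad \alpha^{*} C = (rc + r'c') + (rc' - r'c)\,i.
\]

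Next I would add these two complex numbers by summing real parts and imaginary parts:
\[
\alpha B + \alpha^{*} C = \bigl(rb - r'b' + rc + r'c'\bigr) + \bigl(rb' + r'b + rc' - r'c\bigr)\,i.
\]
Then I would factor $r$ and $r'$ out of the real and imaginary parts separately: the real part becomes $r(b+c) + r'(c'-b')$ and the imaginary part becomes $r(b'+c') + r'(b-c)$. Finally, I would add $A = a + a'i$ componentwise, which just adds $a$ to the real part and $a'$ to the imaginary part, yielding exactly the claimed formula.

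There is no real obstacle here; the identity is entirely mechanical. The only thing to be careful about is sign-tracking in the two products (in particular the $-r'b'$ coming from $i \cdot i = -1$ in $\alpha B$, and the $-r'i$ in $\alpha^{*}$ producing $+r'c'$ and $-r'c$ in $\alpha^{*} C$). Given that care, the identity follows in a few lines of arithmetic. The more interesting content of the lemma is not its proof but its interpretation: it shows that computing $A + \alpha B + \alpha^{*}C$ only requires forming the four real sums/differences $b+c$, $b'+c'$, $b-c$, $c'-b'$ and then four real multiplications by $r$ and $r'$, which is what will enable the later rearrangement of split-radix into a form that exposes WHT-like substructure.
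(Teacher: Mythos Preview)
Your proposal is correct and essentially identical to the paper's own proof: the paper also proceeds by direct expansion of $(a+a'i)+(r+r'i)(b+b'i)+(r-r'i)(c+c'i)$ and then regroups terms into real and imaginary parts. The only cosmetic difference is that the paper expands everything in one line rather than computing $\alpha B$ and $\alpha^{*}C$ separately first.
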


\begin{proof}
Starting with our substitution,
\begin{align*}
    A + \alpha B + \alpha^{*}C &= (a + a'i) + (r + r'i)(b + b'i) + (r - r'i)(c + c'i) \\
    &= a + rb - r'b' + rc + r'c' + a'i + rb'i + r'bi + rc'i - r'ci \\
    &= [a + (r(b + c) + r'(c' - b'))] + [a' + (r(b' + c') + r'(b - c))]i.\qedhere
\end{align*}
\end{proof}

We will analogously rewrite lines \ref{sr-alg:forloop:ykn4} through \ref{sr-alg:forloop:yk3n4} via a similar calculation:

\begin{corollary}
\label{conj-trick-corollary}
    Analogously,
    \begin{align*}
        A - (\alpha B + \alpha^{*} C) &= [a - (r(b + c) + r'(c' - b'))] + [a' - (r(b' + c') + r'(b - c))]i, \\
        A - i(\alpha B - \alpha^{*} C) &= [a + (r'(b + c) + r(b' - c'))] + [a' + (r'(b' + c') + r(c - b))]i, \\
        A + i(\alpha B - \alpha^{*} C) &= [a - (r'(b + c) + r(b' - c'))] + [a' - (r'(b' + c') + r(c - b))]i.
    \end{align*}
\end{corollary}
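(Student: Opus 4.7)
The plan is to verify each of the three identities by direct algebraic expansion, exactly parallel to the calculation used in the proof of Lemma~\ref{conj-trick-lemma}. In fact, all three can be obtained from Lemma~\ref{conj-trick-lemma} with only mild bookkeeping, so the work is essentially mechanical.

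For the first identity, I would simply apply Lemma~\ref{conj-trick-lemma} with $A = 0$ to expand $\alpha B + \alpha^{*} C$ in its real and imaginary parts, and then subtract the result from $a + a'i$. The claim follows immediately because this identity differs from Lemma~\ref{conj-trick-lemma} only by negating the $\alpha B + \alpha^{*} C$ term, so each of the extra real and imaginary summands flips sign.

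For the second and third identities, the plan is to first derive an expansion for $\alpha B - \alpha^{*} C$ by reusing the Lemma's expansion with $C$ replaced by $-C$ (equivalently, $c \mapsto -c$ and $c' \mapsto -c'$ in the $\alpha^{*} C$ summand only). This yields
\[
\alpha B - \alpha^{*} C = \bigl[r(b-c) - r'(b'+c')\bigr] + \bigl[r(b'-c') + r'(b+c)\bigr] i.
\]
I would then multiply this quantity by $-i$ (for identity~2) or $+i$ (for identity~3), using $-i(X + Y i) = Y - X i$ and $i(X + Y i) = -Y + X i$, which swap the real and imaginary parts and introduce a sign. Finally I would add $A = a + a'i$ and simplify using $-r(b-c) = r(c-b)$ to match the stated formulas exactly.

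The only real obstacle is careful bookkeeping of the signs produced by the $\pm i$ multiplications together with the subtraction inside the parenthesis; there is no genuine mathematical difficulty. I would present the argument compactly by writing out the detailed expansion for one of the three identities (say the second), and noting that the other two follow either by negating every extra summand or by the analogous choice of sign in front of $i$.
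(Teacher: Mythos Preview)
Your proposal is correct and matches the paper's approach: the paper does not give a proof but simply says the identities follow ``via a similar calculation'' to Lemma~\ref{conj-trick-lemma}, which is precisely the direct algebraic expansion you outline. Your bookkeeping (replacing $C$ by $-C$, then multiplying by $\pm i$ and adding $A$) is sound and yields the stated formulas.
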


We use these four results and substitution to rewrite the split-radix algorithm as

\begin{algorithm}[H]
\caption{Split-Radix FFT (Rewritten)}
\begin{algorithmic}[1]

\Procedure{FFT}{$x$} $\rightarrow y_{0, 1, \dots N - 1}$ \Comment{$x \in \C^N$}
    \State $A \leftarrow FFT(x[2j]_{j = 0}^{N/2 - 1})$ \Comment{$A \in \C^{N/2}$}
    \State $B \leftarrow FFT(x[4j + 1]_{j = 0}^{N/4 - 1})$ \Comment{$B \in \C^{N/4}$}
    \State $C \leftarrow FFT(x[4j - 1]_{j = 0}^{N/4 - 1})$ \Comment{$C \in \C^{N/4}$}
    \For{$k \in [0, 1, \dots N/4 - 1]$}
        \State \bluetext{$a + a'i, z + z'i, b + b'i, c + c'i \leftarrow A_k, A_{k + N/4}, B_k, C_k$}
        \State \bluetext{$ r + r'i \leftarrow \omega_N^k$}
        \State $y_k \leftarrow \bluetext{[a + r(b + c) + r'(c' - b')] + [a' + r(b' + c') + r'(b - c)]i}$ \label{sr:before_add:yk}
        \State $y_{k + N/4} \leftarrow \bluetext{[z + r'(b + c) + r(b' - c')] + [z' + r'(b' + c') + r(c - b)]i}$
        \State $y_{k + N/2} \leftarrow \bluetext{[a - r(b + c) - r'(c' - b')] + [a' - r(b' + c') - r'(b - c)]i}$
        \State $y_{k + 3N/4} \leftarrow \bluetext{[z - r'(b + c) - r(b' - c')] + [z' - r'(b' + c') - r(c - b)]i}$ \label{sr:before_add:yk+3N/4}
    \EndFor
\EndProcedure

\end{algorithmic}
\end{algorithm}

Observe that instead of depending on $A, B, C$, the output of the $FFT$ function (i.e., the quantities calculated on lines \ref{sr:before_add:yk} through \ref{sr:before_add:yk+3N/4}) can now be thought of as depending on $a, a', b + c, b' + c', b - c, b' - c'$, which in turn depend only on $A, B + C, B - C$ (where addition and subtraction of vectors is done entry-wise). In other words, the computations in lines \ref{sr:before_add:yk} through \ref{sr:before_add:yk+3N/4} depend on $B + C = FFT(x[4j + 1]_{j = 0}^{N/4 - 1}) + FFT(x[4j - 1]_{j = 0}^{N/4 - 1})$ and $B - C = FFT(x[4j + 1]_{j = 0}^{N/4 - 1}) - FFT(x[4j - 1]_{j = 0}^{N/4 - 1})$, which by the linearity of the FFT, are equivalent to $FFT(x[4j + 1]_{j = 0}^{N/4 - 1} + x[4j - 1]_{j = 0}^{N/4 - 1})$ and $FFT(x[4j + 1]_{j = 0}^{N/4 - 1} - x[4j - 1]_{j = 0}^{N/4 - 1})$. Thus, we can replace the lines
\begin{align*}
    B &\leftarrow FFT(x[4j + 1]_{j = 0}^{N/4 - 1}) \\
    C &\leftarrow FFT(x[4j - 1]_{j = 0}^{N/4 - 1})
\end{align*}
with
\begin{align*}
    \tilde{x_B} &\leftarrow x[4j + 1]_{j = 0}^{N/4 - 1} + x[4j - 1]_{j = 0}^{N/4 - 1} \\
    \tilde{x_C} &\leftarrow x[4j + 1]_{j = 0}^{N/4 - 1} - x[4j - 1]_{j = 0}^{N/4 - 1} \\
    \tilde{B} &\leftarrow FFT(\tilde{x_B}) \\
    \tilde{C} &\leftarrow FFT(\tilde{x_C})
\end{align*}

and substitute $\tilde{b}$ for every instance of $b + c$, $\tilde{b'}$ for every instance of $b' + c'$, $\tilde{c}$ for every instance of $b - c$, and $\tilde{c'}$ for every instance of $b' - c'$, to get

\begin{algorithm}[H]
\caption{Split-Radix FFT (Intermediate modifications)}
\label{sr:alg:2.0}
\begin{algorithmic}[1]

\Procedure{FFT}{$x$} $\rightarrow y_{0, 1, \dots N - 1}$ \Comment{$x \in \C^N$}
    \State \bluetext{$\tilde{x_B} \leftarrow x[4j + 1]_{j = 0}^{N/4 - 1} + x[4j - 1]_{j = 0}^{N/4 - 1}$ \label{sr:2.0:after_add:B}}
    \State \bluetext{$\tilde{x_C} \leftarrow x[4j + 1]_{j = 0}^{N/4 - 1} - x[4j - 1]_{j = 0}^{N/4 - 1}$ \label{sr:2.0:after_add:C}}
    \State $A \leftarrow FFT(x[2j]_{j = 0}^{N/2 - 1})$ \Comment{$A \in \C^{N/2}$}
    \State \bluetext{$\tilde{B} \leftarrow FFT(\tilde{x_B})$} \Comment{$\tilde{B} \in \C^{N/4}$}
    \State \bluetext{$\tilde{C} \leftarrow FFT(\tilde{x_C})$} \Comment{$\tilde{C} \in \C^{N/4}$} 
    \For{$k \in [0, 1, \dots N/4 - 1]$}
        \State $a + a'i, z + z'i, \bluetext{\tilde{b} + \tilde{b'}i, \tilde{c} + \tilde{c'}i} \leftarrow A_k, A_{k + N/4}, \bluetext{\tilde{B}_k, \tilde{C}_k}$
        \State $r + r'i \leftarrow \omega_N^k$
        \State $y_k \leftarrow [a + (r(\bluetext{\tilde{b}}) + r'(\bluetext{-\tilde{c'}}))] + [a' + (r(\bluetext{\tilde{b'}}) + r'(\bluetext{\tilde{c}}))]i$  \label{sr:2.0:after_add:yk}
        \State $y_{k + N/4} \leftarrow [z + (r'(\bluetext{\tilde{b}}) + r(\bluetext{\tilde{c'}}))] + [z' + (r'(\bluetext{\tilde{b'}}) + r(\bluetext{-\tilde{c}}))]i$
        \State $y_{k + N/2} \leftarrow [a - (r(\bluetext{\tilde{b}}) + r'(\bluetext{-\tilde{c'}}))] + [a' - (r(\bluetext{\tilde{b'}}) + r'(\bluetext{\tilde{c}}))]i$
        \State $y_{k + 3N/4} \leftarrow [z - (r'(\bluetext{\tilde{b}}) + r(\bluetext{\tilde{c'}}))] + [z' - (r'(\bluetext{\tilde{b'}}) + r(\bluetext{-\tilde{c}}))]i$  \label{sr:2.0:after_add:yk+3N/4}
    \EndFor
\EndProcedure

\end{algorithmic}
\end{algorithm}

In this form, each layer of our recursive algorithm first does some additions and subtractions on the input, then makes recursive calls to the FFT function, then finally manipulates the results of those calls. We can now reorder the operations in the algorithm, so that it first does the additions and subtractions in lines \ref{sr:2.0:after_add:B} and \ref{sr:2.0:after_add:C} in \emph{all} of the recursive calls before performing lines \ref{sr:2.0:after_add:yk} - \ref{sr:2.0:after_add:yk+3N/4} in \emph{any} of the recursive calls. Our key insight is that if we combine all of these additions and subtractions together and do them simultaneously, there is a faster way to compute that resulting transformation by making use of our faster algorithm for the WHT.

To explain this idea more precisely, it will help to look again at the matrix form of the current algorithm. Namely, $F_N x = (F_N\Pi_N^{-1}) (\Pi_Nx)$ can be factored as the product

$$\underbrace{\begin{bmatrix} I_{N/4} &  & R_F & iR'_F \\  & I_{N/4} & R'_F & -iR_F \\  I_{N/4} &  & -R_F & -iR'_F \\  & I_{N/4} & -R'_F & iR_F \end{bmatrix}}_{TW_N}
\unevenmatrix{F_{N/2}}{F_{N/4}}{}{}{F_{N/4}}
\underbrace{\unevenmatrix{I_{N/2}}{I_{N/4}}{I_{N/4}}{I_{N/4}}{-I_{N/4}}}_{HL_N} \begin{bmatrix} x[2i]_{i = 0}^{N / 2 - 1}  \\ x[4i + 1]_{i = 0}^{N / 4 - 1}  \\ x[4i - 1]_{i = 0}^{N / 4 - 1}\end{bmatrix}$$
where $R_F$ and $R'_F$ are diagonal matrices of real numbers with $R_F[j,j] + iR'_F[j,j] = \omega_N^{j}$. \footnote{Notice that the middle two matrices in this factorization (the matrix with recursive calls and $HL$) commute. This is exactly the observation made earlier about the linearity of the FFT.}

Now we can see that in each level of our recursion we get a ``twiddle matrix" $TW_N$ to the left of our recursive calls and a ``WHT-looking matrix" $HL_N$ to the right of our recursive calls. Hence, when computing Algorithm \ref{sr:alg:2.0}, we effectively multiply $x$ on the left by a number of $HL$ matrices (corresponding to all the additions and subtractions of lines \ref{sr:2.0:after_add:B} and \ref{sr:2.0:after_add:C} in all the recursive calls) followed by a number of of $TW$ matrices (corresponding to all the manipulations of lines \ref{sr:2.0:after_add:yk} - \ref{sr:2.0:after_add:yk+3N/4} in all the recursive calls).

Thusfar, we have only rearranged computations of the split-radix algorithm, and one can verify that our current algorithm still has an identical operation count as the normal split-radix. Our improvement now comes from a new approach for simultaneously multiplying the input by all of the $HL$ matrices. Let $H'_N \in \{-1, 0, 1\}^{N \times N}$ be the linear transform corresponding to applying all the $HL$ matrices to the input $x$ of length $N$. $H'_N$ is thus recursively defined with base cases $H'_1 = \begin{bmatrix} 1 \end{bmatrix}$ and $H'_2 = \begin{bmatrix} 1 & \\ & 1 \end{bmatrix}$ and the recursion

$$H'_N = \unevenmatrix{H'_{N/2}}{H'_{N/4}}{H'_{N/4}}{H'_{N/4}}{-H'_{N/4}}.$$

In particular, from this recursive definition, we observe that $H'_N$ can be written as a permutation of a direct sum of WHT matrices (see Appendix~\ref{appendix:section:properties_hprime} for a proof), giving our reduction of the DFT to the WHT via this family of $H'$ matrices. We can thus apply our new faster algorithm for the WHT (which we describe next) in order to get an improved operation count for $H'_N$ and thus for the entire DFT\footnote{Applying the folklore WHT algorithm instead will simply give the exact same operation count as the SR FFT algorithm.}.

To summarize, after computing the transformation $H'$ on the input, we recursively call the proper twiddle matrices on the proper subsets of the input. The final result is:

\begin{algorithm}[H]
\caption{Final Split-Radix FFT (with the full ``Walsh-Hadamard Uprooting" trick)}
\label{alg:final:SR}
\begin{algorithmic}[1]

\Procedure{\bluetext{TW}}{$x$} $\rightarrow y_{0, 1, \dots N - 1}$ \Comment{$x \in \C^N$}
    \State $A \leftarrow \bluetext{TW}(x[2j]_{j = 0}^{N/2 - 1})$ \Comment{$A \in \C^{N/2}$}
    \State $\tilde{B} \leftarrow \bluetext{TW(x[4j + 1]_{j = 0}^{N/4 - 1})}$ \Comment{$B \in \C^{N/4}$} 
    \State $\tilde{C} \leftarrow \bluetext{TW(x[4j - 1]_{j = 0}^{N/4 - 1})}$ \Comment{$C \in \C^{N/4}$} 
    \For{$k \in [0, 1, \dots N/4 - 1]$}
        \State $a + a'i, z + z'i, \tilde{b} + \tilde{b'}i, \tilde{c} + \tilde{c'}i \leftarrow A_k, A_{k + N/4}, \tilde{B}_k, \tilde{C}_k$
        \State $r + r'i \leftarrow \omega_N^k$
        \State $y_k \leftarrow [a + (r(\tilde{b}) + r'(-\tilde{c'}))] + [a' + (r(\tilde{b'}) + r'(\tilde{c}))]i$
        \State $y_{k + N/4} \leftarrow [z + (r'(\tilde{b}) + r(\tilde{c'}))] + [z' + (r'(\tilde{b'}) + r(-\tilde{c}))]i$
        \State $y_{k + N/2} \leftarrow [a - (r(\tilde{b}) + r'(-\tilde{c'}))] + [a' - (r(\tilde{b'}) + r'(\tilde{c}))]i$
        \State $y_{k + 3N/4} \leftarrow [z - (r'(\tilde{b}) + r(\tilde{c'}))] + [z' - (r'(\tilde{b'}) + r(-\tilde{c}))]i$
    \EndFor
\EndProcedure
\bluetext{
\Procedure{FFT}{$x$} $\rightarrow y_{0, 1, \dots N - 1}$  \Comment{$x \in \C^N$}
    \State $y \leftarrow TW(H'(x))$
\EndProcedure
}

\end{algorithmic}
\end{algorithm}

For our overall running time analysis, we have $T_{FFT}(N) = T_{H'}(N) + T_{TW}(N)$.\footnote{Here we use the notation that $T_A(N)$ is the operation count for applying algorithm $A$ to an input vector of length $N$.} The straightforward algorithm for $H'$ would yield $T_{H'}(N) = \frac23 N \log N$, but since $H'$ is a direct sum of WHT matrices and (below) we improve the leading constant for computing the WHT by a factor of $\frac{23}{24}$, we ultimately improve the leading constant for computing $H'$ from $\frac23$ to $\frac23 \cdot \frac{23}{24} = \frac{23}{36}$. In section \ref{section:hprime-runtime}, we perform this analysis more carefully and show that $T_{H'}(N) < \frac{23}{36} N \log N + \frac{25}{12}N + o(N^{0.8})$. All that remains is to calculate $T_{TW}(N)$.

In order to implement $TW$ using as few operations as possible, we compute each of $r\tilde{b} - r'\tilde{c'}, r\tilde{b'} + r'\tilde{c}, r'\tilde{b} + r\tilde{c'}, r' \tilde{b'} - r\tilde{c}$ exactly once, and then use each of the results twice. Overall, this and the operations to combine with $a, a', z, z'$ total to 20 real operations on vectors of length $N/4$ per iteration of our loop, or $20(N/4) = 5N$ real operations total for one call of $TW(x)$ for length $N$ input $x$. This gives us the recurrence $T_{TW}(N) = 5N + T_{TW}(N/2) + 2T_{TW}(N/4) = \frac{10}{3} N \log N + O(N)$. This is the same operation count as SR achieves for the corresponding part of its calculations. Combining with $T_{H'}(N)$ gives $T_{FFT}(N) = \frac{143}{36} N \log N + O(N) \approx 3.972 N \log N + O(N)$, an improvement over the $4N \log N$ operation count of split-radix. In Section~\ref{section:msr_to_hufft} below, we apply the same ideas to the MSR algorithm to improve its operation count instead. See also Corollary~\ref{corollary:hufft_runtime_tot} for the lower order terms of the operation count.

\subsection{Faster Walsh-Hadamard Transform} \label{subsec:introWHT}

The starting point for our new algorithm for the WHT is the following decomposition of the matrix $H_8$ as the sum of a low-rank matrix and a sparse matrix, which was introduced by \cite[{Lemma 4.6}]{alman2021kronecker}:
{\fontsize{8pt}{3.6pt}
$${\text{{\normalsize $H_8$}}} = \begin{bmatrix}
1 & 1 & 1 & 1 & 1 & 1 & 1 & 1 \\
1 & -1 & 1 & -1 & 1 & -1 & 1 & -1 \\
1 & 1 & -1 & -1 & 1 & 1 & -1 & -1 \\
1 & -1 & -1 & 1 & 1 & -1 & -1 & 1 \\
1 & 1 & 1 & 1 & -1 & -1 & -1 & -1 \\
1 & -1 & 1 & -1 & -1 & 1 & -1 & 1 \\
1 & 1 & -1 & -1 & -1 & -1 & 1 & 1 \\
1 & -1 & -1 & 1 & -1 & 1 & 1 & -1 \\
\end{bmatrix} = \underbrace{\begin{bmatrix}
1 & 1 & 1 & 1 & 1 & 1 & 1 & 1 \\
1 & -1 & -1 & -1 & -1 & -1 & -1 & -1 \\
1 & -1 & -1 & -1 & -1 & -1 & -1 & -1 \\
1 & -1 & -1 & -1 & -1 & -1 & -1 & -1 \\
1 & -1 & -1 & -1 & -1 & -1 & -1 & -1 \\
1 & -1 & -1 & -1 & -1 & -1 & -1 & -1 \\
1 & -1 & -1 & -1 & -1 & -1 & -1 & -1 \\
1 & -1 & -1 & -1 & -1 & -1 & -1 & -1 \\
\end{bmatrix}}_{\text{{\normalsize low rank}}} + \underbrace{\begin{bmatrix}
0 & 0 & 0 & 0 & 0 & 0 & 0 & 0 \\
0 & 0 & 2 & 0 & 2 & 0 & 2 & 0 \\
0 & 2 & 0 & 0 & 2 & 2 & 0 & 0 \\
0 & 0 & 0 & 2 & 2 & 0 & 0 & 2 \\
0 & 2 & 2 & 2 & 0 & 0 & 0 & 0 \\
0 & 0 & 2 & 0 & 0 & 2 & 0 & 2 \\
0 & 2 & 0 & 0 & 0 & 0 & 2 & 2 \\
0 & 0 & 0 & 2 & 0 & 2 & 2 & 0 \\
\end{bmatrix}}_{\text{{\normalsize sparse}}}$$
}

Intuitively, we will use this decomposition because low rank and sparse matrices can be multiplied by a vector using few operations. Suppose we wanted to multiply $H_8$ times a length $8$ input $[a, b, c, d, e, f, g, h]$. For the low rank matrix we compute $tot = (b + c + d + e + f + g + h)$ one time and then simply compute $a + tot$ (the desired first output entry) and $a - tot$ (the desired output for all 7 other entries), for a total of 8 operations. For the sparse matrix, we can perform only 2 additions then double the result for each of the 7 nonzero rows, for a total of 21 operations. Including the 8 additions to add the results of these two matrices together, this would give an operation count of 37 for computing $H_8$.

This is a larger operation count than we are aiming for; the fast Walsh-Hadamard transform uses only $24$ operations. A key insight is that while computing these two matrices separately and adding the results is quite costly, we can reuse computations between the two. This allows us to save on computing each matrix and also on combining their results. For example, in the process of computing $tot$ we start by adding $b + c$, a value which is also used to compute the 5th row of our sparse matrix, so we can do that addition only once across the two matrices. Using observations like this, we get down to an operation count of 29.

This is still worse than the baseline of 24 operations for computing $H_8$. Our last main observation is that 7 out of these 29 operations are multiplying each of the inputs $b$ through $h$ by 2. To reduce the cost of these multiplications, we take a hint from a key idea behind the MSR algorithm for the DFT and ask: what if those inputs, which are the outputs from recursive calls, were already scaled up by a factor of 2? If instead of $[a, b, c, d, e, f, g, h]$ we received $[a, 2b, 2c, 2d, 2e, 2f, 2g, 2h]$ as input, now we can eliminate seven ``multiply by 2" operations and only divide one time on the sum $2b + 2c + 2d + 2e + 2f + 2g + 2h$ to get $tot$. In total, this would reduce the number of operations by 6, down from 29 to 23.

To achieve this, we observe that $2H(x) = H(2x)$ by the linearity of the WHT, so we can ``push down" the issue of multiplying by 2 into the recursive call. When we've reached the base case of our recursion, all of the ``multiply by 2" operations that have been pushed down finally accumulate and we multiply one time by a power of 2, thus turning many ``multiply by 2" operations into a few ``divide by 2" operations and a single ``multiply by $2^k$ for some $k$" operation.
This ultimately gives us the following algorithm based on using $H_8$ as our recursive step:

\begin{algorithm}[H]
\caption{Fast WHT from Non-rigidity of $H_8$}
\label{fast:hadamard:8x8}
\begin{algorithmic}[1]

\Procedure{$H_8$}{$x, k$} $\rightarrow y$ \Comment{$k \in \N$} \Comment{This algorithm returns $2^k H(x)$}
    \If{$N \leq 4$} Scale the inputs by $2^k$, use the folklore $N \cdot \log N$ operation WHT, \text{and \textbf{end procedure}}
    \EndIf
    \State $a \leftarrow H(x[j]_{j = 0}^{N/8 - 1}, ~k)$ \label{line:had8line8} \Comment{$a, b, c, d, e, f, g, h$ are length $N/8$ vectors.}
    \State $b \leftarrow H(x[j]_{j = N/8}^{N/4 - 1}, ~k + 1)$
    \State $c \leftarrow H(x[j]_{j = N/4}^{3N/8 - 1}, ~k + 1)$
    \State $d \leftarrow H(x[j]_{j = 3N/8}^{N/2 - 1}, ~k + 1)$
    \State $e \leftarrow H(x[j]_{j = N/2}^{5N/8 - 1}, ~k + 1)$
    \State $f \leftarrow H(x[j]_{j = 5N/8}^{3N/4 - 1}, ~k + 1)$
    \State $g \leftarrow H(x[j]_{j = 3N/4}^{7N/8 - 1}, ~k + 1)$
    \State $h \leftarrow H(x[j]_{j = 7N/8}^{N - 1}, ~k + 1)$ \label{line:had8line15}
    \State $B_1 \leftarrow b + c$\label{line:had8line16} \Comment{Addition done entry-wise}
    \State $B_2 \leftarrow d + h$
    \State $B_3 \leftarrow f + g$
    \State $tot \leftarrow B_1 + B_2 + B_3 + e$ \Comment{So $tot = b + c + d + e + f + g + h$}
    \State $tot \leftarrow tot / 2$ \Comment{Scalar division, done over all entries}
    \State $diff \leftarrow a - tot$
    \State $D \leftarrow diff + d$
    \State $E \leftarrow diff + e$
    \State $H \leftarrow diff + h$
    \State $y[j]_{j = 0}^{N/8 - 1} \leftarrow a + tot$
    \State $y[j]_{j = N/8}^{N/4 - 1} \leftarrow E + c + g$
    \State $y[j]_{j = N/4}^{3N/8 - 1} \leftarrow E + b + f$
    \State $y[j]_{j = 3N/8}^{N/2 - 1} \leftarrow E + B_2$
    \State $y[j]_{j = N/2}^{5N/8 - 1} \leftarrow D + B_1$
    \State $y[j]_{j = 5N/8}^{3N/4 - 1} \leftarrow H + c + f$
    \State $y[j]_{j = 3N/4}^{7N/8 - 1} \leftarrow H + b + g$
    \State $y[j]_{j = 7N/8}^{N - 1} \leftarrow D + B_3$\label{line:had8line32}
\EndProcedure

\end{algorithmic}
\end{algorithm}

See Section~\ref{hadamard:section} where we explain the intuition and derivation of Algorithm~\ref{fast:hadamard:8x8} in more detail. In that section, we calculate that the operation count of this algorithm is $\frac{23 }{24}N \log N + \frac{N}{24}(\log N \Mod 3) + N - 1$. The leading constant $\frac{23}{24}$ comes directly from our improvement from $24$ to $23$ operations for computing $H_8$.

\newpage

\section{Preliminaries} \label{section:prelims}

\subsection{Notation and Terminology}

For a positive integer $N$, we write $[N] := \{0,1,2,\ldots,N-1\}$. For a length-$N$ vector $x$, and an index $j \in [N]$, we write $x_j$ for the $j$th entry of $x$.

In this paper we work with the standard arithmetic circuit with fan-in 2 model, where we measure the complexity of an algorithm by the number of field operations it calls for. In the case of the WHT, since it can be computed over any field we simply count the number of field additions and field scalar multiplications. As discussed above, we measure out DFT algorithms by the number of \emph{real} arithmetic operations they perform. As is standard, the real operations of multiplying 1, -1, and 0 are free.

The following standard counts for the number of real operations used to compute complex operations will appear throughout our analyses. For two arbitrary complex numbers $a + a'i$ and $b + b'i$ (for $a,a',b,b' \in \R$), the product $(a + a'i)(b + b'i) = ab - a'b' + (ab' + a'b) i$ can be computed using the 6 real operations shown. If exactly one of $a, a', b, b'$ are equal to $1$, then the product can be performed using only 4 real operations (since two of the multiplications are now free). For a real number $b$, the product $b(a + a'i)$ can be computed with 2 real operations. Adding two arbitrary complex numbers together can also be done with 2 real operations.

In this paper we use the term ``transformation" to refer to a linear transformation on an input, ``matrix" to refer to the matrix representation of a transformation which performs the transformation when multiplied on the left of its input as a column vector, and ``algorithm" to refer to a specific procedure for computing a transformation. We refer directly to classes of matrices without subscript, for example the ``$H'$ matrices," unless we are referring to one of a specific size, in which case we write $H'_N$ to denote the $N \times N$ matrix. For example, we refer to the algorithm that generally computes the $H'$ transformation for all sizes as $H'$. This convention will also apply to the WHT, with $H$ referring generally to the WHT and $H_N$ referring to the WHT of a specific size.

For an algorithm $M$, we write $T_M(N)$ to denote the operation count of $M$ on an input of size $N$.

For a vector $x$, function $f$, and values $a$ and $b$ with $a < b$, we will write $x[f(j)]_{j = a}^b$ to notate the vector of length $b - a + 1$ whose entries are the entries of vector $x$ at indices $f(a), f(a + 1), \dots f(b)$ \textbf{in order}. We will commonly use this notation, for a length $N$ column vector $x$, in the following three ways:

$$x[2j]_{j = 0}^{N/2 - 1} = \begin{bmatrix} x_0 \\ x_2 \\ x_4 \\ \vdots \\ x_{N - 2} \end{bmatrix},\quad x[4j + 1]_{j = 0}^{N/4 - 1} = \begin{bmatrix} x_1 \\ x_5 \\ x_9 \\ \vdots \\ x_{N - 3} \end{bmatrix},\quad x[4j - 1]_{j = 0}^{N/4 - 1} = \begin{bmatrix} x_{N - 1} \\ x_3 \\ x_7 \\ \vdots \\ x_{N - 5} \end{bmatrix}$$
for the values in $x$ taken at indices which are $0 \Mod 2$, $1 \Mod 4$, and $3 \Mod 4$, respectively, and where $x[-1] = x[N - 1]$ for a zero indexed vector $x$ of length $N$.

We define the simple family of permutation matrices $\Pi_N \in \{0, 1\}^{N \times N}$ to be such that for a length $N$ vector $x$,
$$\Pi_N x = \begin{bmatrix} x[2j]_{j = 0}^{N/2 - 1} \vspace{2pt}  \\  \hline  x[4j + 1]_{j = 0}^{N/4 - 1} \vspace{2pt} \\  \hline  x[4j - 1]_{j = 0}^{N/4 - 1} \end{bmatrix}.$$
This permutation matrix $\Pi_N$ will only appear in the analysis of our DFT algorithms; it rearranges the matrices and vectors involved to be cleaner to write and read. The $\Pi_N$ matrix will only be implemented \emph{implicitly} in any actual algorithm by the way the algorithms access the indices of their input, and thus does not add any computational complexity. (In principle, any permutation matrix can be applied for free in the arithmetic circuit model, although in our case, $\Pi_N$ is a particularly simple permutation, and is the same one used in the prior SR and MSR algorithms to define a `butterfly network' structure.)

For clarity, we sometimes write block matrices such as:

$$\unevenmatrix{F_{N/2}}{F_{N/4}}{}{}{F_{N/4}}$$
to indicate an $N \times N$ matrix where the entire upper-left quadrant of the matrix is the $N/2 \times N/2$ matrix $F_{N/2}$ and the lower-right quadrant writes the $N/4 \times N/4$ matrix $F_{N/4}$ twice along the diagonal. All other entries are zero.

\subsection{The WHT}

The WHT $H_N$ on an input $x$ of length $N$ for a power of two $N$ is defined as: if we index entries of length-$N$ vectors by $j, k\in \{0,1\}^{\log N}$, then the $k$th output entry is $y_k = \sum_j (-1)^{\langle j,k \rangle} x_j$. Another way to define it is in terms of the Kronecker product $\otimes$, which is defined as

$$M \otimes N = \begin{bmatrix} m_{1, 1}N & \dots & m_{1, k}N \\ \vdots & \ddots & \vdots \\ m_{j, 1}N & \dots & m_{j, k}N \end{bmatrix} \quad \text{for} \quad M = \begin{bmatrix} m_{1, 1} & \dots & m_{1, k} \\ \vdots & \ddots & \vdots \\ m_{j, 1} & \dots & m_{j, k} \end{bmatrix}.$$ 

Using the Kronecker product, $H_N$ is recursively defined as $H_2 = \begin{bmatrix} 1 & 1 \\ 1 & -1 \end{bmatrix}$ and  $H_N = \begin{bmatrix} 1 & 1 \\ 1 & -1 \end{bmatrix} \otimes H_{N/2}$, and the goal of the WHT is to compute $y = H_N x$.

\subsection{The DFT}

The Discrete Fourier Transform of a vector $x \in \C^N$ is the vector $y \in \C^N$ given by
$y_k = \sum_{j = 0}^{N - 1} \omega_{N}^{jk}x_j$ for $k \in [N]$. Here, $\omega_{N} = e^{-2 \pi i / N}$ is the $N$th primitive root of unity (corresponding to the unit vector at an angle of $-2 \pi / N$ radians on the polar complex plane). A useful property that will be used repeatedly is that $\omega_{N}^M = \omega_{N/M}$ if $M$ divides $N$.

\section{Re-deriving Prior FFT Algorithms} \label{sec:rederive}

We begin by re-deriving the FFT algorithms from prior work which we will build off of. Readers familiar with the SR and MSR algorithms may wish to skip to the next section.

\subsection{The Split-Radix (SR) Algorithm}\label{subsec:SR} 

We begin with the SR algorithm~\cite{yavne1968economical}. Grouping terms in a convenient way (and again using the notational convenience that $x_{-1} = x_{N - 1}$), we can write the DFT as
\begin{align*}
    y_k &= \sum_{n_2 = 0}^{N/2 - 1} \omega_{N}^{2n_2k}x_{2n_2} + \sum_{n_4 = 0}^{N/4 - 1} \omega_{N}^{(4n_4 + 1)k}x_{4n_4 + 1} + \sum_{n_4 = 0}^{N/4 - 1} \omega_{N}^{(4n_4 - 1)k}x_{4n_4  - 1} \\
    &= \sum_{n_2 = 0}^{N/2 - 1} \omega_{N/2}^{n_2k}x_{2n_2} + \sum_{n_4 = 0}^{N/4 - 1} \omega_{N/4}^{n_4k}\omega_{N}^{k}x_{4n_4 + 1} + \sum_{n_4 = 0}^{N/4 - 1} \omega_{N/4}^{n_4k}\omega_{N}^{-k}x_{4n_4 - 1} \\
    &= \sum_{n_2 = 0}^{N/2 - 1} \omega_{N/2}^{n_2k}x_{2n_2} + \omega_N^k \sum_{n_4 = 0}^{N/4 - 1} \omega_{N/4}^{n_4k}x_{4n_4 + 1} + \omega_N^{-k} \sum_{n_4 = 0}^{N/4 - 1} \omega_{N/4}^{n_4k}x_{4n_4 - 1}.
\end{align*}

To turn this into the split-radix algorithm, a simple algorithm that was the state of the art for approximately 40 years, we compute that for $k \in [N / 4]$,
\begin{align*}
    y_{k + N/4} &= \sum_{n_2 = 0}^{N/2 - 1} \omega_{N/2}^{n_2(k + N/4)}x_{2n_2} + \omega_N^{k + N/4} \sum_{n_4 = 0}^{N/4 - 1} \omega_{N/4}^{n_4(k + N/4)}x_{4n_4 + 1} + \omega_N^{-(k + N/4)} \sum_{n_4 = 0}^{N/4 - 1} \omega_{N/4}^{n_4(k + N/4)}x_{4n_4 - 1} \\
    &= \sum_{n_2 = 0}^{N/2 - 1} \omega_{N/2}^{n_2(k + N/4)}x_{2n_2} + \omega_N^{k}\omega_4 \sum_{n_4 = 0}^{N/4 - 1} \omega_{N/4}^{n_4k} \omega_{N/4}^{n_4N/4}x_{4n_4 + 1} + \omega_N^{-k}\omega_4^{-1} \sum_{n_4 = 0}^{N/4 - 1} \omega_{N/4}^{n_4k} \omega_{N/4}^{n_4N/4}x_{4n_4 - 1} \\
    &= \sum_{n_2 = 0}^{N/2 - 1} \omega_{N/2}^{n_2(k + N/4)}x_{2n_2} - i \omega_N^{k} \sum_{n_4 = 0}^{N/4 - 1} \omega_{N/4}^{n_4k}x_{4n_4 + 1} + i \omega_N^{-k} \sum_{n_4 = 0}^{N/4 - 1} \omega_{N/4}^{n_4k} x_{4n_4 - 1}, \\
\end{align*}
and we can write similar expressions for $y_{k + N/2}$ and $y_{k + 3N/4}$. This is typically referred to as the ``conjugate pair'' variant of the SR algorithm, and it is the one we use throughout this paper. We now observe that our three sums contain three FFTs of size $N/2$, $N/4$, and $N/4$ applied to inputs $x[2j]_{j = 0}^{N/2 - 1}$, $x[4j + 1]_{j = 0}^{N/4 - 1}$, and $x[4j - 1]_{j = 0}^{N/4 - 1}$ (where the output of the size $N/2$ FFT is shared between $y_k$ and $y_{k + N/4}$). This gives us algorithm \ref{alg:split_radix}.

\begin{algorithm}[H] 
\caption{Split-Radix FFT}
\label{alg:split_radix}
\begin{algorithmic}[1]

\Procedure{FFT}{$x$} $\rightarrow y_{0, 1, \dots N - 1}$ \Comment{$x \in \C^N$}
    \State $A \leftarrow FFT(x[2j]_{j = 0}^{N/2 - 1})$ \Comment{$A \in \C^{N/2}$}
    \State $B \leftarrow FFT(x[4j + 1]_{j = 0}^{N/4 - 1})$ \Comment{$B \in \C^{N/4}$}
    \State $C \leftarrow FFT(x[4j - 1]_{j = 0}^{N/4 - 1})$ \Comment{$C \in \C^{N/4}$}
    \For{$k \in [0, 1, \dots N/4 - 1]$}
        \State $y_k \leftarrow A_k + \omega_N^{k}B_k + \omega_N^{-k}C_k$ \label{sr:line:basic}
        \State $y_{k + N/4} \leftarrow A_{k + N/4} - i \omega_N^{k}B_k + i\omega_N^{-k}C_k$
        \State $y_{k + N/2} \leftarrow A_k - (\omega_N^{k}B_k + \omega_N^{-k}C_k)$
        \State $y_{k + 3N/4} \leftarrow A_{k + N/4} + i\omega_N^{k}B_k - i\omega_N^{-k}C_k$
    \EndFor
\EndProcedure

\end{algorithmic}
\end{algorithm}

An equivalent definition of the DFT is as matrix multiplication $F_Nx$ where $F_N[j, k] = \omega_N^{jk}$. To picture this SR algorithm in matrix form, define $D_{N} \in \C^{N/4 \times N/4}$ to be the diagonal matrix such that $D_{N, N}[j, j] = \omega_N^j$ for $j \in \{0, \dots N/4 - 1\}$. Similarly define $D'_{N} \in \C^{N/4 \times N/4}$ where $D'_{N, N}[j, j] = \omega_N^{-j}$ for $j \in \{0, \dots N/4 - 1\}$.

This combined with the permutation matrix $\Pi_N$ lets us cleanly write the split-radix algorithm as:

$$F(x) = F_N\Pi_N^{-1} \Pi_Nx = \begin{bmatrix} I_{N/4} &  & D_N & D'_N \\  & I_{N/4} & -iD_N & iD'_N \\ I_{N/4} &  & -D_N & -D'_N \\  & I_{N/4} & iD_N & -iD'_{N} \end{bmatrix} \unevenmatrix{F_{N/2}}{F_{N/4}}{}{}{F_{N/4}}\begin{bmatrix} x[2j]_{j = 0}^{N / 2 - 1}  \\ x[4j + 1]_{j = 0}^{N / 4 - 1}  \\ x[4j - 1]_{j = 0}^{N / 4 - 1}\end{bmatrix}.$$

Each iteration of the ``for" loop can be computed in 24 entry-wise operations on vectors of length $N/4$, resulting in an operation count per call of the $FFT$ function of $24(N/4) = 6N$ operations, and an overall operation count of $T_{FFT}(N) = T_{FFT}(N/2) + 2T_{FFT}(N/4) + 6N = 4N \log N$.

\subsection{The Modified Split-Radix (MSR) Algorithm} \label{subsec:msr}

We now derive the MSR algorithm which improved the leading constant of SR from $4$ to $34/9 = 3.777\ldots$. As previously discussed, a number of different prior works presented and analyzed different, equivalent versions of this algorithm~\cite{buskirk2004software,lundy2007new,johnson2006modified,bernstein2007tangent}. We give here a presentation of the algorithm by Johnson and Frigo~\cite{johnson2006modified} which will be notationally easiest for us to work with.

The Johnson Frigo version of the MSR algorithm modifies the (conjugate pair) SR algorithm and turns it into four similar algorithms that call each other recursively. We write out the full algorithm below, but first discuss at a high level the intuition behind the algorithm and why it improves on the operation count of SR. The key observation they make is that in one line of the SR algorithm, the two complex numbers $$\omega_N^k = \cos(2\pi k / N) - i \sin(2 \pi k / N) ~\text{ and }~ \omega_N^{-k} = \cos(2\pi k / N) + i \sin(2 \pi k / N)$$ have real and complex parts of the same magnitude. Without loss of generality, suppose $\cos(2 \pi k / N) \neq 0$ (if it does equal zero, the following works on the imaginary part instead). Line \ref{sr:line:basic} of algorithm \ref{alg:split_radix} currently computes
$$A_k + (\cos(2\pi k / N) - i \sin(2 \pi k / N)) B_k + (\cos(2\pi k / N) + i \sin(2 \pi k / N))C_k\footnote{Here we just write out line \ref{sr:line:basic} of algorithm \ref{alg:split_radix} by expanding $\omega_N^k$}.$$
However, instead of this, we can compute
$$A_k + \cos(2 \pi k / N) \left[ (1 - i \tan(2 \pi k / N)) B_k + (1 + i \tan(2 \pi k / N)) C_k \right]$$
and take advantage of the fact that multiplying by a complex number with real or imaginary part 1 can be done with 4 real operations instead of 6. If we do this for all four lines in the split-radix algorithm, the total number of real operations doesn't change (we save 2 operations for each multiplication for a total of 4, but add 2 operations to multiply $\cos(2 \pi k / N)$ by both the sum and difference of the products, cancelling out our savings). That said, if somehow we can do the ``multiply by $\cos(2 \pi k / N)$'' operations fewer times, perhaps by somehow combining multiple scalar multiplications together into one scalar multiplication, we can take better advantage of the fact that multiplying by a complex number with real or imaginary part 1 costs fewer operations.

To this end, the MSR algorithm defines $k_{(4)} := k \Mod (N/4)$ and the scalar

\begin{equation*}
    s_{N, k} =
    \begin{cases}
      1, & \text{if} \ N \leq 4 \\
      s_{N/4, k_{(4)}} \cos(2 \pi k_{(4)} / N), & \text{if}\ k_{(4)} \leq N/8 \\
      s_{N/4, k_{(4)}} \sin(2 \pi k_{(4)} / N), & \text{if}\ k_{(4)} > N/8 \\
    \end{cases}
\end{equation*}

The reason both sine and cosine are used here is to avoid dividing by zero when one or the other is zero. These numbers satisfy the useful property $s_{N, k + N/4} = s_{N, k}$.

Now we have that $\frac{s_{N/4, k}}{s_{N, k}}$ is always $\frac{1}{\cos(2 \pi k_{(4)} / N)}$ or $\frac{1}{\sin(2 \pi k_{(4)} / N)}$ and $t_N^k = \omega_N^{k} \left(\frac{s_{N/4, k}}{s_{N, k}} \right)$ is always $\pm 1 \pm i\tan(2 \pi k_{(4)} / N)$ or $\pm \cot(2 \pi k_{(4)} / N) \pm i$. These will give us the desired ``cheaper'' multiplications by a complex number with real or imaginary part $\pm 1$.

Let $DFT(x)$ be the vector representing the DFT on a length $N$ input $x$. They define four functions, $F, FS, FS2, FS4$, which all output the DFT of their input, except that each entry is scaled by a different real number:

\begin{itemize}
    \item $y \leftarrow FFT(x)$, where $y_k = DFT(x)_k$
    \item $y \leftarrow FS(x)$, where $y_k = DFT(x)_k / s_{N, k}$
    \item $y \leftarrow FS2(x)$, where $y_k = DFT(x)_k / s_{2N, k}$
    \item $y \leftarrow FS4(x)$, where $y_k = DFT(x)_k / s_{4N, k}$.
\end{itemize}

And at a high level, each of these is computed as follows:

\begin{itemize}
    \item $FFT(x)$
    \begin{itemize}
        \item First make the recursive calls
        
        $A \leftarrow FFT(x[2j]_{j = 0}^{N/2 - 1})$
        
        $B \leftarrow FS(x[4j + 1]_{j = 0}^{N/4 - 1})$
        
        $C \leftarrow FS(x[4j - 1]_{j = 0}^{N/4 - 1})$
        
        \item Entries $k$ of $B$ and $C$ are scaled down by a factor of $s_{N/4, k}$ so instead of multiplying the results by $\omega_N^{k}$ as in split-radix we multiply them by $\omega_N^{k}s_{N/4, k}$ before combining them with $A$.
    \end{itemize}
    \item $FS(x)$
    \begin{itemize}
        \item First make the recursive calls
        
        $A \leftarrow FS2(x[2j]_{j = 0}^{N/2 - 1})$
        
        $B \leftarrow FS(x[4j + 1]_{j = 0}^{N/4 - 1})$
        
        $C \leftarrow FS(x[4j - 1]_{j = 0}^{N/4 - 1})$
        
        \item Entries $k$ of $B$ and $C$ are scaled by a factor of $1/s_{N/4, k}$ and entry $k$ of $A$ is scaled by a factor of $1/s_{N, k}$ (it is a length $N/2$ input and on a length $N$ input $FS2$ scales by a factor of $1/s_{2N, k}$, hence $1/s_{2(N/2), k} = 1/s_{N, k}$). So instead of multiplying the entries of $B$ and $C$ by $\omega_N^{k}$ we multiply them by $t_N^k = \omega_N^{k}\left(\frac{s_{N/4, k}}{s_{N, k}} \right)$ before combining them with $A$ so that the entire result is scaled by a factor of $1/s_{N, k}$. Furthermore, since $s_{N, k + N/4} = s_{N, k}$, for $k \in [N/4]$ $\frac{s_{N/4, k}}{s_{N, k}} = \frac{s_{N/4, k}}{s_{N, k + N/4}} = \frac{s_{N/4, k}}{s_{N, k + N/2}} = \frac{s_{N/4, k}}{s_{N, k + 3N/4}}$, allowing us to reuse computations.
        
        Since $t_N^k$ is $\pm 1 \pm i\tan(2 \pi k_{(4)} / N)$ or $\pm \cot(2 \pi k_{(4)} / N) \pm i$, either the real or imaginary part is $\pm 1$ and in total we use only $5N$ operations to combine the results of our recursive calls instead of the $6N$ operations in split-radix.
    \end{itemize}
    \item $FS2(x)$
    \begin{itemize}
        \item First make the recursive calls
        
        $A \leftarrow FS4(x[2j]_{j = 0}^{N/2 - 1})$
        
        $B \leftarrow FS(x[4j + 1]_{j = 0}^{N/4 - 1})$
        
        $C \leftarrow FS(x[4j - 1]_{j = 0}^{N/4 - 1})$
        
        \item By similar reasoning we multiply the entries of $B$ and $C$ by $t_N^k \left(\frac{s_{N, k}}{s_{2N, k}} \right)$ or $t_N^k \left(\frac{s_{N, k}}{s_{2N, k + N/k}} \right)$ where now we need to a bit more extra work than before because $s_{2N, k} \neq s_{2N, k + N/4}$ and there is not as much symmetry as in the function $FS(x)$. The extra work and savings cancel out to an operation count of 6N.
    \end{itemize}
    \item $FS4(x)$
    \begin{itemize}
        \item First make the recursive calls
        
        $A \leftarrow FS2(x[2j]_{j = 0}^{N/2 - 1})$
        
        $B \leftarrow FS(x[4j + 1]_{j = 0}^{N/4 - 1})$
        
        $C \leftarrow FS(x[4j - 1]_{j = 0}^{N/4 - 1})$
        
        \item By similar reasoning as the above, and since $s_{4N, k}, s_{4N, k + N/4}, s_{4N, k + N/2}, s_{4N, k + 3N/4}$ are all distinct, the algorithm calls for the most extra work with four additional scalar multiplications and a final operation count of 7N.
    \end{itemize}
\end{itemize}

The end result are $FFT$ and $FS2$ functions which have the same number of operations as SR, a cheaper $FS$ function that is used as frequently as possible, and a more expensive $FS4$ function whose additional operations are offset by the cheap $FS$.

These procedures are formally computed as follows.

\begin{algorithm}[H]
\caption{Modified Split-Radix FFT} \label{alg:msr}
\begin{algorithmic}[1]

\Procedure{FFT}{$x$} $\rightarrow y_{0, 1, \dots N - 1}$ \Comment{$x \in \C^N$}
    \State $A \leftarrow FFT(x[2j]_{j = 0}^{N/2 - 1})$ \Comment{$A \in \C^{N/2}$}
    \State $B \leftarrow FS(x[4j + 1]_{j = 0}^{N/4 - 1})$ \Comment{$B \in \C^{N/4}$}
    \State $C \leftarrow FS(x[4j - 1]_{j = 0}^{N/4 - 1})$ \Comment{$C \in \C^{N/4}$}
    \For{$k \in [0, 1, \dots N/4 - 1]$}
        \State $y_k \leftarrow A_k + (\omega_N^{k}s_{N/4, k}B_k + \omega_N^{-k}s_{N/4, k}C_k)$
        \State $y_{k + N/4} \leftarrow A_{k + N/4} - i( \omega_N^{k}s_{N/4, k}B_k - \omega_N^{-k}s_{N/4, k}C_k)$
        \State $y_{k + N/2} \leftarrow A_k - (\omega_N^{k}s_{N/4, k}B_k + \omega_N^{-k}s_{N/4, k}C_k)$
        \State $y_{k + 3N/4} \leftarrow A_{k + N/4} + i( \omega_N^{k}s_{N/4, k}B_k - \omega_N^{-k}s_{N/4, k}C_k)$
    \EndFor
\EndProcedure

\end{algorithmic}
\end{algorithm}

\begin{algorithm}[H]
\begin{algorithmic}[1]

\Procedure{FS}{$x$} $\rightarrow y_{0, 1, \dots N - 1}$ \Comment{$x \in \C^N$}
    \State $A \leftarrow FS2(x[2j]_{j = 0}^{N/2 - 1})$ \Comment{$A \in \C^{N/2}$}
    \State $B \leftarrow FS(x[4j + 1]_{j = 0}^{N/4 - 1})$ \Comment{$B \in \C^{N/4}$}
    \State $C \leftarrow FS(x[4j - 1]_{j = 0}^{N/4 - 1})$ \Comment{$C \in \C^{N/4}$}
    \For{$k \in [0, 1, \dots N/4 - 1]$}
        \State $y_k \leftarrow A_k + (t_N^{k}B_k + t_N^{-k}C_k)$
        \State $y_{k + N/4} \leftarrow A_{k + N/4} - i(t_N^{k}B_k - t_N^{-k}C_k)$
        \State $y_{k + N/2} \leftarrow A_k - (t_N^{k}B_k + t_N^{-k}C_k)$
        \State $y_{k + 3N/4} \leftarrow A_{k + N/4} + i(t_N^{k}B_k - t_N^{-k}C_k)$
    \EndFor
\EndProcedure

\end{algorithmic}
\end{algorithm}

\begin{algorithm}[H]
\begin{algorithmic}[1]

\Procedure{FS2}{$x$} $\rightarrow y_{0, 1, \dots N - 1}$ \Comment{$x \in \C^N$}
    \State $A \leftarrow FS4(x[2j]_{j = 0}^{N/2 - 1})$ \Comment{$A \in \C^{N/2}$}
    \State $B \leftarrow FS(x[4j + 1]_{j = 0}^{N/4 - 1})$ \Comment{$B \in \C^{N/4}$}
    \State $C \leftarrow FS(x[4j - 1]_{j = 0}^{N/4 - 1})$ \Comment{$C \in \C^{N/4}$}
    \For{$k \in [0, 1, \dots N/4 - 1]$}
        \State $y_k \leftarrow A_k + (s_{N, k} / s_{2N, k})(t_N^{k}B_k + t_N^{-k}C_k)$
        \State $y_{k + N/4} \leftarrow A_{k + N/4} - i(s_{N, k} / s_{2N, k + N/4})(t_N^{k}B_k - t_N^{-k}C_k)$
        \State $y_{k + N/2} \leftarrow A_k - (s_{N, k} / s_{2N, k})(t_N^{k}B_k + t_N^{-k}C_k)$
        \State $y_{k + 3N/4} \leftarrow A_{k + N/4} + i(s_{N, k} / s_{2N, k + N/4})(t_N^{k}B_k - t_N^{-k}C_k)$
    \EndFor
\EndProcedure

\end{algorithmic}
\end{algorithm}

\begin{algorithm}[H]
\begin{algorithmic}[1]

\Procedure{FS4}{$x$} $\rightarrow y_{0, 1, \dots N - 1}$ \Comment{$x \in \C^N$}
    \State $A \leftarrow FS2(x[2j]_{j = 0}^{N/2 - 1})$ \Comment{$A \in \C^{N/2}$}
    \State $B \leftarrow FS(x[4j + 1]_{j = 0}^{N/4 - 1})$ \Comment{$B \in \C^{N/4}$}
    \State $C \leftarrow FS(x[4j - 1]_{j = 0}^{N/4 - 1})$ \Comment{$C \in \C^{N/4}$}
    \For{$k \in [0, 1, \dots N/4 - 1]$}
        \State $y_k \leftarrow (s_{N, k} / s_{4N, k})[A_k + (t_N^{k}B_k + t_N^{-k}C_k)]$
        \State $y_{k + N/4} \leftarrow (s_{N, k} / s_{4N, k + N/4})[A_{k + N/4} - i(t_N^{k}B_k - t_N^{-k}C_k)]$
        \State $y_{k + N/2} \leftarrow (s_{N, k} / s_{4N, k + N/2})[A_k - (t_N^{k}B_k + t_N^{-k}C_k)]$
        \State $y_{k + 3N/4} \leftarrow (s_{N, k} / s_{4N, k + 3N / 4})[A_{k + N/4} + i(t_N^{k}B_k - t_N^{-k}C_k)]$
    \EndFor
\EndProcedure

\end{algorithmic}
\end{algorithm}

The operation count of the MSR algorithm is \cite{johnson2006modified}:

$$T_{FFT}(N) = \frac{34}{9} N \log N - \frac{124}{27}N - 2 \log N + \frac{10}{27}(-1)^{\log N} + 8$$

\section{Reducing the MSR Algorithm to the Walsh-Hadamard Transform} \label{section:msr_to_hufft}

We now give our new modification to MSR to reduce its operation count. The main idea is that, since the four procedures comprising MSR (Algorithm~\ref{alg:msr}) have very similar structure to SR (Algorithm~\ref{alg:origSR} in Section~\ref{subsec:introFFT}), we can perform the same modifications here to MSR as we did to SR in Section~\ref{subsec:introFFT}.

By again using the calculations from Lemma \ref{conj-trick-lemma} and Corollary \ref{conj-trick-corollary} (but with a bit of extra care with multiplying by our constant factors at the appropriate steps), followed by the same observation that the results can be seen as depending on $B + C$ and $B - C$ rather than $B$ and $C$, and then finally pulling out $H'$ as before (``uprooting" the Walsh-Hadamard recursion tree), we get four functions computing the different ``twiddle looking'' matrices corresponding to the four functions in Algorithm~\ref{alg:msr}: $TW, TWS, TWS2, TWS4$. Below, we also explicitly write out the intermediate calculations we do once and use multiple times to save operations\footnote{While we often write complex numbers in the form $(a + b) + (c + d)i$ for real values $a, b, c, d$, it is important to note that the real and imaginary parts of complex numbers are stored separately and thus we never combine $(a + b)$ with $(c + d)$.}:

\begin{algorithm}[H]
\caption{Walsh-Hadamard Uprooted FFT (WHUFFT)}
\label{alg:HUFFT}
\begin{algorithmic}[1]

\Procedure{TW}{$x$} $\rightarrow y_{0, 1, \dots N - 1}$ \Comment{$x \in \C^N$}
    \State $A \leftarrow TW(x[2j]_{j = 0}^{N/2 - 1})$ \Comment{$A \in \C^{N/2}$}
    \State $\tilde{B} \leftarrow TWS(x[4j + 1]_{j = 0}^{N/4 - 1})$ \Comment{$B \in \C^{N/4}$} 
    \State $\tilde{C} \leftarrow TWS(x[4j - 1]_{j = 0}^{N/4 - 1})$ \Comment{$C \in \C^{N/4}$} 
    \For{$k \in [0, 1, \dots N/4 - 1]$}
        \State $a + a'i, z + z'i, \tilde{b} + \tilde{b'}i, \tilde{c} + \tilde{c'}i \leftarrow A_k, A_{k + N/4}, \tilde{B}_k, \tilde{C}_k$
        \State $r + r'i \leftarrow \omega_N^ks_{N/4, k}$
        \State $D \leftarrow r(\tilde{b}) + r'(-\tilde{c'})$
        \State $E \leftarrow r(\tilde{b'}) + r'(\tilde{c})$
        \State $F \leftarrow r'(\tilde{b}) + r(\tilde{c'})$
        \State $G \leftarrow r'(\tilde{b'}) + r(-\tilde{c})$
        \State $y_k \leftarrow [a + D] + [a' + E]i$ 
        \State $y_{k + N/4} \leftarrow [z + F] + [z' + G]i$
        \State $y_{k + N/2} \leftarrow [a - D] + [a' - E]i$
        \State $y_{k + 3N/4} \leftarrow [z - F] + [z' - G]i$
    \EndFor
\EndProcedure
\end{algorithmic}
\end{algorithm}

\begin{algorithm}[H]
\begin{algorithmic}[1]

\Procedure{TWS}{$x$} $\rightarrow y_{0, 1, \dots N - 1}$ \Comment{$x \in \C^N$}
    \State $A \leftarrow TWS2(x[2j]_{j = 0}^{N/2 - 1})$ \Comment{$A \in \C^{N/2}$}
    \State $\tilde{B} \leftarrow TWS(x[4j + 1]_{j = 0}^{N/4 - 1})$ \Comment{$B \in \C^{N/4}$} 
    \State $\tilde{C} \leftarrow TWS(x[4j - 1]_{j = 0}^{N/4 - 1})$ \Comment{$C \in \C^{N/4}$} 
    \For{$k \in [0, 1, \dots N/4 - 1]$}
        \State $a + a'i, z + z'i, \tilde{b} + \tilde{b'}i, \tilde{c} + \tilde{c'}i \leftarrow A_k, A_{k + N/4}, \tilde{B}_k, \tilde{C}_k$
        \State $r + r'i \leftarrow t_N^k$
        \State $D \leftarrow r(\tilde{b}) + r'(-\tilde{c'})$
        \State $E \leftarrow r(\tilde{b'}) + r'(\tilde{c})$
        \State $F \leftarrow r'(\tilde{b}) + r(\tilde{c'})$
        \State $G \leftarrow r'(\tilde{b'}) + r(-\tilde{c})$
        \State $y_k \leftarrow [a + D] + [a' + E]i$
        \State $y_{k + N/4} \leftarrow [z + F] + [z' + G]i$
        \State $y_{k + N/2} \leftarrow [a - D] + [a' - E]i$
        \State $y_{k + 3N/4} \leftarrow [z - F] + [z' - G]i$
    \EndFor
\EndProcedure

\end{algorithmic}
\end{algorithm}

\begin{algorithm}[H]
\begin{algorithmic}[1]

\Procedure{TWS2}{$x$} $\rightarrow y_{0, 1, \dots N - 1}$ \Comment{$x \in \C^N$}
    \State $A \leftarrow TWS4(x[2j]_{j = 0}^{N/2 - 1})$ \Comment{$A \in \C^{N/2}$}
    \State $\tilde{B} \leftarrow TWS(x[4j + 1]_{j = 0}^{N/4 - 1})$ \Comment{$B \in \C^{N/4}$} 
    \State $\tilde{C} \leftarrow TWS(x[4j - 1]_{j = 0}^{N/4 - 1})$ \Comment{$C \in \C^{N/4}$} 
    \For{$k \in [0, 1, \dots N/4 - 1]$}
        \State $a + a'i, z + z'i, \tilde{b} + \tilde{b'}i, \tilde{c} + \tilde{c'}i \leftarrow A_k, A_{k + N/4}, \tilde{B}_k, \tilde{C}_k$
        \State $r + r'i \leftarrow t_N^k$
        \State $D \leftarrow r(\tilde{b}) + r'(-\tilde{c'})$
        \State $E \leftarrow r(\tilde{b'}) + r'(\tilde{c})$
        \State $F \leftarrow r'(\tilde{b}) + r(\tilde{c'})$
        \State $G \leftarrow r'(\tilde{b'}) + r(-\tilde{c})$
        \State $D' \leftarrow (\tfrac{s_{N, k}}{s_{2N, k}})(D)$
        \State $E' \leftarrow (\tfrac{s_{N, k}}{s_{2N, k}})(E)$
        \State $F' \leftarrow (\tfrac{s_{N, k}}{s_{2N, k + N/4}})(F)$
        \State $G' \leftarrow (\tfrac{s_{N, k}}{s_{2N, k + N/4}})(G)$
        \State $y_k \leftarrow [a + D'] + [a' + E']i$ 
        \State $y_{k + N/4} \leftarrow [z + F'] + [z' + G']i$
        \State $y_{k + N/2} \leftarrow [a - D'] + [a' - E']i$
        \State $y_{k + 3N/4} \leftarrow [z - F'] + [z' - G']i$
    \EndFor
\EndProcedure

\end{algorithmic}
\end{algorithm}

\begin{algorithm}[H]
\begin{algorithmic}[1]

\Procedure{TWS4}{$x$} $\rightarrow y_{0, 1, \dots N - 1}$ \Comment{$x \in \C^N$}
    \State $A \leftarrow TWS2(x[2j]_{j = 0}^{N/2 - 1})$ \Comment{$A \in \C^{N/2}$}
    \State $\tilde{B} \leftarrow TWS(x[4j + 1]_{j = 0}^{N/4 - 1})$ \Comment{$B \in \C^{N/4}$} 
    \State $\tilde{C} \leftarrow TWS(x[4j - 1]_{j = 0}^{N/4 - 1})$ \Comment{$C \in \C^{N/4}$} 
    \For{$k \in [0, 1, \dots N/4 - 1]$}
        \State $a + a'i, z + z'i, \tilde{b} + \tilde{b'}i, \tilde{c} + \tilde{c'}i \leftarrow A_k, A_{k + N/4}, \tilde{B}_k, \tilde{C}_k$
        \State $r + r'i \leftarrow t_N^k$
        \State $D \leftarrow r(\tilde{b}) + r'(-\tilde{c'})$
        \State $E \leftarrow r(\tilde{b'}) + r'(\tilde{c})$
        \State $F \leftarrow r'(\tilde{b}) + r(\tilde{c'})$
        \State $G \leftarrow r'(\tilde{b'}) + r(-\tilde{c})$
        \State $y_k \leftarrow \tfrac{s_{N, k}}{s_{4N, k}}\Big[[a + D] + [a' + E]i\Big]$ \label{line:whufft:tws4:first}
        \State $y_{k + N/4} \leftarrow \tfrac{s_{N, k}}{s_{4N, k + N/4}}\Big[[z + F] + [z' + G]i\Big]$
        \State $y_{k + N/2} \leftarrow \tfrac{s_{N, k}}{s_{4N, k + N/2}}\Big[[a - D] + [a' - E]i\Big]$
        \State $y_{k + 3N/4} \leftarrow \tfrac{s_{N, k}}{s_{4N, k + 3N/4}}\Big[[z - F] + [z' - G]i\Big]$ \label{line:whufft:tws4:fourth}
    \EndFor
\EndProcedure

\Procedure{FFT}{$x$} $\rightarrow y_{0, 1, \dots N - 1}$  \Comment{$x \in \C^N$}
    \State $y \leftarrow TW(H'(x))$ \Comment{$H'$ is computed using Algorithm~\ref{hprime:alg}.}
\EndProcedure

\end{algorithmic}
\end{algorithm}

For all four functions, the base cases are the identity function on a length 1 input, and the length-2 DFT on a length 2 input. (Indeed, since $H'_1$ and $H'_2$ are both identity matrices, these are the same as the base cases of MSR, Algorithm~\ref{alg:msr} above.)

The function $H'$ is computed using Algorithm~\ref{hprime:alg} which we discuss later in Section~\ref{section:hprime-runtime}.

\begin{theorem} \label{theorem:hufft_correctness:hufft_runtime}
    The Walsh-Hadamard Uprooted FFT (WHUFFT) algorithm (Algorithm~\ref{alg:HUFFT}) correctly computes the DFT with
    $$T_{WHUFFT}(N) = \frac{15}{4} N \log N + O(N)$$
    operations when $N$ is a power of $2$. These consist of
    \begin{itemize}
        \item $\frac{67}{18}N \log N + O(N)$ real additions and multiplications,
        \item $\frac{1}{36} N \log N$ ``divide by two" operations, and
        \item $2N - 2$ ``multiply by a power of two" operations.
    \end{itemize}
    
\end{theorem}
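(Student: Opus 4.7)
The plan is to verify correctness by tracing the derivation of Algorithm~\ref{alg:HUFFT} from MSR (Algorithm~\ref{alg:msr}), and then to count operations in two parts: the four $TW$-family procedures and the single $H'$ application at the top. For correctness, I would follow the exact template of Section~\ref{subsec:introFFT}: for each of the four MSR procedures, apply Lemma~\ref{conj-trick-lemma} and Corollary~\ref{conj-trick-corollary} to the combining step so that the outputs depend on the recursive-call results only through $A$, $B+C$, and $B-C$. By linearity of the DFT, the two DFT calls on $x[4j+1]$ and $x[4j-1]$ may then be replaced by DFT calls on their sum and difference. Pulling these sum-and-difference operations upward through every level of the recursion collects them into the single $H'$ transformation at the top, leaving behind only the four $TW$-family ``twiddle'' procedures. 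Each rewrite preserves the underlying linear map, so correctness of Algorithm~\ref{alg:HUFFT} reduces to correctness of MSR.

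For the $TW$-family cost I would count operations per inner-loop iteration. In $TW$ the scalar $r + r'i = \omega_N^k s_{N/4,k}$ has two arbitrary real components, so each of $D, E, F, G$ costs $2$ real multiplications and $1$ real addition ($12$ ops), plus $8$ output updates, giving $20$ ops per iteration or $5N$ per call. In $TWS$, $TWS2$, and $TWS4$ the scalar $t_N^k$ has exactly one of $r, r'$ equal to $\pm 1$, which eliminates one multiplication in each of $D, E, F, G$ ($8$ ops for the combinations). Then $TWS$ costs $4N$ per call, $TWS2$ adds $4$ real scalings for $5N$ per call, and $TWS4$ multiplies each complex output by a real scalar for $6N$ per call. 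The resulting recurrences have the same shape as those of MSR but with the linear term reduced by exactly $N$; summed over the whole call tree, the total savings equal the solution of $T(N) = T(N/2) + 2T(N/4) + N$, which is $\tfrac{2}{3} N \log N + O(N)$. Hence $T_{TW}(N) = \tfrac{34}{9} N \log N - \tfrac{2}{3} N \log N + O(N) = \tfrac{28}{9} N \log N + O(N)$, all of which are additions and multiplications.

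Combining with $H'$: by Section~\ref{section:hprime-runtime}, $H'_N$ is a permutation of a direct sum of WHT matrices (Appendix~\ref{appendix:section:properties_hprime}), so applying Theorem~\ref{thm:Hintro} to each summand (and doubling for complex input) gives $T_{H'}(N) = \tfrac{23}{36} N \log N + O(N)$ real operations. The breakdown in Theorem~\ref{thm:Hintro} is preserved under direct sums and under doubling, so the $H'$ cost apportions as $\tfrac{1}{36} N \log N$ divide by 2 operations (the $\tfrac{1}{23}$ fraction of $\tfrac{23}{36} N \log N$), $\tfrac{22}{36} N \log N + O(N) = \tfrac{11}{18} N \log N + O(N)$ additions and multiplications, and $2N - 2$ multiply by a power of 2 operations inherited from the ``$N-1$'' count per WHT summand (after accounting for the small number of WHT summands in the decomposition). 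Adding the $TW$-family contribution yields $\tfrac{28}{9} + \tfrac{11}{18} = \tfrac{67}{18} N \log N + O(N)$ additions and multiplications, $\tfrac{1}{36} N \log N$ divides, and $2N - 2$ multiply by a power of 2 operations, totalling $\tfrac{67}{18} + \tfrac{1}{36} = \tfrac{15}{4} N \log N + O(N)$ as claimed.

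The main obstacle will be the detailed lower-order accounting for $H'_N$ in Section~\ref{section:hprime-runtime}, in particular exhibiting the direct-sum decomposition of $H'_N$ cleanly enough to pin down the exact ``$2N-2$'' count and to handle base-case savings; the $o(N^{0.8})$ error term advertised in Theorem~\ref{thm:Fintro} will come from the careful handling of small WHT summands at the leaves of the $H'$ recursion. A secondary technical point is verifying that $t_N^k$ indeed always has a coordinate of absolute value $1$ for every relevant $k$ (so that every inner-loop iteration in the $TWS$-family really does benefit from the MSR-style multiplication savings), and confirming that no operation inside any $TW$-family procedure accidentally qualifies as a divide by 2 or a multiply by a power of 2, so that those two categories cleanly belong to $H'$.
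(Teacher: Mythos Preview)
Your proposal is correct and follows essentially the same approach as the paper: correctness by tracing the Section~\ref{subsec:introFFT} rewrites through MSR, identical per-procedure operation counts ($5N$, $4N$, $5N$, $6N$ for $TW$, $TWS$, $TWS2$, $TWS4$), and then adding the $H'$ cost from Section~\ref{section:hprime-runtime}. Your shortcut of computing $T_{TW}$ as $T_{\mathrm{MSR}} - \tfrac{2}{3}N\log N$ (the per-node savings summed over the recursion tree) is in fact the same idea the paper uses in Lemma~\ref{lemma:hufft_twiddle_runtime} to pin down the exact lower-order terms, so the two arguments coincide.
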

    
\begin{proof}
    
    The correctness follows from the fact that each method of Algorithm~\ref{alg:HUFFT} computes the same function as the corresponding method of Algorithm~\ref{alg:msr} (after $H'$ is applied to the inputs). This can be verified directly, but it follows from the fact that we got to Algorithm~\ref{alg:HUFFT} by applying the steps given in Section~\ref{subsec:introFFT} to Algorithm~\ref{alg:msr}.
    Combining this observation with Lemma \ref{hprime:alg:correctness} below proving the correctness of Algorithm \ref{hprime:alg} for computing $H'$, our Walsh-Hadamard Uprooted FFT algorithm correctly computes the DFT.

    We now count the number of real operations used by each of the functions $TW, TWS, TWS2, TWS4$ in addition to their recursive calls. In all four functions we use the common idea of computing the values $r\tilde{b} - r'\tilde{c'}, r\tilde{b'} + r'\tilde{c}, r'\tilde{b} + r\tilde{c'}, r' \tilde{b'} - r\tilde{c}$ a single time per iteration of our \textbf{for} loop, storing the intermediate values in $D, E, F, G$, respectively, and using each more than once.
    
    \begin{itemize}
        \item In each iteration of the \textbf{for} loop in $TW$, $\tilde{b}, \tilde{b'}, \tilde{c}, \tilde{c'}$ are arbitrary real numbers and $r, r'$ are real numbers derived from $\omega_N^ks_{N/4, k}$, so computing the four real numbers $D, E, F, G$ can be done using 12 operations (exactly those seen in the four expressions). We then add and subtract these four values as appropriate from $a, a', z, z'$ for a total of 20 real operations. Since we run our \textbf{for} loop for $N/4$ iterations, this gives us a total of $5N$ real operations to compute $TW$ in addition to recursive calls.
        
        \item In each iteration of the \textbf{for} loop in $TWS$,  $\tilde{b}, \tilde{b'}, \tilde{c}, \tilde{c'}$ are arbitrary real numbers but one of $r, r'$ is guaranteed to be equal to $\pm 1$ by the definition of $t_N^k$ in subsection \ref{subsec:msr}. Thus, computing $D, E, F, G$ can be done using only 8 operations since multiplying by $\pm 1$ in this setting is free. All other computations are the same as in $TW$, for a total of $16$ real operations per iteration of the \textbf{for} loop  and $4N$ real operations total to compute $TWS$ in addition to recursive calls.
        
        \item In each iteration of the \textbf{for} loop in $TWS2$, just as with $TWS$, computing $D, E, F, G$ can be done using only 8 operations.
        
        Before combining these values with $a, a', z, z'$ we first multiply:
        
        \begin{itemize}
            \item $D$ and $E$ by $\tfrac{s_{N, k}}{s_{2N, k}}$
            \item $F$ and $G$ by $\tfrac{s_{N, k}}{s_{2N, k + N/4}}$
        \end{itemize}
        
        for a total of 4 real operations more than in $TWS$. All other computations are the same as in $TWS$, for a total of $20$ real operations per iteration of the \textbf{for} loop  and $5N$ real operations total to compute $TWS2$ in addition to recursive calls.
        
        \item In each iteration of the \textbf{for} loop in $TWS4$, the computations are exactly the same as in $TWS$ except for each of lines \ref{line:whufft:tws4:first} through \ref{line:whufft:tws4:fourth} we multiply our entire result by one of four distinct real numbers before outputting the final value. Multiplying a real number by a complex number can be done with 2 real operations, so we perform 8 more real operations than in $TWS$, for a total of $24$ real operations per iteration of our \textbf{for} loop. This gives us a total of $6N$ real operations to compute $TWS4$ in addition to recursive calls.
    \end{itemize}
    
    Overall, the recurrence for the operation counts of these methods is thus:
    \begin{align*}
        T_{TW}(N) &= 5N + T_{TW}(N/2) + 2T_{TWS}(N/4) \\
        T_{TWS}(N) &= 4N + T_{TWS2}(N/2) + 2T_{TWS}(N/4) \\
        T_{TWS2}(N) &= 5N + T_{TWS4}(N/2) + 2T_{TWS}(N/4) \\
        T_{TWS4}(N) &= 6N + T_{TWS2}(N/2) + 2T_{TWS}(N/4).
    \end{align*}

In Lemma~\ref{lem:TWapproxcount} below we show that this solves to $T_{TW}(N) = \frac{28}{9} N \log N + O(N).$ (See also Lemma~\ref{lemma:hufft_twiddle_runtime} where we compute the low-order terms of the operation count.)

    Thus, our overall operation count is $T_{WHUFFT}(N) = T_{H'}(N) + T_{TW}(N) = \frac{23}{36}N \log N + \frac{28}{9} N \log N + O(N) = \frac{15}{4} N \log N + O(N)$ where the lower order terms are given in Corollary \ref{corollary:hufft_runtime_tot} in the Appendix. All operations for computing $TW(x)$ are real additions or real multiplications, and the breaking down of the $T_{WHUFFT}(N) = \frac{15}{4} N \log N + O(N)$ operations into real additions and multiplications, ``divide by two" operations, and ``multiply by a power of two" operations follows naturally from combining the corresponding counts from each part.
\end{proof}

In order to compute the $H'$ transform we will heavily rely on expressing the $H'$ matrix as its constituent WHT submatrices, so we first discuss our improved algorithms for computing the WHT transform using the non-rigidity of the WHT matrix.

\section{A Faster Algorithm for the Walsh-Hadamard Transform} 
\label{hadamard:section}

The algorithm we present in this section for the WHT works over any field $\F$, not just $\C$. We thus count the number of arithmetic operations over $\F$.

\subsection{Algorithm based on $H_4$ decomposition} \label{subsec:H4}

Before getting to our final algorithm, which is based on a non-rigidity decomposition of $H_8$, we start with a warm-up example which makes use of a similar decomposition of $H_4$~\cite{alman2021kronecker}:

$$H_4 = \begin{bmatrix}
1 & 1 & 1 & 1\\
1 & -1 & 1 & -1\\
1 & 1 & -1 & -1\\
1 & -1 & -1 & 1\\
\end{bmatrix} = \underbrace{\begin{bmatrix}
1 & 1 & 1 & 1\\
1 & -1 & -1 & -1\\
1 & -1 & -1 & -1\\
1 & -1 & -1 & -1\\
\end{bmatrix}}_{\text{{\normalsize low rank}}} + \underbrace{\begin{bmatrix}
0 & 0 & 0 & 0\\
0 & 0 & 2 & 0\\
0 & 2 & 0 & 0\\
0 & 0 & 0 & 2\\
\end{bmatrix}}_{\text{{\normalsize sparse}}}$$

This algorithm will not yet achieve an improved operation count, but it will showcase many of the final ideas. We believe it may have practical implications since so many of its arithmetic operations are actually ``divide by 2'' operations.

\begin{algorithm}[H]
\caption{Fast WHT from Non-Rigidity of $H_4$ (Final)} 
\label{fast:hadamard:4x4}
\begin{algorithmic}[1]

\Procedure{H}{$x, k$} $\rightarrow y$ \Comment{For $k \in \N$, $H(x, k)$ computes $2^k \cdot H_N(x)$.}
    \If{$N \leq 2$} Scale the inputs by $2^k$, use the folklore $N \cdot \log N$ operation WHT, \text{and \textbf{end procedure}}
    \EndIf
    \State $A \leftarrow H(x[j]_{j = 0}^{N/4 - 1}, k)$
    \State $B \leftarrow H(x[j]_{j = N/4}^{N/2 - 1}, k + 1)$ \label{line:7ofhad4}
    \State $C \leftarrow H(x[j]_{j = N/2}^{3N/4 - 1}, k + 1)$
    \State $D \leftarrow H(x[j]_{j = 3N/4}^{N - 1}, k + 1)$\label{line:9ofhad4}
    \State $E \leftarrow (B + C + D) / 2$ \Comment{Addition and division are done entry-wise}\label{line:10ofhad4}
    \State $F \leftarrow A - E$
    \State $y[j]_{j = 0}^{N/4 - 1} \leftarrow A + E$
    \State $y[j]_{j = N/4}^{N/2 - 1} \leftarrow F + C$
    \State $y[j]_{j = N/2}^{3N/4 - 1} \leftarrow F + B$
    \State $y[j]_{j = 3N/4}^{N - 1} \leftarrow F + D$\label{line:15ofhad4}
\EndProcedure

\end{algorithmic}
\end{algorithm}

\begin{theorem}
    Algorithm \ref{fast:hadamard:4x4} computes the WHT over any field $\F$ using 
    \begin{itemize}
        \item $7/8 N \log N$ field additions
        \item $1/8 N \log N$ field ``divide by two" operations
        \item $N - 1$ field ``multiply by a power of two" operations
    \end{itemize}

    when $N$ is a power of $2$
\end{theorem}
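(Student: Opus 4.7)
The plan is to prove correctness by strong induction on $N$ and then bound each operation type (additions, divide-by-2 operations, multiply-by-power-of-2 operations) via a separate recurrence or tree argument.

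For correctness, I would induct on $N$ with the stronger statement that $H(x,k)$ returns $2^k \cdot H_N x$ for every integer $k \ge 0$. The base cases $N \in \{1,2\}$ follow directly because the pseudocode first scales the inputs by $2^k$ and then applies the (correct) folklore WHT. For the inductive step at size $N \ge 4$, I would invoke the Kronecker identity $H_N = H_4 \otimes H_{N/4}$ so that, writing $x = (x_1, x_2, x_3, x_4)$ for the partition of $x$ into quarters of length $N/4$, the $i$-th output quarter of $H_N x$ equals $\sum_{j=1}^{4} (H_4)_{ij}\, H_{N/4} x_j$. By the inductive hypothesis $A = 2^k H_{N/4} x_1$ and $B,C,D = 2\cdot 2^k H_{N/4} x_j$ for $j=2,3,4$, so $E = (B+C+D)/2$ equals $2^k$ times the sum over the last three quarters, and $F = A-E$ equals $2^k(H_{N/4} x_1 - H_{N/4} x_2 - H_{N/4} x_3 - H_{N/4} x_4)$. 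The final step is to verify that $A+E$, $F+C$, $F+B$, $F+D$ produce the four correct signed combinations; the crucial point is that each of $B,C,D$ carries a built-in factor of $2$, so adding one of them back to $F$ precisely flips the sign of the corresponding summand (two units for the sign flip, on top of $F$'s $-1$ contribution) while leaving the other two signs at $-1$, which matches the rows of $H_4$.

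For the operation count I would read off per-call contributions from lines \ref{line:10ofhad4}--\ref{line:15ofhad4}. Line \ref{line:10ofhad4} contributes $N/2$ entry-wise additions (two length-$N/4$ additions) and $N/4$ divisions by $2$; line $F \leftarrow A-E$ contributes $N/4$ subtractions; and the four output lines each contribute $N/4$ additions, for a total of $N$ more additions. This yields the separate recurrences
\[
T_{\mathrm{add}}(N) = 4\,T_{\mathrm{add}}(N/4) + \tfrac{7}{4}N, \qquad T_{\mathrm{div}}(N) = 4\,T_{\mathrm{div}}(N/4) + \tfrac{1}{4}N,
\]
with base cases $T_{\mathrm{add}}(1)=0, T_{\mathrm{add}}(2)=2, T_{\mathrm{div}}(1)=T_{\mathrm{div}}(2)=0$. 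A routine substitution $T(N) = c N \log N + O(N)$ in each recurrence pins down the leading constants $c = 7/8$ and $c = 1/8$ respectively, giving the first two claims (up to lower-order additive terms absorbed into the base cases).

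For the ``multiply by a power of $2$'' count I would argue combinatorially on the recursion tree. At each internal node, exactly one child inherits $k$ unchanged (line \ref{line:7ofhad4}'s $A$-branch) while three children use $k+1$. Scaling by $2^k$ only happens at base cases, and costs (base-case size) entry-wise multiplications whenever $k>0$ and zero when $k=0$ (since multiplying by $1$ is free). The unique root-to-leaf ``always take the $A$-branch'' path produces the only base case with $k=0$, containing either $1$ or $2$ elements depending on the parity of $\log N$; all remaining base-case elements contribute exactly one ``multiply by a power of $2$'' operation, giving at most $N-1$ in total.

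The two recurrence computations and the correctness induction are routine once the key Kronecker decomposition and the $E,F$ algebraic bookkeeping are in place; the main thing that requires care is the multiplication-count argument, since it is not a simple Master-theorem unrolling but rather a tree-path argument that depends on how the scaling factor $k$ propagates, and one must notice that the factor-of-$2$ scalings of $B,C,D$ are exactly what allows all base-case multiplications below non-$A$ branches to be consolidated into a single ``multiply by a power of $2$'' operation per element.
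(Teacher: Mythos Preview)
Your proposal is correct and follows essentially the same approach as the paper: both use the Kronecker identity $H_N = H_4 \otimes H_{N/4}$ for the inductive correctness step, verify the four output quarters match the rows of $H_4$, count $7N/4$ additions and $N/4$ divisions per recursive call, and argue that only the all-$A$-branch base case avoids a power-of-two scaling. The only presentational differences are that the paper routes correctness through an intermediate algorithm (with explicit post-call doublings) and then invokes linearity, whereas you induct directly on the strengthened statement $H(x,k)=2^k H_N x$, and the paper tallies total operations first before splitting by type rather than running separate recurrences; both variations are equivalent and your ``at most $N-1$'' and ``up to lower-order additive terms'' hedges are in fact appropriate given the parity-of-$\log N$ base-case dependence.
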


\begin{proof}

    We first prove the correctness of the following intermediate algorithm, and then demonstrate how to get our final Algorithm \ref{fast:hadamard:4x4} from it \footnote{Note that the only differences between the intermediate and final are in the base case and recursive calls for $B, C,$ and $D$; they are highlighted in blue.}:
    
    \begin{algorithm}[H]
        \caption{Fast WHT from Non-Rigidity of $H_4$ (Intermediate)} \label{alg:WHT4interm}
        \begin{algorithmic}[1]
        
        \Procedure{H}{\bluetext{$x$}} $\rightarrow y$
            \If{$N \leq 2$} \bluetext{Do not scale the inputs,} use the folklore $N \cdot \log N$ operation WHT, \text{and \textbf{end procedure}}
            \EndIf
            \State $A \leftarrow H(x[j]_{j = 0}^{N/4 - 1})$
            \State $B \leftarrow \bluetext{2H(x[j]_{j = N/4}^{N/2 - 1})}$
            \State $C \leftarrow \bluetext{2H(x[j]_{j = N/2}^{3N/4 - 1})}$
            \State $D \leftarrow \bluetext{2H(x[j]_{j = 3N/4}^{N - 1})}$
            \State $E \leftarrow (B + C + D) / 2$ \Comment{Addition done entry wise, division done over all entries}
            \State $F \leftarrow A - E$
            \State $y[j]_{j = 0}^{N/4 - 1} \leftarrow A + E$
            \State $y[j]_{j = N/4}^{N/2 - 1} \leftarrow F + C$
            \State $y[j]_{j = N/2}^{3N/4 - 1} \leftarrow F + B$
            \State $y[j]_{j = 3N/4}^{N - 1} \leftarrow F + D$
        \EndProcedure
        
        \end{algorithmic}
    \end{algorithm}
    
    First we can see that in the $N = 1$ and $N = 2$ base cases, Algorithm~\ref{alg:WHT4interm} properly computes the WHT. We now proceed by induction: supposing $H$ (from Algorithm~\ref{alg:WHT4interm}) computes the WHT for input lengths $N/4$ and $N/2$, we will show that it properly computes the WHT for input length $N$.
    
    By the recursive definition of the Hadmard transform,
    $$H_N = \begin{bmatrix} 1 & 1 & 1 & 1 \\ 1 & -1 & 1 & -1 \\ 1 & 1 & -1 & -1 \\ 1 & -1 & -1 & 1 \end{bmatrix} \otimes H_{N/4}.$$

    When we simplify the computations in Algorithm~\ref{alg:WHT4interm}, we get
    \begin{align*}
        y[j]_{j = 0}^{N/4 - 1} &\leftarrow A + E \\
        &= H(x[j]_{j = 0}^{N/4 - 1}) + \frac{2H(x[j]_{j = N/4}^{N/2 - 1}) +  2H(x[j]_{j = N/2}^{3N/4 - 1}) + 2H(x[j]_{j = 3N/4}^{N - 1})}{2} \\
        &= H(x[j]_{j = 0}^{N/4 - 1}) + H(x[j]_{j = N/4}^{N/2 - 1}) +  H(x[j]_{j = N/2}^{3N/4 - 1}) + H(x[j]_{j = 3N/4}^{N - 1})
    \end{align*}
    \begin{align*}
        y[j]_{j = 0}^{N/4 - 1} &\leftarrow F + C \\
        &= H(x[j]_{j = 0}^{N/4 - 1}) - \frac{2H(x[j]_{j = N/4}^{N/2 - 1}) +  2H(x[j]_{j = N/2}^{3N/4 - 1}) + 2H(x[j]_{j = 3N/4}^{N - 1})}{2} + 2H(x[j]_{j = N/2}^{3N/4 - 1})\\
        &= H(x[j]_{j = 0}^{N/4 - 1}) - H(x[j]_{j = N/4}^{N/2 - 1}) +  H(x[j]_{j = N/2}^{3N/4 - 1}) - H(x[j]_{j = 3N/4}^{N - 1})
    \end{align*}
    \begin{align*}
        y[j]_{j = 0}^{N/4 - 1} &\leftarrow F + B \\
        &= H(x[j]_{j = 0}^{N/4 - 1}) - \frac{2H(x[j]_{j = N/4}^{N/2 - 1}) +  2H(x[j]_{j = N/2}^{3N/4 - 1}) + 2H(x[j]_{j = 3N/4}^{N - 1})}{2} + 2H(x[j]_{j = N/4}^{N/2 - 1})\\
        &= H(x[j]_{j = 0}^{N/4 - 1}) + H(x[j]_{j = N/4}^{N/2 - 1}) -  H(x[j]_{j = N/2}^{3N/4 - 1}) - H(x[j]_{j = 3N/4}^{N - 1})
    \end{align*}
    \begin{align*}
        y[j]_{j = 0}^{N/4 - 1} &\leftarrow F + D \\
        &= H(x[j]_{j = 0}^{N/4 - 1}) - \frac{2H(x[j]_{j = N/4}^{N/2 - 1}) +  2H(x[j]_{j = N/2}^{3N/4 - 1}) + 2H(x[j]_{j = 3N/4}^{N - 1})}{2} + 2H(x[j]_{j = 3N/4}^{N - 1})\\
        &= H(x[j]_{j = 0}^{N/4 - 1}) - H(x[j]_{j = N/4}^{N/2 - 1}) -  H(x[j]_{j = N/2}^{3N/4 - 1}) + H(x[j]_{j = 3N/4}^{N - 1})
    \end{align*}
    Which exactly matches the definition. This concludes the proof of correctness of Algorithm~\ref{alg:WHT4interm}.

We next remark that Algorithm~\ref{fast:hadamard:4x4} and Algorithm~\ref{alg:WHT4interm} compute the same function (when we pick $k=0$ in Algorithm~\ref{fast:hadamard:4x4}), and so Algorithm~\ref{fast:hadamard:4x4} is also correct. Indeed, this follows from the fact that $2H(x) = H(2x)$ by the linearity of the WHT, and so picking `$k+1$' in lines \ref{line:7ofhad4} - \ref{line:9ofhad4} in Algorithm~\ref{fast:hadamard:4x4} is equivalent to multiplying those liens by $2$.
    
    It remains to count the operations used by Algorithm~\ref{fast:hadamard:4x4}.
    
    The main body of the function (lines \ref{line:10ofhad4} - \ref{line:15ofhad4}) performs $8$ arithmetic operations on vectors of length $N/4$; $7$ of these are field additions or subtractions, and $1$ is a multiplication by $1/2$. Thus, these lines overall use $7(N/4)$ field additions and subtractions, and $N/4$ scalar multiplications by $1/2$.
    
    The operation count of the base case depends on the value of $\log N \Mod 2$, since our recursion reduces the input by a factor of 4 with base cases for $N = 1, 2$. Ignoring, for a moment, the multiplications by $2^k$ in the base cases:
    
    \begin{itemize}
        \item In the $\log N \equiv 0 \Mod 2$ case, our base case is a size 1 WHT transform, which is completely free.
        \item In the $\log N \equiv 1 \Mod 2$ case, when we reach the base case level of recursion we compute a total of $N/2$ copies of the $2 \times 2$ WHT which can be each done with 2 field operations each. In this case computing our base case costs $N$ field operations.
    \end{itemize}

    Lastly, we add on $N - 1$ operations of ``multiply by a power of 2'' operations to perform all the multiplications by $2^k$ in the base cases, since every input ends up being scaled by a power of 2 other than $x_0$ (which is scaled by $2^0 = 1$).

Overall, this solves to
\begin{align*}
    T_H(N) &= 8(N/4) \left( \frac{\log N - (\log N \Mod 2)}{2} \right) + (\log N \Mod 2)N  + N - 1\\
    &= N \log N + N\left[ - (\log N \Mod 2) + (\log N \Mod 2) \right] + N - 1\\
    &= N \log N + N - 1,
\end{align*}
    where $7/8N \log N$ of the operations are field additions, $N/8 \log N$ are ``divide by two" operations, and $N - 1$ are scalar multiplications by powers of 2.
\end{proof}

Although Algorithm \ref{fast:hadamard:4x4} still has a total operation count of $> N \log N$ field operations, many of its field operations are simple ``divide by two'' or ``multiply by a power of 2'' operations which can be implemented with fast `bit shifts' rather than arbitrary field operations in many computer architectures. In other words, despite having a higher operation count, we believe it may be faster than both the folklore Fast Walsh-Hadamard Transform and our Algorithm~\ref{fast:hadamard:8x8} in certain practical settings.

\subsection{Algorithm based on $H_8$ decomposition}

We now use similar ideas to analyze Algorithm~\ref{fast:hadamard:8x8} (from Section~\ref{subsec:introWHT} above). Combining them with properties that emerge for the size-8 WHT, we get an algorithm that has a smaller operation count than the folklore WHT algorithm:

\begin{theorem} \label{fast:hadamard:8x8:thm}
    For $N$ a power of $2$, Algorithm \ref{fast:hadamard:8x8} computes the WHT over any field $\F$ using 
    \begin{itemize}
        \item $\frac{22 N \log N}{24} + \frac{N}{24}(\log N \Mod 3)$ field additions,
        \item $\frac{1}{24} N \log N$ field ``divide by two" operations, and
        \item $N - 1$ ``multiply by a power of two" operations.
    \end{itemize}

    when $N$ is a power of $2$
\end{theorem}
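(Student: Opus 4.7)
The plan is to mirror the strategy used for Algorithm~\ref{fast:hadamard:4x4} in Section~\ref{subsec:H4}: first establish correctness by induction on $N$, and then count the arithmetic operations level by level in the recursion, separating out the three types of operations listed in the theorem.

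For correctness, I would first introduce an intermediate algorithm that, instead of returning $2^k H_N x$, directly computes $H_N x$ by having the seven recursive calls on the sublists $x[N/8..N/4-1], \ldots, x[7N/8..N-1]$ return $2 H_{N/8}$ of their sublist, while the recursive call producing $a$ returns $H_{N/8}$ of $x[0..N/8-1]$ unmodified; the step ``$tot \leftarrow tot/2$'' then restores the correct overall scaling. To check that this intermediate algorithm computes $H_N x = (H_8 \otimes H_{N/8}) x$, I would use the low-rank plus sparse decomposition of $H_8$ displayed in Section~\ref{subsec:introWHT}: verify that $a + tot$ produces the first output block (corresponding to the all-ones row of $H_8$), and that each of the seven other output blocks equals $(a - tot) + (\text{sparse correction})$, where the sparse correction is the sum of the two or three elements of $\{b, c, d, e, f, g, h\}$ indexed by the nonzero positions of the corresponding row of the sparse matrix. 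Algorithm~\ref{fast:hadamard:8x8} then follows from the identity $2 H(x) = H(2 x)$: the needed factor-of-$2$ scalings can be ``pushed down'' into the recursion by incrementing the parameter $k$ by $1$ in the appropriate recursive calls, so that all accumulated scalings are performed only once at the base case as a single ``multiply by a power of $2$'' per input.

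Next I would tabulate the operations performed in one call of the procedure beyond its recursive calls, by reading off lines~\ref{line:had8line16}--\ref{line:had8line32}. Each of these is a vector operation on length-$N/8$ vectors; summing entrywise gives $22$ field additions and subtractions and one ``divide by $2$'' per entry, for a total of $(22 + 1) \cdot N/8$ operations per procedure call. Because the eight recursive calls at each level together cover the entire input, every non-base recursive level contributes exactly $22 N / 8$ additions and $N/8$ divides regardless of depth; summing over the $\lfloor \log N / 3 \rfloor$ non-base levels yields the leading terms $\tfrac{22}{24} N \log N$ field additions and $\tfrac{1}{24} N \log N$ divisions.

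Finally I would handle the base cases and the input scalings. When $\log N$ is not a multiple of $3$, the recursion bottoms out at either $N = 2$ or $N = 4$; in both cases the folklore $N \log N$ algorithm uses only field additions and contributes the stated $\tfrac{N}{24}(\log N \bmod 3)$ additional additions to the running total. The ``multiply by a power of $2$'' operations arise solely from the base-case scaling ``Scale the inputs by $2^k$''; each entry of the original input is scaled exactly once, and since the root call uses $k = 0$ the entry $x_0$ is multiplied by $2^0 = 1$ (free), yielding exactly $N - 1$ multiplications, each by a power of $2$ bounded by $2^{\log N} = N$. The main obstacle is the correctness verification of the intermediate algorithm: one must carefully match the sign pattern of the eight rows of $H_8$, as dictated by the $H_8 = L + S$ decomposition, against the specific combinations of the shared quantities $tot, diff, D, E, H, B_1, B_2, B_3$ used on lines~\ref{line:had8line16}--\ref{line:had8line32}, since this sharing is precisely what brings the per-iteration operation count down from the naive $37$ to $23$ and hence produces the improved leading constant $23/24$.
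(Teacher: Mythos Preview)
Your proposal is correct and follows essentially the same route as the paper: prove correctness by induction on $N$ via the identity $H_N = H_8 \otimes H_{N/8}$ and the rigidity decomposition of $H_8$, then count the $22$ additions and $1$ division by $2$ on each of the $N/8$ entries per recursion level, handle the three base cases according to $\log N \bmod 3$, and finally observe that only $x_0$ stays at $k=0$ throughout so exactly $N-1$ inputs incur a ``multiply by a power of $2$''. The only cosmetic difference is that the paper does not spell out a separate intermediate algorithm for the $H_8$ case (as it did for $H_4$), instead verifying the eight output blocks directly; your explicit intermediate-algorithm step is a fine way to organize the same check.
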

\begin{proof}
    We begin by proving the correctness of Algorithm \ref{fast:hadamard:8x8} by induction. We can verify the base cases $N = 1, N = 2, N = 4$ directly.
    
    For $N \geq 8$, by the recursive definition of the Hadmard transform,
    
    $$H_N = \begin{bmatrix}
    1 & 1 & 1 & 1 & 1 & 1 & 1 & 1 \\
    1 & -1 & 1 & -1 & 1 & -1 & 1 & -1 \\
    1 & 1 & -1 & -1 & 1 & 1 & -1 & -1 \\
    1 & -1 & -1 & 1 & 1 & -1 & -1 & 1 \\
    1 & 1 & 1 & 1 & -1 & -1 & -1 & -1 \\
    1 & -1 & 1 & -1 & -1 & 1 & -1 & 1 \\
    1 & 1 & -1 & -1 & -1 & -1 & 1 & 1 \\
    1 & -1 & -1 & 1 & -1 & 1 & 1 & -1 \\
    \end{bmatrix} \otimes H_{N/8}.$$
    
    Simplifying the lines of Algorithm \ref{fast:hadamard:8x8}, we see that it computes the vector $y \in \F^N$ given by
    \begin{align*}
        y[j]_{j =0}^{N/8 - 1} &\leftarrow a + tot = a + \tfrac{1}{2}(b + c + d + e + f + g + h), \\
        y[j]_{j = N/8}^{N/4 - 1} &\leftarrow E + c + g = a - tot + e + c + g = a - \tfrac{1}{2}(b - c + d - e + f - g + h), \\
        y[j]_{j = N/4}^{3N/8 - 1} &\leftarrow E + b + f = a - tot + e + b + f = a - \tfrac{1}{2}(-b + c + d - e - f + g + h), \\
        y[j]_{j = 3N/8}^{N/2 - 1} &\leftarrow E + B_2 = a - tot + e + d + h = a - \tfrac{1}{2}(b + c - d - e + f + g - h), \\
        y[j]_{j = N/2}^{5N/8 - 1} &\leftarrow D + B_1 = a - tot + d + b + c = a - \tfrac{1}{2}(-b - c - d + e + f + g + h), \\
        y[j]_{j = 5N/8}^{3N/4 - 1} &\leftarrow H + c + f = a - tot + h + c + f = a - \tfrac{1}{2}(b - c + d + e - f + g - h), \\
        y[j]_{j = 3N/4}^{7N/8 - 1} &\leftarrow H + b + g = a - tot + h + b + g = a - \tfrac{1}{2}(-b + c + d + e + f - g - h), \\
        y[j]_{j = 7N/8}^{N - 1} &\leftarrow D + B_3 = a - tot + d + f + g = a - \tfrac{1}{2}(b + c - d + e - f - g + h). \\
    \end{align*}
    Substituting the proper definitions of $b, c, d, e, f, g, h$ (as WHTs of vectors of length $N/8$) shows that it computes the desired result.

We now determine the operation count of Algorithm \ref{fast:hadamard:8x8}.

The main body of the function (lines \ref{line:had8line16} - \ref{line:had8line32} of Algorithm \ref{fast:hadamard:8x8}) is computed with $23(N/8)$ field operations, consisting of $22(N/8)$ field additions and $N/8$ scalar multiplications by $1/2$. The operation count of the base case depends on $\log N \Mod 3$, since our recursion reduces the input by a factor of 8 with three base cases for $N = 1, 2, 4$. Again ignoring the multiplications by $2^k$ for a moment:

\begin{itemize}
    \item In the $\log N \equiv 0 \Mod 3$ case, our base case is a size 1 WHT transform, which is completely free.
    \item In the $\log N \equiv 1 \Mod 3$ case, when we reach the base case level of recursion we compute a total of $N/2$ copies of the $2 \times 2$ WHT which can be each done with 2 field operations each. In this case computing our base case costs $N$ field operations.
    \item In the $\log N \equiv 2 \Mod 3$ case we compute a total of $N/4$ copies of the $4 \times 4$ WHT, which (using the standard Fast Walsh-Hadamard Transform) is done with 8 field operations (additions and subtractions), so overall computing our base case costs $2N$ field operations.
\end{itemize}

Lastly, as before, there are $N - 1$ operations of ``multiply by a power of 2" to multiply each input by a power $2^k$ other than $x_0$.

Overall, this gives us
\begin{align*}
    T_H(N) &= 23(N/8) \left( \frac{\log N - (\log N \Mod 3)}{3} \right) + (\log N \Mod 3)N  + N - 1\\
    &= \frac{23 N \log N}{24} + N\left[ - \frac{23}{24}(\log N \Mod 3) + (\log N \Mod 3) \right] + N - 1\\
    &= \frac{23 N \log N}{24} + \frac{N}{24}(\log N \Mod 3) + N - 1
\end{align*}
for the number of field operations used by algorithm \ref{fast:hadamard:8x8} to compute the WHT on an $N \times N$ sized input. All of the operations performed in the base cases are field additions and subtractions, so the operations are split into $\frac{22 N \log N}{24} + \frac{N}{24}(\log N \Mod 3)$ field additions and subtractions, $\frac{N \log N}{24}$ scalar multiplications by $1/2$, and $N - 1$ ``multiply by a power of 2" operations. In the `worst case' when $\log N \equiv 2 \Mod 3$, this is $\frac{22 N \log N}{24} + \frac{N}{12}$ field additions, $\frac{N \log N}{24}$ scalar multiplications by $1/2$, and $N - 1$ ``multiply by a power of 2" operations.

\end{proof}

\section{Complexity of $H'$}
\label{section:hprime-runtime}

Since $H'$ is used as a subroutine of Algorithm \ref{alg:HUFFT} for computing the DFT, in this section we switch back to counting real operations on complex inputs instead of counting general field operations as in the previous section.

$H'_N$ is defined only when $N$ is a power of $2$ by the base cases $H'_1 = \begin{bmatrix} 1 \end{bmatrix}$ and $H'_2 = \begin{bmatrix} 1 & \\ & 1 \end{bmatrix}$ and by induction has the recursive structure
$$H'_N = \unevenmatrix{H'_{N/2}}{H'_{N/4}}{H'_{N/4}}{H'_{N/4}}{-H'_{N/4}}.$$

This recursive definition looks similar to that of the WHT. Looking at an example such as $H'_{32}$ quickly suggests that $H'$ is (a permutation of) a direct sum of WHT matrices:

\begin{figure}[H]
    \centering
$\begin{bmatrix}\includegraphics[width=\textwidth]{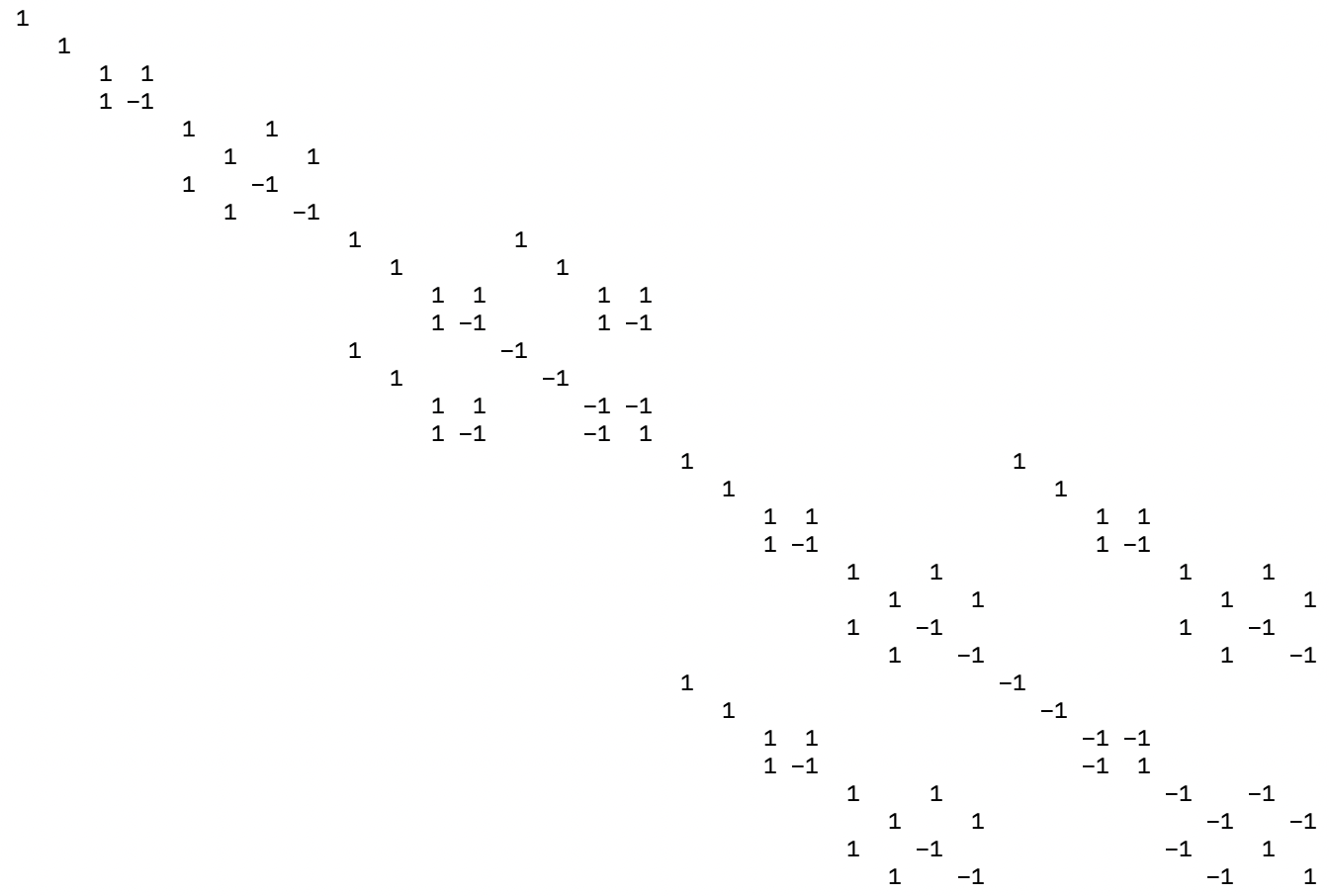}\end{bmatrix}$
    \caption{$H'_{32}$}
    \label{fig:h_prime_32}
\end{figure}

Indeed, by induction on $N$, one can see that the transformation $H'_N$ on $x$ is equivalent to performing many smaller WHTs on the different subvectors of a partition of $x$. See Appendix \ref{appendix:section:properties_hprime} below for more details. Our algorithm for quickly performing the transformation $H'_N$ then uses this obervation, and separately applies our algorithm for the WHT from the previous section to each subvector of $x$.

We first tackle the issue of partitioning the input properly. The following function, Algorithm~\ref{partition:hprime:alg}, computes an array of arrays that corresponds to the proper partition of an input of size $N$. Since it depends only on $N$, it is part of the definition of our arithmetic circuit rather than a function we compute once we are given $x$. Equivalently, we imagine precomputing it before we are given $x$ as input (at the same time as we precompute twiddle factors and other constants in our arithmetic circuit).

\begin{algorithm}[H]
\caption{Walsh-Hadamard Partition of $H'$} \label{partition:hprime:alg}
\begin{algorithmic}[1]

\Procedure{Partition}{$N$} $\rightarrow Ans$ \Comment{$N \in \N$}
    \If{$N = 1$} $Ans \leftarrow [[0]]$ \text{and \textbf{end procedure}}
    \EndIf
    \If{$N = 2$} $Ans \leftarrow [[0], [1]]$ \text{and \textbf{end procedure}}
    \EndIf
    \State $Ans \leftarrow []$ \Comment{$Ans$ is initialized as an empty array.}
    \State $P1 \leftarrow Partition(N/2)$
    \State $P2 \leftarrow Partition(N/4)$
    \For{$subset \in P1$}
        \State $Ans.append(subset)$
    \EndFor
    \For{$subset \in P2$}
        \State Instantiate temporary empty array $Temp$
        \For{$index \in subset$}
            \State $Temp.append(index + N/2)$
            \State $Temp.append(index + 3N / 4)$
        \EndFor
        \State $Ans.append(Temp)$
    \EndFor
\EndProcedure
\end{algorithmic}
\end{algorithm}

And now we can write our algorithm for computing $H'$ in terms of an algorithm $H$ that computes the WHT (assuming arrays are zero indexed).

\begin{algorithm}[H]
\caption{$H'$}
\label{hprime:alg}
\begin{algorithmic}[1]

\Procedure{$H'$}{$x$} $\rightarrow y_{0, 1, \dots N - 1}$ \Comment{$x \in \C^N$}
    \State $Partition \leftarrow \mathrm{Partition}(|x|)$ \Comment{Call algorithm \ref{partition:hprime:alg}}
    \For{$subset \in Partition$}
        \State $y[subset[j]]_{j = 0}^{|subset| - 1} \leftarrow H(x[subset[j]]_{j = 0}^{|subset| - 1})$ \label{line:call_hadmard:hprime}
    \EndFor
\EndProcedure
\end{algorithmic}
\end{algorithm}

\begin{lemma} \label{hprime:alg:correctness}
    Algorithm \ref{partition:hprime:alg} outputs a partition of the indices of $x$ where computing the transformation $H'_N$ on $x$ is equivalent to computing the WHT of each subset in the partition. Using this, Algorithm~\ref{hprime:alg} correctly computes $H'_Nx$.
\end{lemma}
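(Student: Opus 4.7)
The plan is to prove the lemma by induction on $N$, establishing simultaneously that (i) the subsets produced by Algorithm~\ref{partition:hprime:alg} form a partition of $[N]$, and (ii) for every subset $T$ in this partition, the entries of $H'_N x$ at positions in $T$ equal $H_{|T|}(x|_T)$, where $x|_T$ denotes the subvector of $x$ read out at the positions of $T$ in the order they appear. The correctness of Algorithm~\ref{hprime:alg} then follows immediately, since line~\ref{line:call_hadmard:hprime} performs precisely this WHT on each subvector and writes the result back into $y$ at the same indices.

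The base cases $N=1$ and $N=2$ are immediate: $H'_1 = [1]$ and $H'_2 = I_2$, and the partitions $[[0]]$ and $[[0],[1]]$ reduce to size-$1$ WHTs, which act as identities. For the inductive step, I use the block decomposition
$$H'_N = \begin{bmatrix} H'_{N/2} & 0 \\ 0 & H_2 \otimes H'_{N/4} \end{bmatrix}$$
alongside the recursive structure of Algorithm~\ref{partition:hprime:alg}: it first appends every subset of the $N/2$ partition $P_1$ (which lives in $[0,N/2)$), and then, for each subset $S \in P_2$ (itself a partition of $[0, N/4)$), appends the interleaved subset $T_S = (s_0 + N/2,\, s_0 + 3N/4,\, s_1 + N/2,\, s_1 + 3N/4,\, \ldots)$, which lives in $[N/2, N)$. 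Disjointness of the $T_S$ follows from disjointness of the $S$, so together with $P_1$ the returned family is a partition of $[N]$, and on positions in $[0, N/2)$ the inductive hypothesis applied to $H'_{N/2}$ handles the top-left block directly.

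The key step is the bottom-right block $H_2 \otimes H'_{N/4}$ acting on $(u,v)$ with $u = x[N/2..3N/4-1]$ and $v = x[3N/4..N-1]$. This block produces $H'_{N/4}(u+v)$ in its first half and $H'_{N/4}(u-v)$ in its second half, so applying the induction hypothesis to each of $u \pm v$ gives that the required output at position $s_i + N/2$ is $H_{|S|}(u|_S + v|_S)_i$ and at position $s_i + 3N/4$ is $H_{|S|}(u|_S - v|_S)_i$. On the other hand, Algorithm~\ref{hprime:alg} simply computes $H_{2|S|}(x|_{T_S})$, whose $(2i)$-th and $(2i{+}1)$-th entries I will identify with these target values using the elementary identity $H_{2k}[2i+b,\, 2m+c] = (-1)^{bc}\, H_k[i,m]$, which follows from the WHT's bitwise inner-product definition applied along the least-significant bit. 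Summing out $m$ and $c$ yields $H_{2|S|}(x|_{T_S})_{2i} = H_{|S|}(u|_S + v|_S)_i$ and $H_{2|S|}(x|_{T_S})_{2i+1} = H_{|S|}(u|_S - v|_S)_i$, matching exactly and completing the induction.

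The main obstacle I anticipate is the careful bookkeeping of the interleaving: establishing the bit-level identity for $H_{2k}$ and matching the even/odd outputs of the WHT on the interleaved subvector with the order in which Algorithm~\ref{partition:hprime:alg} writes indices into $T_S$. Once that indexing point is cleanly set up, the induction reduces to straightforward substitution.
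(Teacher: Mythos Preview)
Your proposal is correct and follows essentially the same inductive route as the paper: the paper's proof of this lemma is a one-line pointer to Theorem~\ref{hprime:partition:correctness} in Appendix~\ref{appendix:section:properties_hprime}, which proceeds by the same induction on $N$, splitting off the top-left $H'_{N/2}$ block and then, for the bottom-right $H_2\otimes H'_{N/4}$ block, merging each subset $s$ of the $N/4$-partition with its shifted copy $s'$ and invoking the recursive structure of the WHT to identify the result with a single $H_{2|s|}$.

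One small point worth noting: the paper's appendix proof phrases the merged subset as the \emph{concatenation} $s\circ s'$ and uses the $H_{2k}=H_2\otimes H_k$ decomposition, whereas Algorithm~\ref{partition:hprime:alg} literally produces the \emph{interleaved} order $(s_0{+}N/2,\,s_0{+}3N/4,\,s_1{+}N/2,\ldots)$. Your use of the identity $H_{2k}[2i{+}b,2m{+}c]=(-1)^{bc}H_k[i,m]$ (i.e., the $H_{2k}=H_k\otimes H_2$ decomposition) matches the algorithm's actual index ordering more directly than the appendix argument does. Both orderings yield the same output values at the same positions because $H_N$ factors along any bit, so this is a cosmetic difference, but your version is arguably the cleaner match to the code.
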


\begin{proof}
    This algorithm constructs the partition in exactly the same way as we do in our proof of Theorem \ref{hprime:partition:correctness} in the Appendix \ref{appendix:section:properties_hprime}. By the definition of this partition, algorithm \ref{hprime:alg} correctly outputs $H'_Nx$.
\end{proof}

It remains to analyze Algorithm~\ref{hprime:alg}. 
Before we get into the proof, we define notation that will help formalize the relationship between the Walsh-Hadamard matrix and the $H'$ matrices. For positive integers $N_1, N_2$ which are powers of $2$, denote by $F(N_1, N_2)$ the number of copies of $H_{N_2}$ in $H'_{N_1}$ (i.e., the number of subsets $subset \in Partition(N_1)$ where $|subset| = N_2$). We have $F(1, 1) = 1$ and $F(2, 1) = 2$ as our base cases, $F(N_1, N_2) = 0$ for $N_2 < 1$ and for $N_2 > N_1$ \footnote{The $N_2 < 1$ and $N_2 > N_1$ degenerate cases will come into play in a later proof.}, and the recurrence $F(N_1, N_2) = F(N_1 / 2, N_2) + F(N_1 / 4, N_2 / 2)$, which follows from the recursive definition of $H'$, and which we prove in Appendix \ref{appendix:section:properties_hprime}.

Before calculating the operation count of Algorithm~\ref{hprime:alg}, we begin with three useful lemmas about $F(N_1, N_2)$.

\begin{lemma} \label{lemma:sum_of_subsets_and_F}
    $$\sum_{j = 0}^{\log N} F(N, 2^j) 2^j = N$$
\end{lemma}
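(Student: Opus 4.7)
The plan is to prove this identity by a direct combinatorial counting argument, using the fact (established earlier via Lemma~\ref{hprime:alg:correctness} and the discussion of the partition structure of $H'_N$) that $\mathrm{Partition}(N)$ outputs a genuine partition of $\{0, 1, \ldots, N-1\}$ whose blocks all have sizes that are powers of $2$. Once this is in hand, the identity is nothing more than counting the elements of $\{0, 1, \ldots, N-1\}$ two ways: globally (getting $N$) and by grouping the blocks of the partition according to their size (getting $\sum_j F(N, 2^j) \cdot 2^j$, since by definition $F(N, 2^j)$ counts the blocks of size $2^j$, each of which contributes $2^j$ elements).

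So the first step in the write-up is to observe that $\mathrm{Partition}(N)$ from Algorithm~\ref{partition:hprime:alg} is a partition of $[N]$: one checks the base cases $N=1,2$ directly, and then the recursive step appends the blocks of $\mathrm{Partition}(N/2)$ (which cover indices $0,\ldots,N/2-1$) together with blocks formed from $\mathrm{Partition}(N/4)$ by mapping each index $i$ to the pair $\{i + N/2,\ i + 3N/4\}$ (which partitions $\{N/2,\ldots,N-1\}$). The second step is to record that every block has a power-of-$2$ size, which is immediate by the same induction since the base-case blocks have size $1$ and sizes are preserved (resp.\ doubled) in the recursive append. Then $\sum_j F(N, 2^j) \cdot 2^j$ equals the total number of elements contained in all blocks, which equals $|[N]| = N$.

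As an alternative (and as a sanity check), one can give a short inductive proof directly from the recurrence $F(N_1, N_2) = F(N_1/2, N_2) + F(N_1/4, N_2/2)$: assuming the identity for $N/2$ and $N/4$, we compute
\begin{align*}
\sum_{j=0}^{\log N} F(N, 2^j)\, 2^j
 &= \sum_{j=0}^{\log N} \bigl(F(N/2, 2^j) + F(N/4, 2^{j-1})\bigr)\, 2^j \\
 &= \sum_{j=0}^{\log(N/2)} F(N/2, 2^j)\, 2^j \;+\; 2 \sum_{j=0}^{\log(N/4)} F(N/4, 2^{j}) \, 2^{j} \\
 &= \tfrac{N}{2} + 2 \cdot \tfrac{N}{4} = N,
\end{align*}
where the degenerate cases $F(N_1, N_2) = 0$ for $N_2 > N_1$ take care of the boundary terms when we re-index the second sum. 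I do not expect any real obstacle here; the only subtlety is bookkeeping at the boundary indices of the sums, which is handled cleanly by the stated convention that $F$ vanishes outside its natural range.
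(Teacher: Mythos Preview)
Your proposal is correct and your primary counting argument is exactly the paper's proof: the paper observes in one line that since $F(N,2^j)$ counts the size-$2^j$ blocks of the partition, the sum is the total size of all blocks, which equals $N$. Your write-up simply supplies more justification (verifying that $\mathrm{Partition}(N)$ really partitions $[N]$ into power-of-$2$ blocks) and adds the inductive alternative via the recurrence, which the paper does not include.
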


\begin{proof}
    
    Since $F(N, 2^j)$ is the number of size $2^j$ subsets in our partition of a length $N$ input, $\sum_{j = 0}^{\log N} F(N, 2^j) 2^j$ is just the total size of all subsets in our partition which by definition is $N$.
\end{proof}

    \begin{lemma} \label{lemma:sum-of-mod}
        $$\sum_{j = 0}^{\log N}F(N, 2^j) \frac{2^j}{12}(j \Mod 3) < N/12 + o(N^{0.8})$$
    \end{lemma}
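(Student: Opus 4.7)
The plan is to use discrete Fourier analysis on $\mathbb{Z}/3\mathbb{Z}$ to decompose the factor $(j \Mod 3)$, and then bound the resulting oscillating sums via the linear recurrence satisfied by $F(N, 2^j)$. First I would introduce the generating polynomial $G(N; z) := \sum_{j \geq 0} F(N, 2^j)\, z^j$. The recurrence $F(N_1, N_2) = F(N_1/2, N_2) + F(N_1/4, N_2/2)$ lifts to
\[
G(N; z) \;=\; G(N/2; z) + z\, G(N/4; z),
\]
with base cases $G(1; z) = 1$ and $G(2; z) = 2$; Lemma~\ref{lemma:sum_of_subsets_and_F} supplies the evaluation $G(N; 2) = N$.

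Next, letting $\omega = e^{2 \pi i / 3}$, I would expand the mod function using the characters of $\mathbb{Z}/3\mathbb{Z}$:
\[
(j \Mod 3) \;=\; 1 + \tfrac{1}{3}\bigl[(\omega^2 + 2\omega)\,\omega^j + (\omega + 2\omega^2)\,\omega^{2j}\bigr],
\]
which one verifies by evaluating at $j = 0, 1, 2$. Substituting into the target sum gives
\[
\sum_{j} F(N, 2^j)\, 2^j\, (j \Mod 3) \;=\; N + \tfrac{1}{3}\bigl[(\omega^2 + 2\omega)\, G(N; 2\omega) + (\omega + 2\omega^2)\, G(N; 2\omega^2)\bigr].
\]
After dividing by $12$, the lemma reduces to showing $|G(N; 2\omega)| = o(N^{0.8})$; the other term is its complex conjugate, so one bound handles both.

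For $z = 2\omega$, the sequence $g_n := G(2^n; z)$ satisfies the second-order recurrence $g_n = g_{n-1} + 2\omega\, g_{n-2}$, with characteristic roots $t_{\pm} = (1 \pm \sqrt{1 + 8\omega})/2$, so $g_n = c_+ t_+^n + c_- t_-^n$ for constants determined by the base cases. Since $1 + 8\omega = -3 + 4\sqrt{3}\, i$ has modulus $\sqrt{57}$ and negative real part $-3$, its principal square root has real part $\sqrt{(\sqrt{57} - 3)/2}$, and a direct calculation yields
\[
|t_+|^2 \;=\; \tfrac{1 + \sqrt{57}}{4} + \tfrac{1}{2}\sqrt{\tfrac{\sqrt{57} - 3}{2}} \;\approx\; 2.892.
\]
Since $2^{1.6} \approx 3.031 > 2.892$, I conclude $|t_+| < 2^{0.8}$, so $|g_n| = O(|t_+|^n) = O(N^{\log_2 |t_+|}) = o(N^{0.8})$, which finishes the proof.

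The main obstacle will be the numerical verification that $|t_+| < 2^{0.8}$; this inequality is exactly what fixes the exponent $0.8$ in the lemma's statement. The true exponent $\log_2 |t_+| \approx 0.766$ leaves enough slack that a crude interval estimate (using, e.g., $\sqrt{57} < 7.56$ to bound both summands in $|t_+|^2$) suffices, so this step is concrete but unavoidable.
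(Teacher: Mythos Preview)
Your proposal is correct and takes a genuinely different route from the paper. The paper defines $a_v(N) = \sum_{j \equiv v \pmod 3} F(N,2^j)\,2^j$, stacks $(a_0,a_1,a_2)$ for consecutive $N$ into a $6$-dimensional real vector, and diagonalizes the resulting $6 \times 6$ recursion matrix numerically; the eigenvalues are $-1$, $2$, and four complex numbers $\tfrac{1}{2}(1 \pm \sqrt{-3 \pm 4i\sqrt{3}})$ of modulus below $\sqrt{3}$, from which the paper reads off the $N/12$ main term and an $O(N^{\log_2 \sqrt{3}}) = o(N^{0.8})$ error. Your character decomposition on $\mathbb{Z}/3\mathbb{Z}$ is exactly what block-diagonalizes this $6\times 6$ matrix in advance: evaluating $G(N;z)$ at $z=2$, $2\omega$, $2\omega^2$ splits the problem into three scalar second-order recurrences, the first giving $N$ by Lemma~\ref{lemma:sum_of_subsets_and_F}, and your roots $t_\pm = \tfrac{1}{2}(1 \pm \sqrt{1+8\omega}) = \tfrac{1}{2}(1 \pm \sqrt{-3+4\sqrt{3}\,i})$ are literally the paper's complex eigenvalues. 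What your approach buys is a clean closed form and a conceptual explanation for why the matrix diagonalizes so nicely, replacing the paper's pages of numerical entries for $S$ and $S^{-1}$ with a one-line character identity; the paper's approach buys nothing extra except that it avoids introducing complex numbers into a real-valued statement. One small point to make explicit in your write-up: since $t_+ t_- = -2\omega$ gives $|t_-| = 2/|t_+| < |t_+|$ and the discriminant $1+8\omega \neq 0$, the dominant-root estimate $|g_n| = O(|t_+|^n)$ is indeed justified.
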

    
    In light of Lemma~\ref{lemma:sum_of_subsets_and_F}, this Lemma~\ref{lemma:sum-of-mod} is not too surprising. The proof requires carefully bounding how much `error' can accrue from the ($j \Mod 3$) term, which we do by diagonalizing the matrix for the recursion defining this sum. Since the mathematical details are not particularly enlightening, we defer the full proof to Appendix~\ref{appendix:proof:sum-of-mod}.

    \begin{lemma} \label{hprime:complexity_vs_hadamard:lemma}
        $$\sum_{j = 0}^{\log N}F(N, 2^j) 2^jj = \frac{1}{3}N \log N + \frac{2}{9}(-1)^{\log N} - \frac{2}{9}N$$
    \end{lemma}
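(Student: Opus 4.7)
My plan is to prove this by establishing a recurrence for the sum $S(N) := \sum_{j=0}^{\log N} F(N,2^j)\,2^j\,j$, verifying it matches the base cases, and then checking that the claimed closed form satisfies the same recurrence.

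First I would derive a clean recurrence for $S(N)$. Applying the recursion $F(N,2^j) = F(N/2,2^j) + F(N/4,2^{j-1})$ (valid for all $j$ given the degenerate conventions) term-by-term inside the sum gives
\[
S(N) \;=\; \sum_{j=0}^{\log N} F(N/2,2^j)\,2^j\,j \;+\; \sum_{j=0}^{\log N} F(N/4,2^{j-1})\,2^j\,j.
\]
The first sum is simply $S(N/2)$, since $F(N/2, 2^j) = 0$ for $2^j > N/2$. For the second sum I would substitute $k = j-1$, noting that the $k = -1$ term vanishes because $F(N/4,2^{-1}) = 0$, yielding
\[
\sum_{k=0}^{\log(N/4)} F(N/4,2^k)\,2^{k+1}(k+1) \;=\; 2\sum_{k=0}^{\log(N/4)} F(N/4,2^k)\,2^k\,k \;+\; 2\sum_{k=0}^{\log(N/4)} F(N/4,2^k)\,2^k.
\]
The first of these is $2S(N/4)$, and the second is $2 \cdot (N/4) = N/2$ by Lemma~\ref{lemma:sum_of_subsets_and_F}. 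Combining, I obtain the recurrence
\[
S(N) \;=\; S(N/2) + 2S(N/4) + \tfrac{N}{2},
\]
with base cases $S(1) = 0$ and $S(2) = 0$ (both because the $j=0$ term contributes $j=0$).

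Next, let $T(N) := \tfrac{1}{3}N\log N + \tfrac{2}{9}(-1)^{\log N} - \tfrac{2}{9}N$. I would verify $T(1) = \tfrac{2}{9} - \tfrac{2}{9} = 0$ and $T(2) = \tfrac{2}{3} - \tfrac{2}{9} - \tfrac{4}{9} = 0$, matching the base cases. Then I would plug $T$ into the recurrence: expanding $T(N/2)$ and $T(N/4)$ (using $(-1)^{\log N - 1} = -(-1)^{\log N}$ and $(-1)^{\log N - 2} = (-1)^{\log N}$), adding $T(N/2) + 2T(N/4) + N/2$, and collecting coefficients of $N\log N$, $(-1)^{\log N}$, and $N$. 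The $N\log N$ coefficient will come out to $\tfrac{1}{6} + \tfrac{1}{6} = \tfrac{1}{3}$, the $(-1)^{\log N}$ coefficient will be $-\tfrac{2}{9} + \tfrac{4}{9} = \tfrac{2}{9}$, and the $N$ coefficient will work out to $-\tfrac{1}{6} - \tfrac{1}{9} - \tfrac{1}{3} - \tfrac{1}{9} + \tfrac{1}{2} = -\tfrac{2}{9}$, reproducing $T(N)$ exactly. By induction on $\log N$, $S(N) = T(N)$, which is the desired identity.

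The only non-routine step is setting up the recurrence for $S(N)$, particularly handling the index-shift in the second sum and noticing that the residual term $\sum_k F(N/4,2^k)\,2^k$ from splitting $k+1 = k + 1$ is exactly the quantity evaluated in Lemma~\ref{lemma:sum_of_subsets_and_F}; everything afterward is a mechanical check of a linear recurrence with constant-coefficient forcing, which the closed form is manifestly designed to satisfy.
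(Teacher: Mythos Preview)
Your proposal is correct and follows essentially the same approach as the paper: both derive the recurrence $S(N)=S(N/2)+2S(N/4)+N/2$ via the recursion for $F$ and the identity of Lemma~\ref{lemma:sum_of_subsets_and_F}, then solve it with the stated base cases. If anything, you are slightly more explicit than the paper in verifying that the closed form satisfies the recurrence, whereas the paper simply asserts the solution.
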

    \begin{proof}
        Let $f(N) = \sum_{j = 0}^{\log N}F(N, 2^j) 2^jj$. Then,
        \begin{align*}
            f(N) &= \sum_{j = 0}^{\log N}F(N, 2^j) 2^jj \\
            &= \sum_{j = 0}^{\log N} \left[ F(N/2, 2^j) + F(N/4, 2^{j - 1}) \right] 2^jj \\
            &= \sum_{j = 0}^{\log N} F(N/2, 2^j)2^jj + \sum_{j = 0}^{\log N} F(N/4, 2^{j - 1})2^jj \\
            &= \sum_{j = 0}^{\log N} F(N/2, 2^j)2^jj + 2\sum_{j = 0}^{\log N} F(N/4, 2^{j})2^{j}(j + 1) \\
            &= \sum_{j = 0}^{\log N} F(N/2, 2^j)2^jj + 2\left[\sum_{j = 0}^{\log N} F(N/4, 2^{j})2^{j}j + \sum_{j = 0}^{\log N} F(N/4, 2^{j})2^{j} \right] \\
            &= \sum_{j = 0}^{\log N} F(N/2, 2^j)2^jj + 2\sum_{j = 0}^{\log N} F(N/4, 2^{j})2^{j}j + N/2 \\
             &= \sum_{j = 0}^{\log N - 1} F(N/2, 2^j)2^jj + 2\sum_{j = 0}^{\log N - 2} F(N/4, 2^{j})2^{j}j + N/2, \\
            f(N) &= f(N/2) + 2f(N/4) + N/2. \\
        \end{align*}
    
        With the base cases $f(1) = 0$ and $f(2) = 0$, this recursion solves to 
        
        $$f(N) = \sum_{j = 0}^{\log N}F(N, 2^j) 2^jj = \frac{1}{3}N \log N + \frac{2}{9}(-1)^{\log N} - \frac{2}{9}N < \frac{1}{3} N \log N.$$
    \end{proof}

We are now ready to calculate the operation count:

\begin{theorem} \label{hprime:complexity:thm}
If Algorithm~\ref{fast:hadamard:8x8} is used to compute $H$ on line~\ref{line:call_hadmard:hprime} of Algorithm \ref{hprime:alg}, then Algorithm \ref{hprime:alg} computes $H'_Nx$ in at most $\frac{23}{36}N \log N + \frac{25}{12}N + o(N^{0.8})$ real operations, consisting of
    
    \begin{itemize}
        \item $\frac{22}{36} N \log N + \frac{N}{12} + o(N^{0.8})$ real additions
        \item $\frac{1}{36} N \log N$ ``divide by two" operations
        \item $2N - 2$ ``multiply by a power of two" operations
    \end{itemize}
\end{theorem}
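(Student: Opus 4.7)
The plan is to express the total cost of Algorithm~\ref{hprime:alg} as a sum of per-subset WHT costs and then evaluate each sum using Lemmas~\ref{lemma:sum_of_subsets_and_F}, \ref{lemma:sum-of-mod}, and \ref{hprime:complexity_vs_hadamard:lemma}. First, by Lemma~\ref{hprime:alg:correctness}, Algorithm~\ref{hprime:alg} correctly computes $H'_N x$ by applying the WHT independently on each subset of $\mathrm{Partition}(N)$, so its total operation count is $\sum_{j=0}^{\log N} F(N, 2^j)$ times the cost of a complex WHT on an input of size $2^j$ (since $F(N, 2^j)$ counts exactly the number of subsets of size $2^j$). Since the inputs are complex, I would translate each field-operation count from Theorem~\ref{fast:hadamard:8x8:thm} by multiplying by $2$, so that a complex WHT on a vector of length $2^j$ uses $\tfrac{11}{6} \cdot 2^j j + \tfrac{1}{12} \cdot 2^j (j \Mod 3)$ real additions, $\tfrac{1}{12} \cdot 2^j j$ real ``divide by $2$'' operations, and $2(2^j - 1)$ real ``multiply by a power of $2$'' operations.

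Weighting each per-subset cost by $F(N, 2^j)$ and summing over $j$ gives three sums to bound. For additions, Lemma~\ref{hprime:complexity_vs_hadamard:lemma} yields $\tfrac{11}{6} \sum_j F(N, 2^j) 2^j j = \tfrac{22}{36} N \log N + \tfrac{11}{27}((-1)^{\log N} - N)$, whose $O(N)$ tail is non-positive for large $N$; Lemma~\ref{lemma:sum-of-mod} yields $\tfrac{1}{12} \sum_j F(N, 2^j) 2^j (j \Mod 3) < \tfrac{N}{12} + o(N^{0.8})$; summing bounds the addition count by $\tfrac{22}{36} N \log N + \tfrac{N}{12} + o(N^{0.8})$. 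For divides, Lemma~\ref{hprime:complexity_vs_hadamard:lemma} alone gives $\tfrac{1}{12} \sum_j F(N, 2^j) 2^j j \leq \tfrac{1}{36} N \log N$ (the lower-order correction is again non-positive). For multiplies, Lemma~\ref{lemma:sum_of_subsets_and_F} gives $2 \sum_j F(N, 2^j)(2^j - 1) = 2N - 2|\mathrm{Partition}(N)| \leq 2N - 2$ since the partition is non-empty. Adding the three bounds yields the claimed total of $\tfrac{23}{36} N \log N + \tfrac{25}{12} N + o(N^{0.8})$ real operations.

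The main subtlety I anticipate is the sign bookkeeping for the $O(N)$ lower-order corrections from Lemma~\ref{hprime:complexity_vs_hadamard:lemma}: these must be recognized as non-positive so that they can be absorbed cleanly into the stated upper bounds for additions and divides once the Lemma~\ref{lemma:sum-of-mod} contribution (and the $2N-2$ multiply bound) is added in. Aside from that, the argument is a direct substitution of the lemma values into linear combinations of the three sums, with the $o(N^{0.8})$ error term arising entirely from Lemma~\ref{lemma:sum-of-mod}.
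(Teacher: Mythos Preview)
Your proposal is correct and follows essentially the same route as the paper: express $T_{H'}(N)=\sum_{j} F(N,2^j)\,T_H(2^j)$, plug in the (doubled-for-complex) operation counts from Theorem~\ref{fast:hadamard:8x8:thm}, and evaluate the resulting sums via Lemmas~\ref{lemma:sum_of_subsets_and_F}, \ref{lemma:sum-of-mod}, and \ref{hprime:complexity_vs_hadamard:lemma}. The only difference is that you keep the three operation types separate throughout, which actually matches the itemized statement of the theorem more directly than the paper's proof (which only writes out the total and leaves the per-type breakdown implicit).
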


\begin{proof}

    We now compute the absolute number of real operations involved in computing the transformation $H'$. This relationship between $H'$ and the WHT leads us to the following conclusion about the operation count of the $H'$:
    $$T_{H'}(N) = \sum_{j = 0}^{\log N}F(N, 2^j) T_H(2^j),$$
    where $T_H(N)$ is the real operation count of the WHT on a complex valued input of length $N$. 
    
    First, see Theorem \ref{fast:hadamard:8x8:thm} for a result that $T_H(N) = \frac{46 N \log N}{24} + \frac{N}{12}(\log N \Mod 3) + 2N - 2$ real operations over the field of complex numbers (a complex addition or scalar multiplication is done with two real operations).
    
    We use our Lemmas about $F(N_1, N_2)$ to compute:
    \begin{align*}
        T_{H'}(N) &= \sum_{j = 0}^{\log N}F(N, 2^j) T_H(2^j) \\
        &= \sum_{j = 0}^{\log N}F(N, 2^j) \left[ \frac{46 (2^j)(j)}{24} + \frac{2^j}{12}(j \Mod 3) + 2(2^j) - 2\right] \\
        &= \sum_{j = 0}^{\log N}F(N, 2^j) \frac{46 (2^j)(j)}{24} + \sum_{j = 0}^{\log N}F(N, 2^j) \frac{2^j}{12}(j \Mod 3)  + \sum_{j = 0}^{\log N}F(N, 2^j) [2(2^j) - 2] \\
        &= \frac{46}{24} \left[\sum_{j = 0}^{\log N}F(N, 2^j) 2^jj \right] + \sum_{j = 0}^{\log N}F(N, 2^j) \frac{2^j}{12}(j \Mod 3) + 2N - 2\left(\sum_{j = 0}^{\log N}F(N, 2^j) \right)
    \end{align*}

    Applying Lemma~\ref{hprime:complexity_vs_hadamard:lemma},
    \begin{align*}
        &= \frac{46}{24} \left[\sum_{j = 0}^{\log N}F(N, 2^j) 2^jj \right] + \sum_{j = 0}^{\log N}F(N, 2^j) \frac{2^j}{12}(j \Mod 3) + 2N - 2\left(\sum_{j = 0}^{\log N}F(N, 2^j) \right) \\
        &< \frac{23}{36} N \log N + \sum_{j = 0}^{\log N}F(N, 2^j) \frac{2^j}{12}(j \Mod 3)  + 2N - 2\left(\sum_{j = 0}^{\log N}F(N, 2^j) \right)
    \end{align*}
    
    To compute the latter part of our operation count, we use Lemma~\ref{lemma:sum-of-mod}, along with the simple observation that 
        $$2N - 2\left(\sum_{j = 0}^{\log N}F(N, 2^j) \right) < 2N$$
    to conlcude
    $$T_{H'}(N) < \frac{23}{36}N \log N + \frac{25}{12}N + o(N^{0.8}).$$
\end{proof}

\section{Asymptotic Complexity of FFT in Relation to Walsh-Hadamard-Transform}

By separating out $H'$ in our FFT algorithm and identifying the relationship between $H'$ and the WHT as in the previous section, we can directly relate the complexity of computing the DFT to the complexity of computing the WHT, from which Lemma~\ref{lem:reduction} follows.

We conclude with a special case of Lemma~\ref{lem:reduction}, to determine what would happen for the DFT if one were to give an $o(N \log N)$ operation count algorithm for the WHT.

\begin{theorem}
    If $T_H(N) = o(N \log N)$, then $T_{H'}(N) = o(N \log N)$
\end{theorem}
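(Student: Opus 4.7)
The plan is to use the key identity $T_{H'}(N) = \sum_{j=0}^{\log N} F(N, 2^j)\, T_H(2^j)$ established in the proof of Theorem~\ref{hprime:complexity:thm}, together with the weighted counts of Lemma~\ref{lemma:sum_of_subsets_and_F} ($\sum_j F(N,2^j) 2^j = N$) and Lemma~\ref{hprime:complexity_vs_hadamard:lemma} ($\sum_j F(N,2^j) 2^j j = \tfrac13 N \log N + O(N)$). The strategy is a standard ``split the sum at a slowly growing threshold'' argument: for small indices $j$ we use a trivial bound on $T_H(2^j)$, and for large indices $j$ we exploit the assumed fact that $T_H(2^j)/(2^j j) \to 0$.

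Concretely, I would proceed as follows. Fix an arbitrary $\delta > 0$. Since $T_H(N) = o(N \log N)$, there exists $J = J(\delta)$ such that $T_H(2^j) \le \delta \cdot 2^j \cdot j$ for all $j > J$. Then split
\[
T_{H'}(N) \;=\; \underbrace{\sum_{j=0}^{J} F(N, 2^j)\, T_H(2^j)}_{S_1(N)} \;+\; \underbrace{\sum_{j=J+1}^{\log N} F(N, 2^j)\, T_H(2^j)}_{S_2(N)}.
\]
For $S_1(N)$, since $T_H(2^j) \le 2^j \cdot j$ by the folklore algorithm and $j \le J$ is bounded, every term satisfies $T_H(2^j) \le 2^J J =: C_J$, so $S_1(N) \le C_J \sum_{j=0}^J F(N,2^j) \le C_J \cdot N$ using Lemma~\ref{lemma:sum_of_subsets_and_F}. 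For $S_2(N)$, the assumed bound gives
\[
S_2(N) \;\le\; \delta \sum_{j=J+1}^{\log N} F(N, 2^j)\, 2^j j \;\le\; \delta \sum_{j=0}^{\log N} F(N, 2^j)\, 2^j j \;\le\; \tfrac{\delta}{3} N \log N + O(\delta N),
\]
by Lemma~\ref{hprime:complexity_vs_hadamard:lemma}. Combining, $\limsup_{N \to \infty} T_{H'}(N)/(N \log N) \le \delta/3$, and since $\delta$ was arbitrary this limit is $0$, which is exactly $T_{H'}(N) = o(N \log N)$.

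I do not expect any real obstacle here; the only mildly delicate point is the choice of threshold $J$, which must depend on $\delta$ but be independent of $N$ so that the $S_1$ term contributes only $O_J(N) = o(N \log N)$. The two lemmas on $F(N,2^j)$ cited above are exactly what make this ``high frequencies dominate'' argument go through, and the identity from Theorem~\ref{hprime:complexity:thm} does all the structural work of reducing $H'$ to a weighted sum of WHT costs.
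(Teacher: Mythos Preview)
Your proof is correct, and while it starts from the same identity $T_{H'}(N) = \sum_{j} F(N,2^j)\,T_H(2^j)$ as the paper, it proceeds differently. The paper writes $T_H(N) = N f(N)$ with $f(N) = o(\log N)$ and simply bounds $f(2^j) \le f(N)$ for all $j \le \log N$, then invokes only Lemma~\ref{lemma:sum_of_subsets_and_F} to conclude $T_{H'}(N) \le N f(N)$. That step tacitly assumes $f$ is nondecreasing, which is not part of the hypothesis (though easily repaired by replacing $f$ with its running maximum). Your threshold-splitting argument avoids this issue entirely: you handle small $j$ with a crude constant bound and large $j$ with the $o(\cdot)$ hypothesis, invoking both Lemma~\ref{lemma:sum_of_subsets_and_F} and Lemma~\ref{hprime:complexity_vs_hadamard:lemma}. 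The paper's route is shorter and uses one fewer lemma; yours is more careful and does not rely on any implicit monotonicity. One small nitpick: your bound $T_H(2^j) \le 2^j j$ ``by the folklore algorithm'' presumes the hypothetical sub-$N\log N$ algorithm is never worse than folklore, which is not stated; it is cleaner to just say that $\{T_H(2^j) : j \le J\}$ is a finite set and hence bounded by some $C_J$.
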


\begin{proof}
    Recall the definition
    
    $$T_{H'}(N) = \sum_{j = 0}^{\log N}F(N, 2^j) T_H(2^j).$$
    
    Since we assume $T_H(N) = o(N \log N)$, let $T_H(N) = Nf(N)$ where $f(N) = o(\log N)$. Substituting gives
    \begin{align*}
        T_{H'}(N) &= \sum_{j = 0}^{\log N}F(N, 2^j) T_H(2^j) \\
        &= \sum_{j = 0}^{\log N}F(N, 2^j) 2^j f(2^j) \\
        &< \sum_{j = 0}^{\log N}F(N, 2^j) 2^j f(N) \\
        &= f(N) \sum_{j = 0}^{\log N}F(N, 2^j) 2^j \\
        &= Nf(N) \\
        &= o(N \log N).
    \end{align*}
\end{proof}

Combining this with our results from before, this means that if $T_H(N) = o(N \log N)$, then Algorithm~\ref{alg:HUFFT} becomes an FFT algorithm with $T_{FFT}(N) = \frac{28}{9} N \log N + o(N \log N)$.

\section{Acknowledgements}

We would like to thank Nir Ailon, Chi-Ning Chou, Sandeep Silwal, and anonymous reviewers for helpful comments on an earlier draft and Igor Sergeev for answering our questions about his algorithm in \cite{sergeev2017real}. This research was supported in part by NSF Grant CCF-2238221 and a grant from the
Simons Foundation (Grant Number 825870 JA).

\newpage

\bibliographystyle{alpha}
\bibliography{refs}

\newpage

\appendix

\section{Exact Operation Count of Walsh-Hadamard Uprooted FFT} \label{appendix:section:hufft_runtime}

We now compute the operation count of Algorithm~\ref{alg:HUFFT}. We begin with the subroutine $TW$.

\begin{lemma} \label{lem:TWapproxcount}
    The system of recurrences
    \begin{align*}
        T_{TW}(N) &= 5N + T_{TW}(N/2) + 2T_{TWS}(N/4) \\
        T_{TWS}(N) &= 4N + T_{TWS2}(N/2) + 2T_{TWS}(N/4) \\
        T_{TWS2}(N) &= 5N + T_{TWS4}(N/2) + 2T_{TWS}(N/4) \\
        T_{TWS4}(N) &= 6N + T_{TWS2}(N/2) + 2T_{TWS}(N/4)
    \end{align*}
    solves to $$T_{TW}(N) = \frac{28}{9} N \log N + O(N).$$
\end{lemma}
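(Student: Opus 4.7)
The plan is to substitute the ansatz $T_M(N) = c_M N \log N + d_M N + o(N)$ into each of the four recurrences and solve for the constants by matching coefficients.

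First I match the $N \log N$ coefficients. Each recurrence has the form $T_M(N) = k_M N + T_{M'}(N/2) + 2 T_{M''}(N/4)$, and since $\tfrac{1}{2} + 2 \cdot \tfrac{1}{4} = 1$, the matching produces four equations of the shape $c_M = (c_{M'} + c_{M''})/2$. I expect these to form a rank-deficient system that forces $c_{TW} = c_{TWS} = c_{TWS2} = c_{TWS4} =: c$ but does not pin down $c$; this is the standard situation for coupled divide-and-conquer recurrences whose leading constant is determined only at the subleading order.

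Next I match the $N$ coefficients. Each recurrence gives $d_M = k_M - \tfrac{3c}{2} + (d_{M'} + d_{M''})/2$, i.e., four linear equations in the four unknowns $d_M$ with $c$ as a parameter. The coefficient matrix inherits the same rank deficiency as the leading-order system, so the $d$-system has a solution only when the right-hand side lies in the range of the matrix, a single scalar consistency condition. The concrete step is to compute the left null vector of this coefficient matrix; I expect it to be proportional to $(0, 3, 2, 1)$, which can be interpreted as (a rescaling of) the limiting distribution $(0, \tfrac{1}{2}, \tfrac{1}{3}, \tfrac{1}{6})$ of call-types under the ``average-over-children'' random walk induced by the recursion on $\{TW, TWS, TWS2, TWS4\}$. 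Dotting this null vector against the right-hand-side vector $(5, 4, 5, 6) - \tfrac{3c}{2}(1,1,1,1)$ then gives the scalar equation $3 \cdot 4 + 2 \cdot 5 + 1 \cdot 6 = 9c$, i.e., $c = \tfrac{28}{9}$.

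The main obstacle is simply arithmetic bookkeeping in the second step; the conceptual point is that the leading-order rank deficiency transfers to a solvability condition at the subleading order, which uniquely pins down $c$. Once $c = \tfrac{28}{9}$ is fixed, the $d_M$'s are determined up to an overall additive constant that is set by the base cases and does not affect the stated $O(N)$ bound. To finish, I would verify the result by a routine induction on the residuals $E_M(N) := T_M(N) - \tfrac{28}{9} N \log N$: substituting the ansatz cancels the $N \log N$ contributions, so the $E_M$'s satisfy a perturbed version of the original coupled system whose homogeneous solutions grow at most linearly in $N$, yielding $E_M(N) = O(N)$ and in particular $T_{TW}(N) = \tfrac{28}{9} N \log N + O(N)$.
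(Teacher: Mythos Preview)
Your proposal is correct and takes a genuinely different route from the paper. The paper proceeds by elimination: it substitutes $T_{TWS4}$ into $T_{TWS2}$ and unrolls to obtain $T_{TWS2}(N)=\tfrac{32}{3}N+2\sum_{k\le \log N-2}T_{TWS}(2^k)$, then feeds this into the $T_{TWS}$ recurrence to get a single self-referential relation $T_{TWS}(N)=\tfrac{28}{3}N+2\sum_{k\le \log N-2}T_{TWS}(2^k)$, and finishes with a direct strong induction showing $T_{TWS}(N)\le\tfrac{28}{9}N\log N+cN$; only at the end is $T_{TW}$ recovered by one more substitution. Your approach instead treats the four functions symmetrically, reads off the leading constant from a Fredholm-alternative solvability condition at order $N$, and gives a pleasant structural explanation: the weights $(0,3,2,1)$ are the stationary distribution of the ``which subroutine am I in'' chain, and $\tfrac{28}{9}$ is the correspondingly weighted average of the per-level costs $(5,4,5,6)$ divided by the per-level $\log$-loss $3/2$. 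This is more conceptual and generalizes to similar coupled recurrences; the paper's hands-on approach, on the other hand, more readily yields the explicit closed form used later in Lemma~\ref{lemma:hufft_twiddle_runtime}.

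One small caution on your last step. The residuals $E_M(N)=T_M(N)-\tfrac{28}{9}N\log N$ satisfy $E_M(N)=(k_M-\tfrac{14}{3})N+E_{M'}(N/2)+2E_{M''}(N/4)$, and since the forcing is at resonance (the all-ones vector times $N$ is a homogeneous solution), a naive bound $|E_M|\le CN$ does not close by itself. You need to first subtract a particular solution $d_M N$ (which exists precisely because your solvability condition holds), after which the purely homogeneous residuals $\tilde E_M$ satisfy $|\tilde E_M(N)|\le |\tilde E_{M'}(N/2)|+2|\tilde E_{M''}(N/4)|\le CN$ and the induction closes exactly. You essentially say this, but it is worth making the two-step structure explicit rather than calling it routine.
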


\begin{proof}
    We first calculate
    \begin{align*}
        T_{TWS2}(N) &= 5N + (3N + T_{TWS2}(N/4) + 2T_{TWS}(N/8)) + 2T_{TWS}(N/4) \\
        &= 8N + T_{TWS2}(N/4) + 2T_{TWS}(N/8) + 2T_{TWS}(N/4) \\
        &= \frac{32}{3}N + 2\sum_{k = 0}^{\log N - 2}T_{TWS}(2^k), \\
        T_{TWS}(N) &= 4N + \left[ \frac{32}{3} \frac{N}{2} + 2 \sum_{k = 0}^{\log N - 3} T_{TWS}(2^k) \right] + 2T_{TWS}(N / 4) \\
        &= \frac{28}{3} N + 2 \sum_{k = 0}^{\log N - 2} T_{TWS}(2^k). \\
    \end{align*}
    
    We claim first that there is a constant $c \geq \frac{56}{9}$ such that $T_{TWS}(N) \leq \frac{28}{9} N \log N + c  N$ and prove it by strong induction. The base case follows by picking a sufficiently large $c$. For the inductive step, we have
    \begin{align*}
        T_{TWS}(N) &= \frac{28}{3} N + 2 \sum_{k = 0}^{\log N - 2} T_{TWS}(2^k) \\
        &\leq \frac{28}{3} N + 2\sum_{k = 0}^{\log N - 2} \frac{28}{9}(k)(2^k) + 2 c \sum_{k = 0}^{\log N - 2} 2^k \\
    \end{align*}
    
    Here we make use of the following two facts (that can be easily verified, for example, using computer algebra software):
    
    $$\sum_{k = 0}^{\log N - 2} k2^k = \tfrac{1}{2} (N \log N - 3N + 4),$$
    $$\sum_{k = 0}^{\log N - 2} 2^k = \frac{N}{2} - 1.$$
    
    Using these, we see
    
    \begin{align*}
        T_{TWS}(N) &\leq \frac{28}{3} N + \frac{28}{9}(N \log N - 3N + 4) + cN - 2c \\
        &= \frac{28}{9}N \log N + cN - 2c + \frac{112}{9}\\
        &\leq \frac{28}{9}N \log N + cN,
    \end{align*}
    where the last step follows if we pick $c \geq \frac{56}{9}$. This concludes the proof that $T_{TWS}(N) \leq \frac{28}{9}N \log N + O(N)$.
    
    Solving now for $T_{TW}(N)$:
        
    \begin{align*}
        T_{TW}(N) &= 5N + T_{TW}(N / 2) + 2T_{TWS}(N / 4) \\
        &= 5N + T_{TW}(N / 2) + 2 \left[ \frac{28}{9} \frac{N}{4} \log \frac{N}{4} + O(N) \right] \\
        &= \frac{14}{9}N \log N + T_{TW}(N / 2) + O(N)\\
        &= \frac{28}{9} N \log N + O(N).
    \end{align*}
    
    Where the full exact form, $T_{TW}(N) = \frac{28}{9} N \log N - \frac{112}{27} N - 2 \log N - \frac{2}{27}(-1)^{\log N} + 8$, is proven in Appendix \ref{appendix:section:hufft_runtime}.
\end{proof}

In fact, we can compute the count more precisely:

\begin{lemma} \label{lemma:hufft_twiddle_runtime}
    $T_{TW}(N) = \frac{28}{9} N \log N - \frac{112}{27} N - 2 \log N - \frac{2}{27}(-1)^{\log N} + 8$
\end{lemma}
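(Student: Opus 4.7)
The plan is to solve the system of linear recurrences in two stages: first obtain a closed form for $T_{TWS}(N)$, then substitute into the recurrence for $T_{TW}(N)$. From the proof of Lemma~\ref{lem:TWapproxcount} we already have the identity $T_{TWS}(N) = \tfrac{28}{3}N + 2\sum_{k=0}^{\log N - 2}T_{TWS}(2^k)$. Setting $t(n) := T_{TWS}(2^n)$ and subtracting the analogous identity at $N/2$ converts this into the clean second-order linear recurrence $t(n) = t(n-1) + 2 t(n-2) + \tfrac{14}{3}\cdot 2^n$, with base cases $t(0)=0$ and $t(1)=4$. The characteristic polynomial $x^2 - x - 2 = (x-2)(x+1)$ has roots $2$ and $-1$, and since the forcing term resonates with root $2$, a particular solution has the form $\gamma n 2^n$, with $\gamma = \tfrac{28}{9}$ obtained by substituting back and matching coefficients. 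The two base-case values then pin down the homogeneous coefficients, producing an explicit closed form for $T_{TWS}(N)$.

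For the second stage, iterating $T_{TW}(N) = 5N + T_{TW}(N/2) + 2T_{TWS}(N/4)$ from $N$ down to $T_{TW}(1) = 0$ expresses $u(n) := T_{TW}(2^n)$ as the telescoping sum $u(n) = \sum_{i=1}^{n}\bigl[5 \cdot 2^i + 2 t(i-2)\bigr]$. Substituting the closed form for $t$ and evaluating via the standard identities $\sum_{i=1}^m 2^i = 2^{m+1}-2$, $\sum_{i=1}^m i\, 2^i = (m-1)2^{m+1}+2$, and $\sum_{i=1}^m (-1)^i = \tfrac{(-1)^m - 1}{2}$ yields a closed form. Collecting terms by monomial in $n=\log N$ produces the leading $\tfrac{28}{9}N\log N$ term together with explicit $N$, $(-1)^{\log N}$, and constant contributions; the resulting formula is then verified by induction on $\log N$.

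The hard part is producing the $-2\log N$ term in the claimed formula, which does not arise from a naive iteration of the recurrences as stated. This correction comes from a more careful accounting of the $k = 0$ iteration of the \textbf{for} loop in each of $TW, TWS, TWS2, TWS4$: at $k=0$, the twiddle factors $\omega_N^0 s_{N/4,0}$ and $t_N^0$ both evaluate to $1$ (so $r' = 0$), which makes the real multiplications defining $D, E, F, G$ free and saves a constant number of operations per recursive call. Summed over the $\Theta(\log N)$ levels of the recursion, these savings contribute the $\log N$ term; concretely, one replaces the per-call linear costs $5N, 4N, 5N, 6N$ by $5N - c_1, \ldots, 6N - c_4$ for explicit small constants $c_i$, reruns the two-stage calculation above, and reads off the final constants. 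I expect this constant-per-level bookkeeping across the four mutually recursive functions to be the main technical annoyance, while the characteristic-roots and summation calculations are routine.
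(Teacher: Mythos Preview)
Your approach is genuinely different from the paper's. The paper does \emph{not} solve the system of recurrences directly; instead it uses a one-line subtraction argument. The observation is that if one runs Algorithm~\ref{alg:HUFFT} but computes $H'$ with the folklore Fast Walsh--Hadamard Transform rather than Algorithm~\ref{fast:hadamard:8x8}, the resulting algorithm performs exactly the same operations as MSR (since all the manipulations in Section~\ref{section:msr_to_hufft} merely rearranged MSR's computations). Hence
\[
T_{TW}(N) \;=\; T_{F}(N) \;-\; T_{H'_{\mathrm{folklore}}}(N),
\]
where $T_{F}(N)=\tfrac{34}{9}N\log N-\tfrac{124}{27}N-2\log N+\tfrac{10}{27}(-1)^{\log N}+8$ is the known Johnson--Frigo count for MSR, and $T_{H'_{\mathrm{folklore}}}(N)=2\sum_j F(N,2^j)\,2^j j=\tfrac{2}{3}N\log N+\tfrac{4}{9}(-1)^{\log N}-\tfrac{4}{9}N$ by Lemma~\ref{hprime:complexity_vs_hadamard:lemma}. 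Subtracting gives the claimed formula immediately, with the $-2\log N$ inherited wholesale from the MSR count rather than rederived.

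Your direct attack via characteristic roots is perfectly sound for the leading and $O(N)$ terms, and you correctly diagnose that the $-2\log N$ cannot emerge from the recurrences $5N,4N,5N,6N$ as literally written: it comes from per-call constant savings at trivial twiddle indices. The cost of your route is that to match the stated constants exactly you must reproduce the precise Johnson--Frigo bookkeeping for those special indices (not only $k=0$ but also, e.g., $k=N/8$) across all four mutually recursive functions, and then track how those constants propagate through the coupled recursion; your ``replace $5N$ by $5N-c_1$, \ldots'' sketch is the right shape but leaves this the genuinely delicate part. The paper's subtraction trick sidesteps all of that by outsourcing it to the cited MSR formula, at the price of relying on that external exact count. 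If you want a self-contained derivation, your plan works but will be noticeably longer; if you are willing to quote the MSR formula, the paper's argument is a two-line shortcut.
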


\begin{proof}

    As we recall from the proof of Theorem \ref{theorem:hufft_correctness:hufft_runtime}, at each step of applying our ideas to manipulate the modified split-radix algorithm we only used numerical properties of the complex numbers and matrix multiplication to rearrange computations. In particular, if we use the folklore Fast Walsh-Hadamard Transform to compute $H'$, then Algorithm~\ref{alg:HUFFT} would have the same opeartion count as MSR. Indeed, our speedup comes from replacing the folklore Fast Walsh-Hadamard Transform with Algorithm 
    \ref{fast:hadamard:8x8}. This gives us the relationship
    
    $$T_{TW}(N) = T_{F}(N) - T_{H'_{folklore}}(N),$$
    where $T_F(N)$ is the operation count of the MSR algorithm (Algorithm~\ref{alg:msr}). We also define $H_{folklore}$ to be the folklore Fast Walsh-Hadamard Transform that uses $2N \log N$ real operations on a length $N$ complex input, and $H'_{folklore}$ to be Algorithm~\ref{hprime:alg}  if $H_{folklore}$ were used to compute the WHTs instead of Algorithm~\ref{fast:hadamard:8x8}.
    
    Using Lemma~\ref{hprime:complexity_vs_hadamard:lemma} and the fact that $T_{H_{folklore}} = 2N \log N$,
    \begin{align*}
        T_{H'_{naive}}(N) &= \sum_{j = 0}^{\log N} F(N, 2^j)T_{H_{naive}}(2^j) \\
        &= \sum_{j = 0}^{\log N} 2(2^j)i \\
        &= \frac{2}{3},N \lg N + \frac{4}{9}(-1)^{\log N} - \frac{4}{9}N
    \end{align*}
    which gives us
    \begin{align*}
        T_{TW}(N) &= T_{F}(N) - T_{H'_{naive}}(N) \\
        &= \frac{34}{9} N \log N - \frac{124}{27}N - 2 \log N + \frac{10}{27}(-1)^{\log N} + 8 - \left(\frac{2}{3}N \log N + \frac{4}{9}(-1)^{\log N} - \frac{4}{9}N\right) \\
        &= \frac{28}{9} N \log N - \frac{112}{27} N - 2 \log N - \frac{2}{27}(-1)^{\log N} + 8.
    \end{align*}
\end{proof}

\begin{corollary} \label{corollary:hufft_runtime_tot}
    $$T_{HUFFT}(N) \leq \frac{15}{4}N \log N - \frac{223}{108}N + o(N^{0.8})$$
\end{corollary}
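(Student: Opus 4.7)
The plan is to simply combine the two operation counts we have already established: by Theorem~\ref{theorem:hufft_correctness:hufft_runtime} the total cost of the algorithm satisfies $T_{WHUFFT}(N) = T_{H'}(N) + T_{TW}(N)$, so it suffices to add the bound on $T_{H'}(N)$ from Theorem~\ref{hprime:complexity:thm} to the exact formula for $T_{TW}(N)$ from Lemma~\ref{lemma:hufft_twiddle_runtime}. The two $N \log N$ coefficients should combine to $\tfrac{23}{36} + \tfrac{28}{9} = \tfrac{23}{36} + \tfrac{112}{36} = \tfrac{135}{36} = \tfrac{15}{4}$, which matches the claimed leading constant and is a good sanity check.

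For the linear term, I would place both contributions over the common denominator $108$: from $T_{H'}$ we have $\tfrac{25}{12}N = \tfrac{225}{108}N$, and from $T_{TW}$ we have $-\tfrac{112}{27}N = -\tfrac{448}{108}N$. Their sum is $-\tfrac{223}{108}N$, which is precisely the $N$-coefficient in the statement. All other terms in $T_{TW}(N)$ — namely $-2\log N$, $-\tfrac{2}{27}(-1)^{\log N}$, and $+8$ — are $O(\log N) = o(N^{0.8})$, and the error term $o(N^{0.8})$ already appearing in the bound on $T_{H'}(N)$ dominates them, so they can be absorbed.

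The step that needs the most care is making sure the $o(N^{0.8})$ error term in Theorem~\ref{hprime:complexity:thm} (which itself came from Lemma~\ref{lemma:sum-of-mod}) survives unchanged through the addition; since we are only adding terms of size $O(N)$ or smaller (in fact $O(\log N)$) from $T_{TW}(N)$ on top of the already-exposed $-\tfrac{223}{108}N$, this is automatic. I do not expect any real obstacle here — this is essentially bookkeeping that verifies the three separately-derived leading constants $\tfrac{23}{36}$, $\tfrac{28}{9}$, and the announced $\tfrac{15}{4}$ are consistent, and that the linear terms combine as stated. A one-line chain of equalities of the form
\[
T_{WHUFFT}(N) \;<\; \Bigl(\tfrac{23}{36}N\log N + \tfrac{25}{12}N + o(N^{0.8})\Bigr) + \Bigl(\tfrac{28}{9}N\log N - \tfrac{112}{27}N + O(\log N)\Bigr) \;=\; \tfrac{15}{4}N\log N - \tfrac{223}{108}N + o(N^{0.8})
\]
completes the proof.
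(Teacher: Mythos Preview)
Your proposal is correct and follows exactly the same approach as the paper's own proof: decompose $T_{WHUFFT}(N)$ as $T_{H'}(N)+T_{TW}(N)$, plug in the bound from Theorem~\ref{hprime:complexity:thm} and the exact formula from Lemma~\ref{lemma:hufft_twiddle_runtime}, and absorb the $O(\log N)$ leftovers into the $o(N^{0.8})$ term. Your explicit arithmetic for the $N\log N$ and $N$ coefficients matches the paper's computation line for line.
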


\begin{proof}
    We simply add the results from Lemma~\ref{lemma:hufft_twiddle_runtime} and Theorem~\ref{hprime:complexity:thm} to get:
    \begin{align*}
        T_{HUFFT}(N) &= T_{H'}(N) + T_{TW}(N) \\
        &\leq \frac{23}{36}N \log N + \frac{25}{12}N + o(N^{0.8}) + \frac{28}{9} N \log N - \frac{112}{27} N - 2 \log N - \frac{2}{27}(-1)^{\log N} + 8 \\
        &= \frac{15}{4}N \log N - \frac{223}{108}N + o(N^{0.8}) - 2 \log N - \frac{2}{27}(-1)^{\log N} + 8 \\
        &= \frac{15}{4}N \log N - \frac{223}{108}N + o(N^{0.8}).
    \end{align*}
\end{proof}

\section{Properties of $H'$} \label{appendix:section:properties_hprime}

\begin{theorem} \label{hprime:partition:correctness}
    For an input $x$, the transformation $y = H'_Nx$ is equivalent to partitioning $x$ in a particular way and applying the WHT to each partition.
\end{theorem}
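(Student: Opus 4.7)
The plan is to prove this by strong induction on $N$, which ranges over powers of $2$. The base cases $N=1$ and $N=2$ are immediate, since $H'_1$ and $H'_2$ are identity matrices: the corresponding partition is the partition of $\{0,\ldots,N-1\}$ into singletons, and applying the $1\times 1$ identity (a trivial WHT) to each singleton reproduces the identity transformation. This matches the base cases $[[0]]$ and $[[0],[1]]$ of Algorithm~\ref{partition:hprime:alg}.

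For the inductive step, I unfold the recursive definition of $H'_N$ into the block form
$$H'_N = \begin{bmatrix} H'_{N/2} & 0 \\ 0 & B_N \end{bmatrix}, \qquad B_N := \begin{bmatrix} H'_{N/4} & H'_{N/4} \\ H'_{N/4} & -H'_{N/4} \end{bmatrix}.$$
The upper-left block acts only on the first $N/2$ coordinates and, by the inductive hypothesis, corresponds to a partition $\mathcal{P}_1$ of $\{0,\ldots,N/2-1\}$ with a WHT applied to each subset; these subsets are added verbatim to the partition for $H'_N$, matching the first loop of Algorithm~\ref{partition:hprime:alg}. The lower-right block $B_N$ acts on coordinates $\{N/2,\ldots,N-1\}$, and I will analyze it via the factorization $B_N = H_2 \otimes H'_{N/4}$, where $H_2 = \begin{bmatrix}1 & 1\\ 1 & -1\end{bmatrix}$.

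Applying the inductive hypothesis to $H'_{N/4}$ yields a partition $\mathcal{P}_2$ of $\{0,\ldots,N/4-1\}$ such that, up to permuting coordinates, $H'_{N/4}$ equals $\bigoplus_{S\in\mathcal{P}_2} H_{|S|}$. Under the standard permutation that turns a Kronecker product of a direct sum into a direct sum of Kronecker products, $B_N = H_2 \otimes H'_{N/4}$ becomes
$$\bigoplus_{S \in \mathcal{P}_2} \bigl( H_2 \otimes H_{|S|} \bigr) \;=\; \bigoplus_{S \in \mathcal{P}_2} H_{2|S|},$$
where the last equality is the recursive definition of the WHT. To identify the coordinates pooled into each size-$2|S|$ block, note that $B_N$ naturally splits its input into a ``first copy'' on $\{N/2,\ldots,3N/4-1\}$ and a ``second copy'' on $\{3N/4,\ldots,N-1\}$; the Kronecker-product permutation pairs the entry at position $j$ in the first copy with the entry at position $j$ in the second copy, so each $S \in \mathcal{P}_2$ lifts to the subset $T_S = \{j + N/2 : j \in S\} \cup \{j + 3N/4 : j \in S\}$. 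Appending each $T_S$ to the partition coming from $\mathcal{P}_1$ exactly reproduces the behavior of the second loop of Algorithm~\ref{partition:hprime:alg}, which completes the induction.

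The main obstacle will be permutation bookkeeping: verifying that the particular permutation that turns $H_2 \otimes \bigl(\bigoplus_i H_{n_i}\bigr)$ into $\bigoplus_i (H_2 \otimes H_{n_i})$ really does interleave the two $N/4$-sized halves of the lower-right block in the way claimed (pairing index $j+N/2$ with index $j+3N/4$), and that the resulting per-block matrix is exactly $H_{2n_i}$ rather than some sign-flipped or row-reordered variant. This is routine but easy to mis-index, so the argument will hinge on stating the pairing explicitly and checking it against the recursive definition $H_{2n} = H_2 \otimes H_n$.
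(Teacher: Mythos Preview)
Your proposal is correct and takes essentially the same approach as the paper: induct on $N$, handle the upper-left $H'_{N/2}$ block by the inductive hypothesis, and for the lower-right block pair each subset $S$ from the $H'_{N/4}$ partition with its translate to form a size-$2|S|$ subset on which one applies $H_{2|S|}$, using the recursive identity $H_{2n}=H_2\otimes H_n$. The only cosmetic difference is that you phrase the combination step via the Kronecker-product factorization $B_N=H_2\otimes H'_{N/4}$ and the isomorphism $H_2\otimes\bigl(\bigoplus_i H_{n_i}\bigr)\cong\bigoplus_i H_{2n_i}$, whereas the paper writes out $H_{|s|}s\pm H_{|s|}s'=H_{2|s|}(s\circ s')$ explicitly; the permutation bookkeeping you flag as the main obstacle is exactly what the paper handles by that concatenation step.
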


\begin{proof}
    We proceed by induction. For our base cases, we have $H'_1 = \begin{bmatrix} 1 \end{bmatrix}$ and $H'_2 = \begin{bmatrix} 1 & \\ & 1 \end{bmatrix}$. $H'_1$ is simply the same as the $1 \times 1$ WHT matrix $H_1$, while $H'_2$ is just two copies of the $1 \times 1$ WHT matrix along the diagonal, and thus the trivial way to compute $H'_2$ partitions a length 2 vector into two length 1 vectors and performs a size 1 WHT on each.
    
    Now suppose this statement is true for both $H'_{N/2}$ and $H'_{N/4}$. Recall that $H'_N$ has the structure
    
    $$H'_N = \unevenmatrix{H'_{N/2}}{H'_{N/4}}{H'_{N/4}}{H'_{N/4}}{-H'_{N/4}}.$$
    
    From the recursive structure of $H'_N$ since the upper left quadrant is an exact copy of $H'_{N/2}$ and the upper right quadrant is all zero, to compute $y[j]_{j = 0}^{N/2 - 1}$ (the first half of the output $y = H'_Nx$) we can just compute $H'_{N/2}x[j]_{j = 0}^{N/2 - 1}$. Thus, by induction to partition $x$ for $H'_{N}$ we first partition $x[j]_{j = 0}^{N/2 - 1}$ as if for $H'_{N/2}$.
    
    For the latter half of the output, again by the recursive structure of $H'_{N}$ we have
    \begin{align*}
        y[j]_{j = N/2}^{3N/4 - 1} = H'_{N/4}x[j]_{j = N/2}^{3N/4 - 1} + H'_{N/4}x[j]_{j = 3N/4}^{N - 1} \\
        y[j]_{j = 3N/4}^{N - 1} = H'_{N/4}x[j]_{j = N/2}^{3N/4 - 1} - H'_{N/4}x[j]_{j = 3N/4}^{N - 1} \\
    \end{align*}

    Since by induction we already assumed there is a way to partition each of $x[j]_{j = N/2}^{3N/4 - 1}$ and $x[j]_{j = 3N/4}^{N - 1}$ to compute $H'_{N/4}x[j]_{j = N/2}^{3N/4 - 1}$ and $H'_{N/4}x[j]_{j = 3N/4}^{N - 1}$ as WHTs on subsets of the inputs, let $s \subseteq x[j]_{j = N/2}^{3N/4 - 1}$ be any subset in this partition, and pick $s' \subseteq x[j]_{j = 3N/4}^{N - 1}$ to be the corresponding, identical subset of $x[j]_{j = 3N/4}^{N - 1}$ so that $x_j \in s$ if and only if $x_{j + N/4} \in s'$.
    
    In computing $y[j]_{j = N/2}^{3N/4 - 1}$ and $y[j]_{j = 3N/4}^{N - 1}$ we compute $H_{|s|}s + H_{|s|}s'$ and $H_{|s|}s - H_{|s|}s'$ \footnote{Here we use the fact that $|s| = |s'|$.} (WHTs on subsets of the input) which by the recursive structure of the WHT is equivalent to computing $H_{2|s|}(s \circ s')$, where $s \circ s'$ denotes the concatenation of $s$ and $s'$. Thus, to compute $y[j]_{j = N/2}^{3N/4 - 1}$ and $y[j]_{j = 3N/4}^{N - 1}$, which together make $y[j]_{j = N/2}^{N - 1}$, we can partition $x[j]_{j = N/2}^{N - 1}$ by taking every subset $s \subseteq x[j]_{j = N/2}^{3N/4 - 1}$ and corresponding $s' \subseteq x[j]_{j = 3N/4}^{N - 1}$, combine them into one subset $s \cup s'$, and perform the WHT $H_{2|s|}$ on it. Since $s$ and $s'$ came from partitions of $x[j]_{j = N/2}^{3N/4 - 1}$ and $x[j]_{j = 3N/4}^{N - 1}$, the result is a partition of $x[j]_{j = N/2}^{N - 1}$.
    
    Now that we have partitions of both $x[j]_{j = 0}^{N/2 - 1}$ and $x[j]_{j = N/2}^{N - 1}$ which when acted on by WHTs on each subset compute $y[j]_{j = 0}^{N/2 - 1}$ and $y[j]_{j = N/2}^{N - 1}$, together they give a partition of $x$ which when acted on by WHTs on each subset computes $y$.
\end{proof}

\begin{corollary}
    Furthermore, if the number of partitions of size $N_2$ for a length $N_1$ input is $F(N_1, N_2)$ then we have the recurrence $F(N_1, N_2) = F(N_1 / 2, N_2) + F(N_1 / 4, N_2 / 2)$.
\end{corollary}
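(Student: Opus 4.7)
The plan is to simply read off the recurrence directly from the construction used in the inductive step of Theorem~\ref{hprime:partition:correctness}. That proof described how the partition of the indices of a length-$N_1$ input for $H'_{N_1}$ is built out of the partitions for $H'_{N_1/2}$ (acting on the first half of the input) and $H'_{N_1/4}$ (acting on each quarter of the second half). So I only need to track, for each possible subset size $N_2$, where subsets of that size come from.

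First I would fix $N_1, N_2$ (both powers of $2$) and consider the two sources of subsets in the partition of $[N_1]$ produced by the construction. The first source is the partition of the first half $x[j]_{j=0}^{N_1/2-1}$ used to compute $H'_{N_1/2}$; by definition, this contributes exactly $F(N_1/2, N_2)$ subsets of size $N_2$. The second source is the construction on the second half $x[j]_{j=N_1/2}^{N_1-1}$: each subset $s$ in the partition of $x[j]_{j=N_1/2}^{3N_1/4-1}$ for $H'_{N_1/4}$ is paired with its corresponding twin $s'$ in $x[j]_{j=3N_1/4}^{N_1-1}$, and the two are concatenated into a single subset of size $2|s|$. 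Hence the subsets of size $N_2$ arising this way are in bijection with the subsets of size $N_2/2$ in the partition for $H'_{N_1/4}$, contributing $F(N_1/4, N_2/2)$ of them.

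Since the two sources partition the indices $[N_1]$ into disjoint pieces (the first covering $[0, N_1/2)$ and the second covering $[N_1/2, N_1)$), no subset is counted twice, and every subset of size $N_2$ in the full partition belongs to exactly one of the two sources. Summing yields
\[
F(N_1, N_2) \;=\; F(N_1/2, N_2) \;+\; F(N_1/4, N_2/2),
\]
as claimed. The main (minor) obstacle is just being explicit that the construction in Theorem~\ref{hprime:partition:correctness} is the one defining $F$; once that is acknowledged, no real calculation is needed, and the degenerate boundary cases ($N_2 > N_1$, $N_2 < 1$) are handled by the conventions $F(N_1, N_2) = 0$ there, which are consistent with the recurrence since the corresponding sources produce no subsets of the forbidden size.
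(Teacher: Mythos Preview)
Your proposal is correct and follows essentially the same approach as the paper: both arguments read the recurrence directly off the recursive construction in Theorem~\ref{hprime:partition:correctness}, noting that the first half contributes $F(N_1/2,N_2)$ size-$N_2$ subsets unchanged, while the second half doubles each size-$N_2/2$ subset from the $H'_{N_1/4}$ partition, contributing $F(N_1/4,N_2/2)$ more. Your version is slightly more explicit about disjointness and the degenerate boundary cases, but the substance is identical.
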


\begin{proof}
    For the base cases, we can see $F(1, 1) = 1$, $F(2, 1) = 2$, $F(4, 1) = 2$ and $F(4, 2) = 1$.
    
    For any $N_1$ and $N_2$, we now compute $F(N_1, N_2)$. By the way our partition of an input of length $N_1$ is constructed, we exactly copy all of the size $N_2$ partitions of a length $N_1 /2$ input, giving $F(N_1 /2, N_2)$ copies. Meanwhile, we exactly copy and double the size of all partitions of a length $N_1/4$ input, so all size $N_2 / 2$ partitions of a length $N_1/4$ input, of which there are $F(N_1 / 4, N_2 / 2)$, turn into size $N_2 / 2$ partitions of a length $N_1$ input. Thus, $F(N_1, N_2) = F(N_1 / 2, N_2) + F(N_1 / 4, N_2 / 2)$.
\end{proof}

\section{Proof of Lemma \ref{lemma:sum-of-mod}} \label{appendix:proof:sum-of-mod}

\begin{proof}
    Define
    $$a_v(N) = \sum_{j = 0}^{\log N} F(N, 2^j) 2^j (1 \texttt{ if } j \equiv v \Mod 3 \texttt{, } 0 \texttt{ otherwise})$$
    
    Thus, 
    $$\sum_{j = 0}^{\log N}F(N, 2^j) \frac{2^j}{12}(j \Mod 3) = \frac{0a_0 + 1a_1 + 2a_2}{12},$$
    and we have the recurrence
    \begin{align*}
        a_0(N) &= \sum_{j = 0}^{{\log N}} F(N, 2^j) 2^j (1 \texttt{ if } j \equiv 0 \Mod 3 \texttt{, } 0 \texttt{ otherwise}) \\
        &= \sum_{j = 0}^{\log N} \left[ F(N/2, 2^j) + F(N/4, 2^{j - 1}) \right] 2^j (1 \texttt{ if } j \equiv 0 \Mod 3 \texttt{, } 0 \texttt{ otherwise}) \\
        &= \sum_{j = 0}^{\log N} F(N/2, 2^j) 2^j (1 \texttt{ if } j \equiv 0 \Mod 3 \texttt{, } 0 \texttt{ otherwise}) \\
        &+ \sum_{j = 0}^{\log N} F(N/4, 2^{j - 1}) 2^j (1 \texttt{ if } j \equiv 0 \Mod 3 \texttt{, } 0 \texttt{ otherwise}) \\
        &= a_0(N/2) + 2a_2(N/4),
    \end{align*}
    and likewise,
    \begin{align*}
        a_1(N) = a_1(N/2) + 2a_0(N/4), \\
        a_2(N) = a_2(N/2) + 2a_1(N/4).
    \end{align*}
    
    To solve this recurrence, we observe that
    
    $$\begin{bmatrix}
    a_0(N) \\ a_1(N) \\ a_2(N) \\ a_0(N/2) \\ a_1(N/2) \\ a_2(N/2) \end{bmatrix} = \begin{bmatrix} 1 & 0 & 0 & 0 & 0 & 2 \\ 0 & 1 & 0 & 2 & 0 & 0 \\ 0 & 0 & 1 & 0 & 2 & 0 \\ 1 & 0 & 0 & 0 & 0 & 0 \\ 0 & 1 & 0 & 0 & 0 & 0 \\ 0 & 0 & 1 & 0 & 0 & 0 \end{bmatrix} \begin{bmatrix}
    a_0(N/2) \\ a_1(N/2) \\ a_2(N/2) \\ a_0(N/4) \\ a_1(N/4) \\ a_2(N/4) \end{bmatrix}$$
    
    Call this $6 \times 6$ matrix $M$. It follows that for any $N$ which is a power of 2,
    
    $$\begin{bmatrix}
    a_0(N) \\ a_1(N) \\ a_2(N) \\ a_0(N/2) \\ a_1(N/2) \\ a_2(N/2) \end{bmatrix} = M^{\log N - 1} \begin{bmatrix}
    a_0(2) \\ a_1(2) \\ a_2(2) \\ a_0(1) \\ a_1(1) \\ a_2(1) \end{bmatrix} = M^{\log N - 1} \begin{bmatrix}
    2 \\ 0 \\ 0 \\ 1 \\ 0 \\ 0 \end{bmatrix}$$
    
    Furthermore, $M$ can be diagonalized into $M = SJS^{-1}$ for a diagonal matrix $J$, giving us
    
    $$\begin{bmatrix}
    a_0(N) \\ a_1(N) \\ a_2(N) \\ a_0(N/2) \\ a_1(N/2) \\ a_2(N/2) \end{bmatrix} = SJ^{\log N - 1}S^{-1} \begin{bmatrix}
    2 \\ 0 \\ 0 \\ 1 \\ 0 \\ 0 \end{bmatrix}.$$
    
    We now describe the matrices $J$ and $S$. 
    Define the complex numbers $a = \sqrt{-3 + 4i \sqrt{3}}$ and $b = \sqrt{-3 - 4i\sqrt{3}}$. The diagonal matrix $J$ is
    \begin{align*}
        J[0,0] &= -1, \\
        J[1,1] &= 2, \\
        J[2,2] &= \frac{1}{2} \left( 1 - a \right) \approx -0.2541 - 1.148i,\\
        J[3,3] &= \frac{1}{2} \left( 1 - b \right) \approx -0.2541 + 1.148i,\\
        J[4,4] &= \frac{1}{2} \left( 1 + b \right) \approx 1.2541 - 1.148i,\\
        J[5,5] &= \frac{1}{2} \left( 1 + a \right) \approx 1.2541 + 1.148i. \\
    \end{align*}
    
    The four complex entries all have magnitude smaller than $\sqrt{3}$, so
    \begin{align*}
        J^{\log N - 1}[0,0] &= \pm 1, \\
        J^{\log N - 1}[1,1] &= N/2, \\
        \left|J^{\log N - 1}[2,2]\right| &< \sqrt{3}^{\log N - 1} = O(N^{\log_2(3)/2}) = o(N^{0.8}),\\
        \left|J^{\log N - 1}[3,3]\right| &< \sqrt{3}^{\log N - 1} = O(N^{\log_2(3)/2}) = o(N^{0.8}),\\
        \left|J^{\log N - 1}[4,4]\right| &< \sqrt{3}^{\log N - 1} = O(N^{\log_2(3)/2}) = o(N^{0.8}),\\
        \left|J^{\log N - 1}[5,5]\right| &< \sqrt{3}^{\log N - 1} = O(N^{\log_2(3)/2}) = o(N^{0.8}).\\
    \end{align*}
    
    We next give the matrix $S$. Since it is quite large, we first write the first four columns:
    
    $$\begin{bmatrix}
    -1 & -2 & \tfrac{1}{2}(-1 + a + \tfrac{1}{4} (1 - a)^2 - \tfrac{1}{8}(1 - a)^3) & \tfrac{1}{2}(-1 + b + \tfrac{1}{4}(1 - b)^2 - \tfrac{1}{8}(1 - b)^3) \\
    -1 & -2 & \tfrac{1}{2}(-\tfrac{1}{4}(1 - a)^2 + \tfrac{1}{8}(1 - a)^3) & \tfrac{1}{2}(-\tfrac{1}{4}(1 - b)^2 + \tfrac{1}{8}(1 - b)^3) \\
    -1 & -2 & \tfrac{1}{2}(1 - a) & \tfrac{1}{2}(1 - b) \\
    1 & -1 & \tfrac{1}{2}(-2 + \tfrac{1}{2}(1 - a)  - \tfrac{1}{4}(1 - a)^2) & \tfrac{1}{2}(-2 + \tfrac{1}{2}(1 - b) - \tfrac{1}{4}(1 - b)^2) \\
    1 & -1 & \tfrac{1}{2}(\tfrac{1}{4}(1 - a)^2 + \tfrac{1}{2}(-1 + a)) & \tfrac{1}{2}(\tfrac{1}{4}(1 - b)^2 + \tfrac{1}{2}(-1 + b)) \\
    1 & -1 & 1 & 1
    \end{bmatrix}$$
    
    and then the last two columns:
    
    $$\begin{bmatrix} 
    \tfrac{1}{2}(-1 - b + \tfrac{1}{4}(1 + b)^2 - \tfrac{1}{8}(1 + b)^3) & \tfrac{1}{2}(-1 - a + \tfrac{1}{4}(1 + a)^2 - \tfrac{1}{8}(1 + a)^3) \\
    \tfrac{1}{2}(-\tfrac{1}{4}(1 + b)^2 + \tfrac{1}{8}(1 + b)^3) & \tfrac{1}{2}(-\tfrac{1}{4}(1 + a)^2 + \tfrac{1}{8}(1 + a)^3) \\
    \tfrac{1}{2}(1 + b) & \tfrac{1}{2}(1 + a) \\
    \tfrac{1}{2}(-2 + \tfrac{1}{2}(1 + b) - \tfrac{1}{4}(1 + b)^2) & \tfrac{1}{2}(-2 + \tfrac{1}{2}(1 + a) - \tfrac{1}{4}(1 + a)^2) \\
    \tfrac{1}{2}(\tfrac{1}{2}(-1 - b) + \tfrac{1}{4}(1 + b)^2) & \tfrac{1}{2}(\tfrac{1}{2}(-1 - a) + \tfrac{1}{4}(1 + a)^2) \\
    1 & 1
    \end{bmatrix}.$$
    
    To write out the full exact form of $S^{-1}$ would require at least several pages per entry of the matrix, and would not be particularly enlightening, so instead we write it with decimals rounded to three decimal places:
    
    $$\begin{bmatrix}
    -1/9 & -1/9 & -1/9 & 2/9 & 2/9 & 2/9 \\
    -1/9 & -1/9 & -1/9 & -1/9 & -1/9 & -1/9 \\
    -0.055 - 0.108i & 0.121 + 0.007i & -0.067 + 0.101i & -0.056 + 0.199i & -0.144 - 0.148i & 0.200 - 0.051i \\
    -0.055 + 0.108i & 0.121 - 0.007i & -0.067 - 0.101i & -0.056 - 0.199i & -0.144 + 0.148i & 0.200 + 0.051i \\
    0.055 - 0.108i & -0.121 + 0.007i & 0.067 + 0.101i & -0.111 - 0.090i & -0.023 + 0.141i & 0.133 - 0.051i \\
    0.0545 + 0.108i & -0.121 - 0.007i & 0.067 - 0.101i & -0.111 + 0.090i & -0.023 - 0.141i & 0.133 + 0.051i
    \end{bmatrix}.$$
    
    Since all but $J^{\log N - 1}[1,1]$ are $o(N^{0.8})$ they contribute $o(N^{0.8})$ to the overall operation count. Thus, we only care about the $1$-index row of $S^{-1}$ and the $1$-indexed column of $S$ (where the 1 indexed row and column are the second row and column counting from the top and left, respectively).
    
    The $1$-indexed row of $S^{-1}$ is $[-\tfrac{1}{9}, -\tfrac{1}{9}, -\tfrac{1}{9}, -\tfrac{1}{9}, -\tfrac{1}{9}, -\tfrac{1}{9}]$ and the $1$-indexed column of $S$ is $\begin{bmatrix} -2 \\ -2 \\ -2 \\ -1 \\ -1 \\ -1 \end{bmatrix}$. Multiplying it all out, we get $a_v(N) \leq N/3 + o(N^{0.8})$.
    
    Returning to the original statement of the lemma,
    \begin{align*}
        \sum_{j = 0}^{\log N}F(N, 2^j) \frac{2^j}{12}(j \Mod 3) &= \frac{0a_0 + 1a_1 + 2a_2}{12} \\
        &< N / 12 + o(N^{0.8})\\
        &< N/12 + o(N^{0.8}),
    \end{align*}
    as desired.
    
\end{proof}

\end{document}